\newcommand{\red}{\rightarrow}
\newcommand{\amred}{\Rightarrow}
\newcommand{\stred}{\rightharpoonup}
\newcommand{\converges}{\downarrow}
\newcommand{\diverges}{\uparrow}
\newcommand{\deadlocks}{\bot}
\newcommand{\nconverges}{\centernot\downarrow}
\newcommand{\ndiverges}{\centernot\uparrow}
\newcommand{\ndeadlocks}{\centernot\bot}
\newcommand{\tuple}[1]{\langle #1 \rangle}
\newcommand{\bang}{\;!}
\newcommand{\circt}{\operatorname{\mathsf{Circ}}}
\newcommand{\functorfam}{\mathbf{M}_\labels}
\newcommand{\deadlocked}{\mathcal{D}}
\newcommand{\wiretypes}{\mathcal{W}}
\newcommand{\labels}{\mathcal{L}}
\newcommand{\initialmachine}{\mathcal{I}_{ma}}
\newcommand{\initialstacked}{\mathcal{I}_{st}}
\newcommand{\lift }{\operatorname{\mathsf{lift}}}
\newcommand{\force}{\operatorname{\mathsf{force}}}
\newcommand{\boxt}{\operatorname{\mathsf{box}}}
\newcommand{\apply}{\operatorname{\mathsf{apply}}}
\newcommand{\letin}[3]{\operatorname{\mathsf{let}}\, #1 = #2 \operatorname{\mathsf{in}} #3}
\newcommand{\lookup}{\operatorname{\mathsf{lookup}}}
\newcommand{\freshlabels}{\operatorname{\mathsf{freshlabels}}}
\newcommand{\append}{\operatorname{\mathsf{append}}}
\newcommand{\build}{\operatorname{\mathsf{fromMachine}}}
\newcommand{\frommachine}{\build} 
\newcommand{\fromsmallstep}{\operatorname{\mathsf{fromSmallStep}}}
\newcommand{\loadmachine}{\operatorname{\mathsf{load}}}
\newcommand{\clen}{\operatorname{\mathsf{clen}}}
\newcommand{\obs}{\operatorname{ob}}
\newcommand{\freelabels}{\mathit{FL}}
\newcommand{\freevariables}{\mathit{FV}}
\newcommand{\contfarg}{\mathit{FArg}}
\newcommand{\contfapp}{\mathit{FApp}}
\newcommand{\contalabel}{\mathit{ALabel}}
\newcommand{\contacirc}{\mathit{ACirc}}
\newcommand{\conttright}{\mathit{TRight}}
\newcommand{\conttleft}{\mathit{TLeft}}
\newcommand{\contbox}{\mathit{Box}}
\newcommand{\contsub}{\mathit{Sub}}
\newcommand{\contforce}{\mathit{Force}}
\newcommand{\contlet}{\mathit{Let}}
\newtheorem{thm}{Theorem}[section]
\newtheorem{lem}[thm]{Lemma}
\newtheorem{prop}[thm]{Proposition}
\newtheorem{cor}{Corollary}[thm]
\newtheorem{df}{Definition}[section]
\newcommand{\cat}[1]{\mathbf{#1}}
\newcommand{\void}{\phantom{-}}
\newcommand{\serr}{\textnormal{Error}}
\newcommand{\sother}{\textnormal{Otherwise}}
\title{On Abstract Machine Semantics for Proto-Quipper-M
\footnote{
    The contents of this paper are taken and adapted from the author's Master's degree thesis.
}}
\author{Andrea Colledan}
\date{\footnotesize University of Bologna\\Bologna, Italy}
\begin{document}

\maketitle
    
\begin{abstract}
    Quipper is a domain-specific programming language for the description of quantum circuits. Because it is implemented as an embedded language in Haskell, Quipper is a very practical functional language. However, for the same reason, it lacks a formal semantics and it is limited by Haskell's type system. In particular, because Haskell lacks linear types, it is easy to write Quipper programs that violate the non-cloning property of quantum states. In order to formalize relevant fragments of Quipper in a type-safe way, the Proto-Quipper family of research languages has been introduced over the last years. In this paper we first review Proto-Quipper-M, an instance of the Proto-Quipper family based on a categorical model for quantum circuits, which features a linear type system that guarantees that the non-cloning property holds at compile time. We then derive a tentative small-step operational semantics from the big-step semantics of Proto-Quipper-M and we prove that the two are equivalent. After proving subject reduction and progress results for the tentative semantics, we build upon it to obtain a truly small-step semantics in the style of an abstract machine, which we eventually prove to be equivalent to the original semantics.
\end{abstract}
    
\section{Introduction}
As progress is made in the physical realization of new and more powerful quantum computers, the need for a quantum programming language that goes beyond a mere instruction set for quantum hardware and instead offers high-level features similar to those we are already used to in classical programming becomes more and more apparent. Today, it would be ridiculous if a programmer were to try and code a web application by defining every part of it in terms of logic gates. Similarly, it is unreasonable to expect that a quantum programmer in the future will describe all of its quantum algorithms in terms of elementary unitary transformations. At the time of writing, one of the most promising candidates for a quantum programming language suitable to real-world applications is undoubtedly Quipper \cite{quipper-informal,quipper}.

\paragraph{} Quipper is a functional programming language for the description of quantum circuits. What sets Quipper apart from the majority of the remaining quantum programming languages is that it is designed with the explicit objective of being practical, scalable and ultimately useful. To this effect, Quipper is implemented as an embedded language in Haskell, so that all of the advanced programming constructs that are available in Haskell are also available when programming in Quipper. The result is a powerful quantum programming language that does not limit the programmer to a gate-by-gate description of quantum computations, but rather treats circuits themselves as data and supports many higher-order operations to combine them together and manipulate them in their entirety. This allows for the implementation of many real-world quantum algorithms that would be practically inexpressible in other programming languages, if anything due to the sheer size of their circuits.

\paragraph{} Unfortunately, while Quipper inherits all of the qualities of Haskell, it also inherits its shortcomings. Namely, Quipper lacks linear types, which are critical to quantum programming, and more generally a formal operational semantics. As a consequence, it is difficult to reason rigorously about the behavior of Quipper programs, a fact that constitutes an obstacle to the application of otherwise valuable static analysis techniques to them. For example, the ability to statically infer bounds on the number of qubits required at run time by a Quipper program would be immensely useful in a time where quantum resources -- although increasingly available -- are still scarce. Fortunately, a number of research languages exist that formalize significant fragments of Quipper in a type-safe way. In this paper, we examine one such language, namely Rios and Selinger's Proto-Quipper-M \cite{proto-quipper-m}, and use its big-step semantics as a starting point to define a new operational semantics for Quipper which is inspired by abstract machines. We then prove that this new semantics is equivalent to the original one. Our hope is that our work will in turn serve as a valuable starting point for future research in the formalization of more advanced Quipper constructs and in the static analysis of Quipper programs.

\subsection{Contents of the Paper}

\paragraph{} In Section 2 we review Proto-Quipper-M, a type-safe formalization of a relevant fragment of Quipper. We start by introducing the categorical model upon which circuit construction in Proto-Quipper-M is built. Then we proceed to present the language itself, with a particular focus on its linear type system, which can prevent at compile time a number of mistakes that would result in a run time exception in Quipper, namely those related to the violation of the no-cloning property of quantum states. To conclude the section, we cover Proto-Quipper-M's big-step semantics and make a first attempt to define an equivalent small-step semantics. We give safety results for the resulting semantics and we assess its limitations, specifically as far as the circuit boxing operation is concerned.

\paragraph{} In Section 3 we present two incremental upgrades to the small-step semantics defined in the previous section. First, we propose a \textit{stacked semantics}, which overcomes the shortcomings of the previous semantics by introducing an explicit stack into the small-step semantics, to keep track of nested boxing operations. Next, we take the stack approach even further and formulate a proposal for a \textit{machine semantics} for Proto-Quipper-M. This semantics is heavily inspired by abstract machines, and particularly by the CEK machine \cite{cek}, as it models every phase of the evaluation of a Proto-Quipper-M program as a continuation on a stack.

\paragraph{}Finally, in the more technical Section 4 we analyze the three semantics in their relationship with one another, eventually proving that the proposed machine semantics is effectively equivalent to the starting small-step semantics and -- as a consequence -- to the original Proto-Quipper-M semantics given by Rios and Selinger.

\section{Proto-Quipper-M: a Formalization of Quipper}
\label{proto-quipper-m}

Quipper \cite{quipper-informal,quipper} is a functional programming language for the description of quantum circuits. Unlike other quantum programming languages, Quipper is designed with the goal of being first and foremost practical and scalable, allowing programmers to leverage the power of higher-order operators to describe quantum algorithms requiring order of trillions of gates. In order to provide this kind of power, Quipper is currently implemented as an embedded programming language in Haskell (that is, as a library and preferred idiom for Haskell), which means that it benefits from all of Haskell's features, including some advanced and experimental GHC extensions. The embedding route, however practical, comes at a price. As we mentioned in the introduction, Quipper lacks a formal semantics, which means that it is hard to reason formally about the behavior of Quipper programs. More importantly, Quipper lacks linear types, which are essential in quantum programming in that they prevent violations of the \textit{no-cloning theorem}, an ubiquitous result in quantum physics which asserts that it is impossible to duplicate an arbitrary unknown quantum state. In quantum computing, this constraint is also referred to as the \textit{no-cloning property} of quantum states and it entails that no quantum gate can create a copy of a qubit. Because Quipper cannot enforce this property at compile time through linear types, it follows that it is not a type-safe language.

\paragraph{}In order to still be able to study Quipper in a formal way, the \textit{Proto-Quipper} family of research languages has been introduced over the last years. Each language of this family formalizes a relevant fragment of Quipper in a type-safe way. The most prominent Proto-Quipper instances are Proto-Quipper-S \cite{proto-quipper-s}, Proto-Quipper-M \cite{proto-quipper-m} and Proto-Quipper-D \cite{proto-quipper-d}. In particular, Proto-Quipper-M is a lambda-calculus built upon a categorical model for circuit building, and features a full-fledged linear type system. In general, a linear type system guarantees that certain variables -- more technically, \textit{linear resources} -- are consumed exactly once. In the case of Proto-Quipper-M, the linear resources are the free wire ends in the circuit being built as a side-effect of the evaluation of a program. This makes it so that, unlike Quipper, Proto-Quipper-M \textit{can} enforce the no-cloning property of quantum states at compile time, guaranteeing type-safety. In this section we give an overview of Proto-Quipper-M, starting from its categorical model, which actually generalizes the notion of quantum circuit, and reviewing its syntax, type system and semantics.

\paragraph{}Note that throughout this section we assume that the reader is already somewhat familiar with quantum computing and specifically with the quantum circuit model for the description of quantum computations. The reader who is unfamiliar with quantum circuits can find a minimal introduction to the topic in the author's bachelor thesis \cite{bachthesis}. Alternatively, for a more thorough introduction to quantum mechanics and their application to computer science, we refer the reader to textbooks such as the ones by Yanofsky and Mannucci \cite{yanman} and Nielsen and Chuang \cite{nielchuang}.

\subsection{Generalizing Quantum Circuits}

In Proto-Quipper-M, a quantum circuit is modeled as a morphism in a symmetric monoidal category. To understand what this means exactly, we first need to know what a symmetric monoidal category is. Note that in the following pages we assume that the reader is already familiar with some basic concepts of category theory, such as morphisms, their composition, isomorphisms, functors and  bifunctors. For a proper introduction to category theory, refer to the excellent works of Riehl \cite{intro-cat-riehl} and Asperti and Longo \cite{intro-cat-asp}.

\subsubsection{Generalized Circuits}

\begin{df}[Monoidal Category]
A category $\cat C$  is said to be \emph{monoidal} if it is equipped with:
\begin{itemize}
    \item A bifunctor $\otimes: \cat C \times \cat C \to \cat C$, called \emph{tensor product},
    \item An object $I$, called \emph{identity object},
    \item Three natural isomorphism which guarantee that
    \begin{itemize}
        \item $\otimes$ is associative: for all $A,B,C$ in $\cat C$ there exists an isomorphism $\alpha_{A,B,C}:A \otimes (B \otimes C) \cong (A \otimes B) \otimes C$, called \emph{associator}, which is natural in $A,B$ and $C$,
        \item $I$ is a left identity for $\otimes$: for all $A$ in $\cat C$ there exists a natural isomorphism $\lambda_A : I \otimes A \cong A$, called \emph{left unitor},
        \item $I$ is a right identity for $\otimes$: for all $A$ in $\cat C$ there exists a natural isomorphism $\rho_A : A \otimes I \cong A$, called \emph{right unitor},
    \end{itemize}
    and such that the following diagrams commute:
    \begin{itemize}
        \item For all $A,B,C,D$ in $\cat C$:
        \begin{center}
        \begin{tikzcd}[column sep=huge]
            A \otimes (B \otimes (C \otimes D))
            \arrow[r,"id_A\otimes \alpha_{B,C,D}"]
            \arrow[d,"\alpha_{A,B,C\otimes D}"]
            & A \otimes ((B \otimes C) \otimes D)
            \arrow[dd,"\alpha_{A,B\otimes C, D}"]\\
            (A \otimes B) \otimes (C \otimes D) \arrow[d,"\alpha_{A\otimes B, C,D}"]\\
            ((A \otimes B) \otimes C ) \otimes D
            & (A \otimes (B \otimes C)) \otimes D
            \arrow[l,"\alpha_{A,B,C}\otimes id_D"]
        \end{tikzcd}
        
        \end{center} 
        \item For all $A,B$ in $\cat C$:
        \begin{center}
            \begin{tikzcd}
            A \otimes (I \otimes B) \arrow[rr,"\alpha_{A,I,B}"] \arrow[dr,"id_A\otimes \lambda_B"]
            && (A \otimes I) \otimes B \arrow[ld," \rho_A \otimes id_B"]
            \\
            &A \otimes B
            \end{tikzcd}
        \end{center}
    \end{itemize}
\end{itemize}
\end{df}

\paragraph{} We can start to see how quantum circuits can be reasoned about in terms of monoidal categories. Suppose the objects of a given monoidal category $\cat M$ represent collections of wires. If $A,B\in\obs(\cat M)$ are two such collections, then a circuit $C$ that takes as input the wires in $A$ and outputs the wires in $B$ can be clearly modelled by a morphism $C:A\to B$ in $\cat M$. Furthermore, if the wires output by a circuit $C$ coincide with the wires taken as input by a circuit $D$, the two circuits can be composed in series, as shown graphically:

\begin{figure}[H]
    \centering
    \includegraphics[scale=.9]{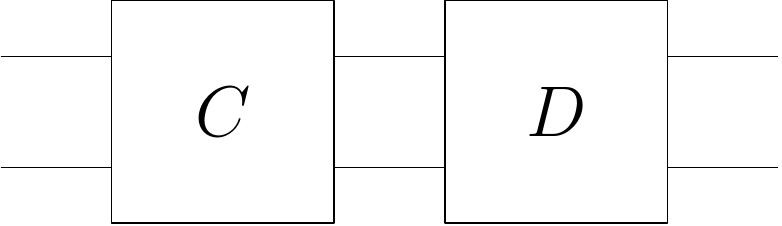}
\end{figure}

\noindent This is naturally modelled by the associative composition function $\circ$, since circuit composition is also associative. Also, for every collection of wires $A$, the circuit that does nothing to the wires in $A$ and returns them unaltered is a perfectly valid circuit. It is modelled by the identity morphism $id_A$ and appending or prepending it to any other circuit has no effect whatsoever on that circuit:

\begin{figure}[H]
    \centering
    \includegraphics[scale=.9]{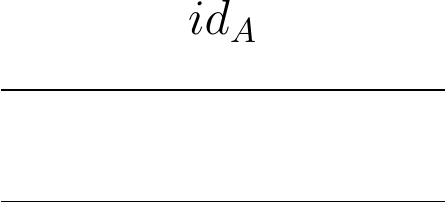}
\end{figure}

So far, these aspects of circuit building can be modelled by any category. Now we delve into the specifics of \textit{monoidal} categories. Whenever we have two circuits $C$ and $D$ we can compose them in parallel, regardless of their input or output wires, as shown graphically:

\begin{figure}[H]
    \centering
    \includegraphics[scale=.9]{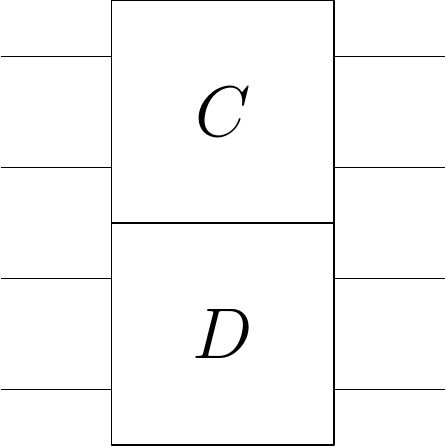}
\end{figure}

\noindent This is naturally modelled by the tensor product $\otimes$ of the monoidal category $\cat M$. Note that, being a functor, $\otimes$ can be applied to both morphisms (circuits) and objects (collections of wires), which means that whenever we have two collections of wires $A$ and $B$, we can put them together into a single collection $A\otimes B$. To this effect, the identity object $I$ represents the empty collection of wires. There is one last property that we would like to reflect in our categorical model, and it is that the wires of a quantum circuit can be rearranged freely (together with all the gates that act on them, naturally), without altering the fundamental nature of the circuit itself. A monoidal category is not enough to model this property, which is why we turn to \textit{symmetric monoidal categories}.

\begin{df}[Symmetric Monoidal Category]
A monoidal category $\cat C$ is said to be \emph{symmetric} when it is equipped, for all $A,B$ in $\cat C$, with an isomorphism
$$\gamma_{A,B}:A\otimes B \cong B \otimes A,$$
which is natural in both $A$ and $B$ and such that the following diagrams commute
\begin{itemize}
    \item For all $A$ in $\cat C$:
    \begin{center}
        \begin{tikzcd}
        A \otimes I \arrow[rr, "\gamma_{A,I}"] \arrow[dr, "\rho_A"]
        && I \otimes A \arrow[dl, "\lambda_A"]
        \\
        & A
        \end{tikzcd}
    \end{center}
    \item For all $A,B,C$ in $\cat C$:
        \begin{center}
            \begin{tikzcd}
                (A\otimes B) \otimes C \arrow[rr, "\gamma_{A,B}\otimes id_C"] \arrow[d, "\alpha_{A,B,C}"]
                && (B \otimes A) \otimes C \arrow[d, "\alpha_{B,A,C}"]
                \\
                A \otimes (B \otimes C) \arrow[d, "\gamma_{A,B\otimes C}"]
                && B \otimes (A \otimes C) \arrow[d, "id_B\otimes \gamma_{A,C}"]
                \\
                (B \otimes C) \otimes A \arrow[rr, "\alpha_{B,C,A}"]
                && B \otimes (C \otimes A)
            \end{tikzcd}
        \end{center}
        \item For all $A,B$ in $\cat C$:
        \begin{center}
            \begin{tikzcd}
            A \otimes B \arrow[d, bend left, "\gamma_{A,B}"]
            \\
            B\otimes A \arrow[u, bend left, "\gamma_{B,A}"]
            \end{tikzcd}
        \end{center}
\end{itemize}
\end{df}

\noindent This definition reflects the irrelevance of the order of the wires in a circuit precisely in the existence of the isomorphism $\gamma$. In conclusion, a symmetric monoidal category $\cat M$ offers a good mathematical model for quantum circuits. As we mentioned earlier, Proto-Quipper-M is a language designed specifically for describing morphisms in a symmetric monoidal category, which we call \textit{generalized circuits} from now on.

\begin{df}[Generalized Circuit]
Let $\cat M$ be a symmetric monoidal category. We call the morphisms of $\cat M$ \emph{generalized circuits}.
\end{df}

\noindent Note that because this definition of circuit is so general, Proto-Quipper-M can be used to describe \textit{any} instance of a symmetric monoidal category. This includes different quantum circuit representations (such as DAGs or unitary matrices), as well as other entities which are not necessarily quantum circuits, or circuits at all. In this respect, Proto-Quipper-M is more general than Quipper.

\subsubsection{Generalized Labelled Circuits}\label{generalized labelled circuits}

Although the definition that we just gave is by itself sufficient to characterize a quantum circuit categorically, it would be practical -- almost \textit{necessary}, from a programming point of view -- to have a way to identify and pick individual wires out of a collection, rather than treating said collection as an atomic object. We therefore introduce \textit{labels}, which behave as pointers to individual free wire ends, and we associate a \textit{wire type} to each one of them. In the case of quantum circuits, the types of wires are likely to be either \textit{Bit} or \textit{Qubit}, but for the sake of generality we assume that wire types come from an arbitrary set $\wiretypes$, which is a parameter of the model of the language.

\begin{df}[Wire Types]
Let $\cat M$ be a given symmetric monoidal category, and let $\wiretypes$ be a set equipped with an interpretation function
$$\llbracket \cdot \rrbracket : \wiretypes \to \obs(\cat M),$$
that is, a mapping from the elements of $\wiretypes$ to the objects of $\cat M$. We call the elements of $\wiretypes$ \emph{wire types}.
\end{df}

Wires can be considered individually or in bundles. In the second case, we assume that wire labels can be ordered and we refer to the resulting collection of mappings from labels to wire types as a \textit{label context}.

\begin{df}[Label Context]
Let $\labels$ be a fixed countably infinite set of \emph{label names}, which we assume to be totally ordered. A \emph{label context} $Q$ is a function of the form
$$Q:\labels\to \wiretypes.$$
Such a function that maps label names $\ell_1,\ell_2,\dots,\ell_n$ respectively to wire types $\alpha_1,\alpha_2,\dots,\alpha_n$ can be written as follows:
$$\ell_1:\alpha_1,\ell_2:\alpha_2,\dots,\ell_n:\alpha_n.$$
The interpretation of a label context $Q=\ell_1:\alpha_1,\ell_2:\alpha_2,\dots,\ell_n:\alpha_n$ is the following object of $\cat M$:
$$\llbracket Q \rrbracket = \llbracket\alpha_1\rrbracket \otimes \llbracket\alpha_2\rrbracket \otimes \dots \otimes \llbracket\alpha_n\rrbracket,$$
where $\ell_1 < \ell_2 < \dots < \ell_n$. In the case where $Q = \emptyset$, we have $\llbracket \emptyset \rrbracket = I$.
\end{df}

At this point, by instantiating the generic objects of $\cat M$ with label contexts, we get the category $\cat M_\labels$, which is a truly suitable model for a quantum circuit description language. Note that $\cat M$ and $\cat M_\labels$ are essentially the same category, the only difference between the two being that $\cat M_\labels$ is imbued with a labelling structure that allows us to identify individual wires and their type. To reflect this quality, we call the morphisms of $\cat M_\labels$ \textit{generalized labelled circuits}.

\begin{df}[Generalized Labelled Circuit]
Let $\cat M$ be a given symmetric monoidal category. Let $\cat M_\labels$ be a category in which
\begin{itemize}
    \item The objects are label contexts,
    \item A morphism $f:Q\to R$ is a morphism $g:\llbracket Q \rrbracket \to \llbracket R \rrbracket$ in $\cat M$.
\end{itemize}
We call the morphisms of $\cat M_\labels$ \emph{generalized labelled circuits}, or just \emph{labelled circuits}.
\end{df}

\subsection{Proto-Quipper-M's Syntax}

We are now ready to start examining Proto-Quipper-M, starting from its syntax. Although the original Proto-Quipper-M specification given by Rios and Selinger \cite{proto-quipper-m} is quite rich, in this paper we only consider a minimal fragment of the language, for the sake of simplicity. Our fragment can be described by the following grammar:
\begin{align*}
    M,N ::=\; &x \mid \ell \mid  \lambda x.M\mid MN \mid \tuple{M,N} \mid \letin{\tuple{x,y}}{M}{N} \\&
    \mid \lift  M \mid \force   M \mid \boxt_T M \mid \apply (M,N) \mid (\vec\ell, C, \vec{\ell'}),
\end{align*}
where $x$ ranges over variables names, $\ell$ ranges over the label names in $\labels$ and $C$ is a labelled circuit, that is, a morphism in $\cat M_\labels$. The ultimate goal of the evaluation of a Proto-Quipper-M program is the construction of a quantum circuit as a side-effect. Here, ``side-effect'' means that a circuit does not actually appear in the term that builds it, but rather lives ``behind the scenes'' and any changes made to it are, in a way, imperative in nature. For this reason, we often refer to the circuit being built by a program as the \textit{underlying} circuit.
That being said, it comes as no surprise that the most essential programming constructs of Proto-Quipper-M are those that allow to work with circuits. In particular, a term of the form $(\vec\ell,C,\vec{\ell'})$ is called a \textit{boxed circuit} and allows to treat quantum circuits as data: it corresponds to a labelled quantum circuit $C:Q\to Q'$ which exposes input labels $\vec\ell$ and output labels $\vec{\ell'}$ as an interface (where $\vec\ell$ and $\vec{\ell'}$ are all and only the labels occurring in $Q$ and $Q'$, respectively). New boxed circuits can be introduced via the $\boxt_T$ construct, which corresponds to Quipper's \texttt{box} operator. Informally, $\boxt_T$ takes a circuit-building function, executes it in a sandboxed environment (with new labels generated on-the-fly), boxes the resulting circuit and returns it as a result. On the other hand, the $\apply$ construct models the application of an existing boxed circuit to one or more exposed wires of the underlying circuit. Values are a subset of terms and they can be defined by the following grammar:

$$V,W::= \ell \mid \lambda x.M \mid \tuple{V,W} \mid \lift  M \mid (\vec\ell, C, \vec\ell').$$

\noindent Note that we often employ $\vec\ell$ as syntactic sugar to denote an arbitrary tuple of labels. More formally:

$$\vec\ell,\vec k ::= \ell \mid \tuple{\vec\ell,\vec k}.$$

\paragraph{}The original work by Rios and Selinger includes constants in the language. In particular, it assumes that for every quantum gate there exists a function constant that applies the corresponding gate to the underlying circuit. For simplicity, we decided to omit constants from the language. Instead, we assume that these constants exist as specific morphisms in the $\cat M_\labels$ category. For example, we assume that a morphism $H:(\ell:\mathsf{Qubit})\to (\ell':\mathsf{Qubit})$ exists and represents the circuit where the Hadamard gate is applied to a single qubit wire $\ell$ to obtain an output qubit wire $\ell'$. We can therefore refer to the Hadamard gate within our language with a term of the form $(\ell,H,\ell')$.
We now proceed to give some standard preliminary definitions which will be used in the coming sections.

\begin{df}[Free Labels]
The set of \emph{free labels} of a term $M$, denoted as $\freelabels(M)$, is defined as follows:
\begin{align*}
    \freelabels(x) = \freelabels((\vec\ell,C,\vec{\ell'})) &= \emptyset\\
    \freelabels(\ell) &= \{\ell\}\\
    \freelabels(\lambda x.N) = \freelabels(\lift N)  = \freelabels(\force N) = \freelabels(\boxt_T N) &= \freelabels(N)\\
    \freelabels(NP) = \freelabels(\tuple{N,P}) = \freelabels(\letin{\tuple{x,y}}{N}{P}) = \freelabels(\apply(N,P)) &= \freelabels(N)\cup \freelabels(P).
\end{align*}
\end{df}

\noindent Notice how basically all of the labels occurring in a term are free. This is because labels are not bound by the constructs of the language, but rather by the underlying circuit. We will consider this aspect in more detail in the coming sections, and especially in Section \ref{safety results}.

\begin{df}[Free Variables]
The set of \emph{free variables} of a term $M$, denoted as $\freevariables(M)$, is defined as follows:
\begin{align*}
    \freevariables(\ell) = \freevariables((\vec\ell,C,\vec{\ell'})) &= \emptyset\\
    \freevariables(x) &= \{x\}\\
    \freevariables(\lambda x.N) &= \freevariables(N)\setminus\{x\}\\
    \freevariables(\lift N) = \freevariables(\force N) = \freevariables(\boxt_T N) &= \freevariables(N)\\
    \freevariables(\letin{\tuple{x,y}}{N}{P}) &= \freevariables(N)\cup (\freevariables(P)\setminus\{x,y\})\\
    \freevariables(NP) = \freevariables(\tuple{N,P}) = \freevariables(\apply(N,P)) &= \freevariables(N)\cup \freevariables(P).
\end{align*}
\end{df}

\begin{df}[Capture-avoiding Substitution]
Let $M$ and $N$ be terms such that none of the variables occurring free in $N$ occur in $M$, free or bound. We define the \emph{substitution of $N$ for $x$ in $M$}, or $M[N/x]$, as follows:

\begin{align*}
    x[N/x] &= N\\
    y[N/x] &= y\\
    \ell[N/x] &= \ell\\
    (\lambda x.L)[N/x] &= \lambda x.L\\
    (LP)[N/x] &= L[N/x]P[N/x]\\
    \tuple{L,P}[N/x] &= \tuple{L[N/x],P[N/x]}\\
    (\letin{\tuple{x,y}}{L}{P})[N/x] &= \letin{\tuple{x,y}}{L[N/x]}{P}\\
    (\letin{\tuple{y,x}}{L}{P})[N/x] &= \letin{\tuple{y,x}}{L[N/x]}{P}\\
    (\letin{\tuple{y,z}}{L}{P})[N/x] &= \letin{\tuple{y,z}}{L[N/x]}{P[N/x]}\\
    (\lift L)[N/x] &= \lift L[N/x]\\
    (\force L)[N/x] &= \force L[N/x]\\
    (\boxt_T L)[N/x] &= \boxt_T L[N/x]\\
    (\apply(L,P))[N/x] &= \apply(L[N/x],P[N/x])\\
    (\vec\ell,C,\vec{\ell'})[N/x] &= (\vec\ell,C,\vec{\ell'}).
\end{align*}
\end{df}

\subsection{Type System}
\label{proto-quipper-m:types}

As we mentioned previously, Proto-Quipper-M is endowed with a linear type system that ensures that quantum states are never used more than once. In fact, there are two kinds of types in Proto-Quipper-M: \textit{parameter types} and \textit{linear types}. As the name suggests, parameter types refer to circuit parameters, which are not subjected to linearity constraints and can be used any number of times. Any type that is not a parameter type is a linear type. A variable of linear type is also referred to as a linear resource and, once introduced, can (and must) be consumed exactly once. Among linear types, we distinguish the \textit{simple M-types}, that is, the types of tuples of labels. For simplicity, we often refer to these as just \textit{M-types}. Ultimately, types can be described by the following grammar:

\begin{align*}
    \text{Types}\quad & A,B && ::= \alpha \mid A\otimes B \mid A \multimap B \mid \bang A \mid \circt(T,U),\\
    \text{Parameter types}\quad & P,R && ::= P\otimes R \mid \bang A \mid \circt(T,U),\\
    \text{Simple M-types}\quad & T,U && ::=\alpha \mid T\otimes U,\\
\end{align*}
where $\alpha$ comes from the set $\wiretypes$ of wire types. We note that $A\multimap B$ is the type of linear abstractions from $A$ to $B$, while $\circt(T,U)$ is the type of circuits from M-type $T$ to M-type $U$.

\paragraph{}We now define the notion of \textit{typing context}. In Proto-Quipper-M, a typing context can contain both parameter variables and linear variables. However, it is often useful to distinguish the case in which a typing context only contains parameter variables from the case in which it contains both kinds of variables. We therefore call a typing context a \textit{parameter context}, and denote it by $\Phi$, if it contains exclusively parameter types, whereas we call it a \textit{generic context}, and denote it by $\Gamma$, if it contains parameter \textit{and} linear types alike. Not that this distinction is in no way formal. In fact, a parameter variable may appear on one occasion in $\Phi$ and on another in $\Gamma$ in two rule applications within the same type derivation. Whereas variables are assigned a type by a typing context, labels are assigned a type by the very label contexts that we saw in Section \ref{generalized labelled circuits}. If a generic context $\Gamma$ and a label context $Q$ turn $M$ into a term of type $A$, then we write the following typing judgement:
$$\Gamma; Q \vdash M : A.$$
Typing judgements can be obtained by the following typing rules:
$$\frac{\void}{\Phi,x:A;\emptyset\vdash x:A}\textit{var}
\qquad
\frac{\void}{\Phi;\ell:\alpha\vdash \ell:\alpha}\textit{labels}
$$
\vspace{5pt}
$$
\frac{\Gamma,x:A;Q \vdash M:B}{\Gamma;Q \vdash \lambda x.M:A\multimap B}\textit{abs}
\qquad
\frac{\Phi,\Gamma_1;Q_1\vdash M : A\multimap B \quad \Phi,\Gamma_2;Q_2\vdash N: A}
{\Phi,\Gamma_1,\Gamma_2;Q_1,Q_2\vdash MN:B}\textit{app}
$$
\vspace{5pt}
$$
\frac{\Phi,\Gamma_1;Q_1\vdash M : A \quad \Phi,\Gamma_2;Q_2\vdash N: B}
{\Phi,\Gamma_1,\Gamma_2;Q_1,Q_2\vdash \tuple{M,N}:A\otimes B
}\textit{tuple}
$$
\vspace{5pt}
$$
\frac{\Phi,\Gamma_1;Q_1\vdash M : A\otimes B \quad \Phi,\Gamma_2,x:A,y:B;Q_2\vdash N: C}
{\Phi,\Gamma_1,\Gamma_2;Q_1,Q_2\vdash \letin{\tuple{x,y}}{M}{N}:C
}\textit{let}
$$
\vspace{5pt}
$$
\frac{\Phi;\emptyset\vdash M:A}
{\Phi;\emptyset\vdash \lift  M:\bang A}\textit{lift}
\qquad
\frac{\Gamma;Q\vdash  M:\bang A}
{\Gamma;Q\vdash  \force  M:\;A}\textit{force}
\qquad
\frac{\Gamma;Q\vdash M:\bang(T\multimap U)}
{\Gamma;Q\vdash \boxt_TM:\;\circt(T, U)}\textit{box}
$$
\vspace{5pt}
$$
\frac{\Phi,\Gamma_1;Q_1\vdash M:\circt(T,U) \quad \Phi,\Gamma_2;Q_2\vdash N:T}
{\Phi,\Gamma_1,\Gamma_2;Q_1,Q_2 \vdash \apply (M,N):U}\textit{apply}
$$
\vspace{6pt}
$$
\frac{\emptyset;Q_1\vdash \vec\ell:T \quad \emptyset;Q_2\vdash \vec{\ell'}:U\quad C\in\functorfam(Q_1,Q_2)}
{\Phi;\emptyset \vdash (\vec\ell,C,\vec{\ell'}):\circt(T,U)}\textit{circ}
$$
\vspace{5pt}

\noindent where we assume that $\Gamma_1$ and $\Gamma_2$ (as well as $Q_1$ and $Q_2$) are always disjoint and $\Gamma_1,\Gamma_2$ denotes the union of contexts $\Gamma_1$ and $\Gamma_2$. Note how the requirement that $\Gamma_1$ and $\Gamma_2$ be disjoint guarantees that a linear variable cannot be used more than once in a term, while the fact that the \textit{var} rule successfully derives $\Phi,\Gamma;\emptyset \vdash x:A$ exclusively if $x$ is the only linear variable in $\Gamma$ guarantees that no linear variable goes unused. Together, these two principles guarantee that every linear variable is used \textit{exactly} once, which is precisely the definition of \textit{linearity}. It is easy to see that this kind of constraint holds for labels too, and in this case the linearity property coincides with the no-cloning property of quantum states. Naturally, we have that substitution behaves well with respect to types. Specifically, by substituting a value of type $A$ for a variable of type $A$ in a term $M$, we do not alter the type of $M$. We assert this property in the following results.

\begin{theoremEnd}[all end]{lem}[Generation of Typing Judgements]The following hold:\label{generation lemma}
\begin{enumerate}
    \item If $\Phi,\Gamma;Q\vdash \ell : A$ then $\Gamma = \emptyset$ and there exists $\alpha\in\wiretypes$ such that $A\equiv \alpha$ and $Q=\ell:\alpha$.
    \item If $\Gamma;Q\vdash \lambda x.M:C$ then there exist $A$ and $B$ such that $\Gamma,x:A;Q\vdash M:B$ and $C\equiv A\multimap B$.
    \item If $\Phi,\Gamma;Q\vdash MN : B$ then there exists $A$, as well as $\Gamma_1,\Gamma_2$ and $Q_1,Q_2$, such that $ \Phi,\Gamma_1;Q_1\vdash M:A\multimap B$ and $\Phi,\Gamma_2;Q_2\vdash N:A$, where $\Gamma_1,\Gamma_2 = \Gamma$ and $Q_1,Q_2=Q$.
    \item If $\Phi,\Gamma;Q\vdash \tuple{M,N} : C$ then there exist $A$ and $B$, as well as $\Gamma_1,\Gamma_2$ and $Q_1,Q_2$, such that $ \Phi,\Gamma_1;Q_1\vdash M:A$,  $\Phi,\Gamma_2;Q_2\vdash N:B$ and $C\equiv A\otimes B$, where $\Gamma_1,\Gamma_2 = \Gamma$ and $Q_1,Q_2=Q$.
    \item If $\Phi,\Gamma;Q\vdash \letin{\tuple{x,y}}{M}{N} : C$ then there exist $A$ and $B$, as well as $\Gamma_1,\Gamma_2$ and $Q_1,Q_2$, such that $ \Phi,\Gamma_1;Q_1\vdash M:A\otimes B$ and $\Phi,\Gamma_2,x:A,y:B;Q_2\vdash N:C$, where $\Gamma_1,\Gamma_2 = \Gamma$ and $Q_1,Q_2=Q$.
    \item If $\Gamma;Q \vdash \force   M:A$, then $\Gamma;Q\vdash M:\bang A$.
    \item If $\Gamma;Q \vdash \boxt_T M:W$ then there exist $T$ and $U$ such that $\Gamma;Q\vdash M:\bang(T\multimap U)$ and $W\equiv\circt(T,U)$.
    \item If $\Phi,\Gamma;Q \vdash \lift  M:C$ then $\Gamma = Q = \emptyset$ and there exists $A$ such that $\Phi;\emptyset\vdash M:A$ and $C\equiv \bang A$
    \item If $\Phi,\Gamma;Q \vdash \apply (M,N):C$ then there exist $T$ and $U$, as well as $\Gamma_1,\Gamma_2$ and $Q_1,Q_2$, such that $\Phi,\Gamma_1;Q_1\vdash M:\circt(T,U)$, $\Phi, \Gamma_2;Q_2\vdash N:T$ and $C\equiv U$, where $\Gamma_1,\Gamma_2 = \Gamma$ and $Q_1,Q_2=Q$.
    \item If $\Phi,\Gamma;Q \vdash (\vec\ell,C,\vec{\ell'}):C$ then $\Gamma = Q = \emptyset$ and there exist $T$ and $U$, as well as $Q_1,Q_2$, such that $\emptyset;Q_1\vdash \vec\ell:T$, $ \emptyset;Q_2\vdash \vec{\ell'}:U$, $C\in\functorfam(Q_1,Q_2)$ and $C\equiv \circt(T,U)$.
\end{enumerate}
\end{theoremEnd}
\begin{proofEnd}
The claims all follow directly from the principle of inversion, since the rule system for types is syntax-directed.
\end{proofEnd}

\begin{theoremEnd}[all end]{lem}[Generation of Values]\label{value generation}
Suppose $V$ is a value. Then the following hold:
\begin{enumerate}
    \item If $\Phi,\Gamma;Q\vdash V:T$ for some simple M-type $T$, then $\Gamma=\emptyset$ and $V\equiv \vec\ell$ for some $\vec\ell$.
    \item If $\Phi,\Gamma;Q\vdash V:A\multimap B$ for some $A,B$, then $V\equiv \lambda x. N$ for some $x$ and $N$.
    \item If $\Phi,\Gamma;Q\vdash V: A\otimes B$ for some $A,B$, then $V\equiv \tuple{V_1,V_2}$ for some values $V_1,V_2$.
    \item If $\Phi,\Gamma;Q\vdash V:\bang A$ for some $A$, then $\Gamma=Q=\emptyset$ and $V\equiv \lift N$ for some $N$.
    \item If $\Phi,\Gamma;Q\vdash V:\circt(T,U)$ for some simple M-types $T$ and $U$, then $\Gamma=Q=\emptyset$ and $V\equiv (\vec\ell,D,\vec{\ell'})$ for some $\vec\ell,\vec{\ell'}$ and $D$.
\end{enumerate}
\end{theoremEnd}
\begin{proofEnd}
The claim follows immediately from the grammar for values and the rule system for types.
\end{proofEnd}

\begin{theoremEnd}{lem}[Type of Values]\label{value type}
Given a typing judgement $\Phi,\Gamma, Q\vdash V: A$, where $V$ is a value, then either one of the following holds:
\begin{itemize}
    \item $\Gamma = Q = \emptyset$.
    \item $A$ is a linear type.
\end{itemize}
\end{theoremEnd}
\begin{proofEnd}
By induction on the form of V:
\begin{itemize}
    \item Case $V\equiv \ell$. In this case, by Lemma \ref{generation lemma} we get $\Phi;\ell:\alpha\vdash \ell:\alpha$ and conclude that $A$ is a linear type.
    
    \item Case $V\equiv \lambda x. N$. In this case, by Lemma \ref{generation lemma} we get $\Phi,\Gamma;Q\vdash \lambda x. N: B\multimap C$ and conclude that $A$ is a linear type.
    
    \item Case $V\equiv \tuple{V_1,V_2}$. In this case, by Lemma \ref{generation lemma} we get $\Phi,\Gamma;Q\vdash \tuple{V_1,V_2}:B\otimes C$, $\Phi,\Gamma_1;Q_1\vdash V_1:B$ and $\Phi,\Gamma_2;Q_2\vdash V_2:C$, for some $\Gamma_1,\Gamma_2$ and $Q_1,Q_2$ such that $\Gamma_1,\Gamma_2=\Gamma$ and $Q_1,Q_2=Q$. By inductive hypothesis we know that either $B$ is a linear type or $\Gamma_1 = Q_1 = \emptyset$. In the former case, we conclude that $B\otimes C$ is also a linear type. In the latter case, we know by inductive hypothesis that either $C$ is a linear type or $\Gamma_2=Q_2=\emptyset$. In the former case, we conclude that $B\otimes C$ is also a linear type, while in the latter case we conclude that $\Gamma_1,\Gamma_2=\Gamma=\emptyset$ and $Q_1,Q_2=Q=\emptyset$.
    
    \item Case $V\equiv \lift N$. In this case, by Lemma \ref{generation lemma} we get $\Phi;\emptyset\vdash \lift N : \bang B$ and conclude that $\Gamma=Q =\emptyset$.
    
    \item Case $V\equiv (\vec\ell,C,\vec{\ell'})$. In this case, by Lemma \ref{generation lemma} we get $\Phi;\emptyset\vdash (\vec\ell,C,\vec{\ell'}) : \circt(T,U)$ and conclude that $\Gamma=Q =\emptyset$.
\end{itemize}
\end{proofEnd}

\begin{theoremEnd}{lem} [Parameter Substitution]
\label{parameter substitution}
Let $\Phi=\Phi',x:R$. If $\Phi,\Gamma;Q\vdash M:B$ and $\Phi';\emptyset\vdash V:R$, where $V$ is a value, then $\Phi',\Gamma;Q\vdash M[V/x]:B$.
\end{theoremEnd}
\begin{proofEnd}
By induction on the derivation of $\Phi,\Gamma;Q\vdash M:B$.
\begin{itemize}
    
    \item Case of \textit{var}. Suppose $M\equiv y$. If $x \not\equiv y$, we have $y[V/x]=y$ and the claim is trivially true. Otherwise, if $x\equiv y$, then $y[V/x]=x[V/x]=V$ and $\Phi';\emptyset\vdash V:R$ by hypothesis.
    
    \item Case of \textit{labels}. Suppose $M\equiv \vec\ell$. In this case $\vec\ell[V/x]= \ell'$ and the claim is trivially true.
    
    \item Case of \textit{abs}. Suppose $M\equiv \lambda y.N$. By Lemma \ref{generation lemma} we know that $\Phi,\Gamma,y:C;Q\vdash N:D$ for some $C$ and $D$ such that $B \equiv C\multimap D$. If $x\equiv y$ then by the definition of capture-free substitution we have $(\lambda x.N)[V/x]=\lambda x.N$ and the claim is trivially true. Otherwise, if $x\not\equiv y$, we have $(\lambda y.N)[V/x]=\lambda y .(N[V/x])$. In this case, by inductive hypothesis we get $\Phi',\Gamma,y:C;Q\vdash N[V/x] : D$ and conclude $\Phi',\Gamma;Q \vdash \lambda y.(N[V/x]) : C\multimap D$ by the \textit{abs} rule.
        
    \item Case of \textit{app}. Suppose $M\equiv NP$. In this case, $(NP)[V/x]=(N[V/x])(P[V/x])$. By Lemma \ref{generation lemma} we know that $\Phi,\Gamma_1;Q_1 \vdash N : C \multimap B$ and $\Phi,\Gamma_2;Q_2 \vdash P : C$, for some $C$ and for $\Gamma_1,\Gamma_2,Q_1,Q_2$ such that $\Gamma=\Gamma_1,\Gamma_2$ and $Q=Q_1,Q_2$. By inductive hypothesis we get $\Phi',\Gamma_1;Q_1 \vdash N[V/x] : C \multimap B$ and $\Phi',\Gamma_2;Q_2 \vdash P[V/x] : C$ and conclude $\Phi',\Gamma;Q\vdash (N[V/x])(P[V/x]) : B$ by the \textit{app} rule.
    
    \item Case of \textit{tuple}. Suppose $M\equiv \tuple{N,P}$. In this case, $\tuple{NP}[V/x]=\tuple{N[V/x],P[V/x]}$. By Lemma \ref{generation lemma} we know that $\Phi,\Gamma_1;Q_1 \vdash N : C$ and $\Phi,\Gamma_2;Q_2 \vdash P : D$, for some $C$ and $D$ such that $B\equiv C\otimes D$ and for $\Gamma_1,\Gamma_2,Q_1,Q_2$ such that $\Gamma=\Gamma_1,\Gamma_2$ and $Q=Q_1,Q_2$. By inductive hypothesis we get $\Phi',\Gamma_1;Q_1 \vdash N[V/x] : C$ and $\Phi',\Gamma_2;Q_2 \vdash P[V/x] : D$ and conclude $\Phi',\Gamma;Q\vdash \tuple{N[V/x],P[V/x]} : C\otimes D$ by the \textit{tuple} rule.
    
    \item Case of \textit{let}. Suppose $M\equiv \letin{\tuple{y,z}}{N}{P}$. By Lemma \ref{generation lemma} we know that $\Phi,\Gamma_1;Q_1 \vdash N : C\otimes D$ and $\Phi,\Gamma_2,y:C,z:D;Q_2 \vdash P : B$, for some $C$ and $D$ and for $\Gamma_1,\Gamma_2,Q_1,Q_2$ such that $\Gamma=\Gamma_1,\Gamma_2$ and $Q=Q_1,Q_2$. If $x\equiv y$ or $x\equiv z$ then by the definition of capture-free substitution we have $(\letin{\tuple{y,z}}{N}{P})[V/x]=\letin{\tuple{y,z}}{N[V/x]}{P}$. In this case, by inductive hypothesis we get $\Phi',\Gamma_1;Q_1\vdash N[V/x]:C\otimes D$ and conclude $\Phi',\Gamma;Q\vdash\letin{\tuple{y,z}}{N[V/x]}{P}:B$ by the \textit{tuple} rule. Otherwise, if $x\not\equiv y$ and $x\not\equiv z$, we have $(\letin{\tuple{y,z}}{N}{P})[V/x]=\letin{\tuple{y,z}}{N[V/x]}{P[V/x]}$. In this case, by inductive hypothesis we get both $\Phi',\Gamma_1;Q_1\vdash N[V/x]:C\otimes D$ and $\Phi',\Gamma_2,y:C,z:D;Q_2 \vdash P[V/x] : B$ and conclude $\Phi',\Gamma;Q\vdash\letin{\tuple{y,z}}{N[V/x]}{P[V/x]}:B$ by the \textit{let} rule.
        
    \item Case of \textit{lift}. Suppose $M\equiv \lift N$. In this case, $(\lift N)[V/x]= \lift (N[V/x])$ and $\Gamma = Q=\emptyset$ by Lemma \ref{generation lemma}. By the same lemma we know that $\Phi;\emptyset \vdash N : C$ for some $C$ such that $B\equiv \bang C$. By inductive hypothesis we get $\Phi';\emptyset \vdash N[V/x] : C$ and conclude $\Phi',\emptyset;Q' \vdash \lift (N[V/x]): \bang C$ by the \textit{lift} rule.
        
    \item Case of \textit{force}. Suppose $M\equiv \force N$. In this case, $(\force N)[V/x]=\force(N[V/x])$. By Lemma \ref{generation lemma} we know that $\Phi,\Gamma;Q \vdash N:\bang B$. By inductive hypothesis we get $\Phi',\Gamma;Q \vdash N[V/x]:\bang B$ and conclude $\Phi',\Gamma;Q \vdash \force (N[V/x]):B$ by the \textit{force} rule.
        
    \item Case of \textit{box}. Suppose $M\equiv \boxt_T N$. In this case, $(\boxt_T N)[V/x]=\boxt_T (N[V/x])$. By Lemma \ref{generation lemma} we know that $\Phi,\Gamma;Q\vdash N:\bang(T\multimap U)$ for some $T,U$ such that $B\equiv \circt(T,U)$. By inductive hypothesis we get $\Phi',\Gamma;Q\vdash N[V/x]:\bang(T\multimap U)$ and conclude $\Phi',\Gamma;Q \vdash \boxt_T (N[V/x]) : \circt(T,U)$ by the \textit{box} rule.
    
    \emergencystretch=10pt
    \item Case of \textit{apply}. Suppose $M\equiv \apply(N,P)$. In this case, we have $\apply(N,P)[V/x]=\apply(N[V/x],P[V/x])$. By Lemma \ref{generation lemma} we know that $\Phi,\Gamma_1;Q_1\vdash N : \circt(T,U)$ and $\Phi,\Gamma_2;Q_2\vdash P:T$, for some $T,U$ such that $B\equiv U$ and for $Q_1,Q_2,\Gamma_1,\Gamma_2$ such that $B\equiv U$ and $\Gamma=\Gamma_1,\Gamma_2$ and $Q=Q_1,Q_2$. By inductive hypothesis we get $\Phi',\Gamma_1;Q_1\vdash N[V/x] : \circt(T,U)$ and $\Phi',\Gamma_2;Q_2\vdash P[V/x]:T$ and conclude $\Phi',\Gamma;Q \vdash \apply(N[V/x],P[V/x]) : U$ by the \textit{apply} rule. 
        
    \item Case of \textit{circ}. Suppose $M\equiv (\vec\ell, D, \vec{\ell'})$. In this case $(\vec\ell, D,\vec{\ell'})[N/x]= (\vec\ell, D,\vec{\ell'})$ and the claim is trivially true.
\end{itemize}
\end{proofEnd}

\begin{theoremEnd}{lem} [Linear Substitution]
\label{linear substitution}
If $\Phi,\Gamma,x:A;Q\vdash M:B$ and $\Phi,\Gamma';Q'\vdash V:A$, where $A$ is a linear type and $V$ is a value, then $\Phi,\Gamma,\Gamma';Q,Q'\vdash M[V/x]:B$.
\end{theoremEnd}
\begin{proofEnd}
By induction on the derivation of $\Phi,\Gamma,x:A;Q\vdash M:B$.
\begin{itemize}
    \item Case of \textit{var}. Suppose $M\equiv y$. Here necessarily $y\equiv x$ (i.e. $x$ occurs free exactly once in $M$), since $\Gamma$ contains exclusively $x$ and therefore cannot possibly assign a type to any $y\not\equiv x$.
    In this case $x[V/x]=V$ and $\Phi,\Gamma';Q'\vdash V:A$ by hypothesis.
    
    \item Case of \textit{labels}. This case is impossible since it would entail $\Gamma = \emptyset$ by Lemma \ref{generation lemma}.
    
    \item Case of \textit{abs}. Suppose $M\equiv \lambda y.N$. In this case, $(\lambda y.N)[V/x]=\lambda y .(L[V/x])$. By Lemma \ref{generation lemma} we know that $\Phi,\Gamma,y:C;Q\vdash N : D$, for some $C,D$ such that $B \equiv C\multimap D$. By inductive hypothesis we get $\Phi,\Gamma,y:C,\Gamma';Q,Q'\vdash N[V/x] : D$ and conclude $\Phi,\Gamma,\Gamma';Q,Q' \vdash \lambda y.(N[V/x]) : C\multimap D$ by the \textit{abs} rule.
        
    \item Case of \textit{app}. Suppose $M\equiv NP$. By Lemma \ref{generation lemma} we know that $\Phi,\Gamma_1;Q_1 \vdash N : C \multimap B$ and $\Phi,\Gamma_2;Q_2 \vdash P : C$, for some $C$ and for $\Gamma_1,\Gamma_2,Q_1,Q_2$ such that $\Gamma,x:A=\Gamma_1,\Gamma_2$ and $Q=Q_1,Q_2$. Because $\Gamma_1$ and $\Gamma_2$ are disjoint, we have either $x\in \Gamma_1$ or $x\in \Gamma_2$. Let us assume, without loss of generality, that $x\in \Gamma_1$ and thus $\Gamma_1= \Gamma_1',x:A$ for some $\Gamma_1'$. In this case, $(NP)[V/x]=(N[V/x])P$. By the results of Lemma \ref{generation lemma} and the inductive hypothesis we get $\Phi,\Gamma_1',\Gamma';Q_1,Q'\vdash N[V/x]:C\multimap B$ and conclude $\Phi,\Gamma,\Gamma';Q,Q'\vdash (N[V/x])P : B$ by the \textit{app} rule.
    
    \item Case of \textit{tuple}. Suppose $M\equiv \tuple{N,P}$. By Lemma \ref{generation lemma} we know that $\Phi,\Gamma_1;Q_1 \vdash N : C$ and $\Phi,\Gamma_2;Q_2 \vdash P : D$, for some $C$ and $D$ such that $B\equiv C\otimes D$ and for $\Gamma_1,\Gamma_2,Q_1,Q_2$ such that $\Gamma=\Gamma_1,\Gamma_2$ and $Q=Q_1,Q_2$. Because $\Gamma_1$ and $\Gamma_2$ are disjoint, we have either $x\in \Gamma_1$ or $x\in \Gamma_2$. Let us assume, without loss of generality, that $x\in \Gamma_1$ and thus $\Gamma_1= \Gamma_1',x:A$ for some $\Gamma_1'$. In this case, $\tuple{NP}[V/x]=\tuple{N[V/x],P}$. By inductive hypothesis we get $\Phi,\Gamma_1',\Gamma';Q_1,Q' \vdash N[V/x] : C$ and conclude $\Phi,\Gamma,\Gamma';Q,Q'\vdash \tuple{N[V/x],P} : C\otimes D$ by the \textit{tuple} rule.
    
    \item Case of \textit{let}. Suppose $M\equiv \letin{\tuple{y,z}}{N}{P}$. By Lemma \ref{generation lemma} we know that $\Phi,\Gamma_1;Q_1 \vdash N : C\otimes D$ and $\Phi,\Gamma_2,y:C,z:D;Q_2 \vdash P : B$, for some $C$ and $D$ and for $\Gamma_1,\Gamma_2,Q_1,Q_2$ such that $\Gamma=\Gamma_1,\Gamma_2$ and $Q=Q_1,Q_2$. Because $\Gamma_1$ and $\Gamma_2$ are disjoint, we have either $x\in \Gamma_1$ or $x\in \Gamma_2$. Let us assume, without loss of generality, that $x\in \Gamma_1$ and thus $\Gamma_1= \Gamma_1',x:A$ for some $\Gamma_1'$. In this case, $(\letin{\tuple{y,z}}{N}{P})[V/x]=\letin{\tuple{y,z}}{N[V/x]}{P}$. By inductive hypothesis we get $\Phi,\Gamma_1',\Gamma';Q_1,Q'\vdash N[V/x]:C\otimes D$ and conclude $\Phi,\Gamma,\Gamma';Q,Q' \vdash \letin{\tuple{y,z}}{N[V/x]}{P} : B$ by the \textit{let} rule.
        
    \item Case of \textit{lift}. This case is impossible since it would entail $\Gamma=\emptyset$ by Lemma \ref{generation lemma}.
        
    \item Case of \textit{force}. Suppose $M\equiv \force N$. In this case, $(\force N)[V/x]=\force(N[V/x])$. By Lemma \ref{generation lemma} we know that $\Phi,\Gamma,x:A;Q \vdash N:\bang B$. By inductive hypothesis we get $\Phi,\Gamma,\Gamma';Q,Q' \vdash N[V/x]:\bang B$ and conclude $\Phi,\Gamma,\Gamma';Q,Q' \vdash \force (N[V/x]):B$
        
    \item Case of \textit{box}. Suppose $M\equiv \boxt_T N$. In this case, $(\boxt_T N)[V/x]=\boxt_T (N[V/x])$. By Lemma \ref{generation lemma} we know that $\Phi,\Gamma,x:A;Q\vdash N:\bang(T\multimap U)$ for some $T,U$ such that $B\equiv \circt(T,U)$. By inductive hypothesis we get $\Phi,\Gamma,\Gamma';Q,Q'\vdash N[V/x]:\bang(T\multimap U)$ and conclude $\Phi,\Gamma,\Gamma';Q,Q' \vdash \boxt_T (N[V/x]) : \circt(T,U)$ by the \textit{box} rule.
        
    \item Case of \textit{apply}. Suppose $M\equiv \apply(N,P)$. By Lemma \ref{generation lemma} we know that $\Phi,\Gamma_1,x:A;Q_1 \vdash N : \circt(T,U)$ and $\Phi,\Gamma_2;Q_2 \vdash P : T$, for some $T,U$ such that $B\equiv U$ and for $\Gamma_1,\Gamma_2,Q_1,Q_2$ such that $\Gamma,x:A=\Gamma_1,\Gamma_2$ and $Q=Q_1,Q_2$. Because $\Gamma_1$ and $\Gamma_2$ are disjoint, we have either $x\in \Gamma_1$ or $x\in \Gamma_2$. Let us assume, without loss of generality, that $x\in \Gamma_1$ and thus $\Gamma_1 = \Gamma_1',x:A$ for some $\Gamma_1'$. In this case, $\apply(N,P)[V/x]=\apply(N[V/x],P)$. By the results of Lemma \ref{generation lemma} and the inductive hypothesis we get $\Phi,\Gamma_1',\Gamma';Q_1,Q'\vdash N[V/x]:\circt(T,U)$ and conclude $\Phi,\Gamma,\Gamma';Q,Q'\vdash \apply(N[V/x],P) : U$ by the \textit{apply} rule.
        
    \item Case of \textit{circ}. This case is impossible since it would entail $\Gamma=\emptyset$ by Lemma \ref{generation lemma}.
\end{itemize}
\end{proofEnd}

\begin{thm}[Substitution]\label{substitution theorem}
If $\Phi,\Gamma,x:A;Q \vdash M:B$ and $\Phi,\Gamma';Q' \vdash V : B$,where $V$ is a value, then
$$\Phi,\Gamma,\Gamma';Q,Q' \vdash M[V/x]:B.$$
\end{thm}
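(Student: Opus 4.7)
The plan is to derive this theorem as a corollary of the two previous substitution lemmas by a case analysis on the type $A$ of the substituted variable. (I read the hypothesis $\Phi,\Gamma';Q'\vdash V:B$ as a typo for $V:A$, since otherwise the statement does not connect to the substituted variable $x:A$ at all.)

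First I would split on whether $A$ is a linear type or a parameter type. In the linear case, Lemma \ref{linear substitution} applies verbatim: its hypotheses $\Phi,\Gamma,x:A;Q\vdash M:B$, $\Phi,\Gamma';Q'\vdash V:A$ with $A$ linear are exactly what we have, and its conclusion $\Phi,\Gamma,\Gamma';Q,Q'\vdash M[V/x]:B$ is exactly the desired one.

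In the parameter case, I would invoke Lemma \ref{value type} on the typing judgement $\Phi,\Gamma';Q'\vdash V:A$. Since $A$ is a parameter type and hence not linear, the lemma forces $\Gamma'=Q'=\emptyset$, so the second hypothesis collapses to $\Phi;\emptyset\vdash V:A$. To match Lemma \ref{parameter substitution}, I would then regard the binding $x:A$ as sitting in the parameter portion of the context — this is legitimate because the separation between $\Phi$ and $\Gamma$ is purely notational (as noted in the paper right after the typing rules), and $A$ is a parameter type, so placing $x:A$ in $\Phi$ violates no constraint of the type system. Applying Lemma \ref{parameter substitution} with $\Phi':=\Phi$ and $\Phi:=\Phi,x:A$ then yields $\Phi,\Gamma;Q\vdash M[V/x]:B$, which is precisely $\Phi,\Gamma,\Gamma';Q,Q'\vdash M[V/x]:B$ since $\Gamma'$ and $Q'$ are empty.

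The only real subtlety — and the part I would flag as the main obstacle — is the contextual bookkeeping in the parameter case: one must justify that moving $x:A$ from the ``generic'' side of the typing judgement into the parameter context does not change the derivability of $\Phi,\Gamma,x:A;Q\vdash M:B$. This is immediate once one observes that all typing rules that treat $\Phi$ specially (namely \emph{var}, \emph{lift} and \emph{circ}) only require that variables in $\Phi$ carry parameter types, a condition which $x:A$ satisfies by assumption; no rule forbids a parameter-typed variable from residing in $\Phi$. With that observation in hand both cases reduce cleanly to the respective lemma, and no new induction on the derivation of $M$ is needed.
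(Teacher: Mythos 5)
Your proof is correct and follows exactly the route the paper takes: its proof of this theorem is a one-line appeal to Lemma \ref{value type} together with Lemmata \ref{parameter substitution} and \ref{linear substitution}, which is precisely the case split on $A$ (linear versus parameter) that you spell out, and your reading of $V:B$ as a typo for $V:A$ is the intended one. The only difference is that you make explicit the bookkeeping of moving $x:A$ into the parameter context, which the paper leaves implicit.
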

\begin{proof}
The claim follows immediately from Lemma \ref{value type} and lemmata \ref{parameter substitution} and \ref{linear substitution}.
\end{proof}

\subsection{Big-step Operational Semantics}

We now review the operational semantics given by Rios and Selinger for Proto-Quipper-M, which, as the title suggests, are big-step. As we mentioned earlier, the evaluation of a Proto-Quipper-M program is intimately related to the circuit that the program is designed to build. Because of this, the operational semantics of the language is not defined on terms alone, but rather jointly on terms and circuits. To this effect, we give the definition of \textit{configuration}.

\begin{df}[Configuration]
A \emph{configuration} is a pair $(C,M)$, where $C$ is a circuit and $M$ is a term.
\end{df}

\noindent Intuitively, $C$ is the underlying circuit being built as a side-effect of the evaluation of $M$. We now proceed to give the following definitions and functions, which will be essential throughout the rest of our work.

\begin{df}[Equivalent Circuit]
Let $C:Q_1\to Q_1'$ and $D:Q_2\to Q_2'$ be two labelled circuits and let $(\vec\ell, C, \vec{\ell'})$ and $(\vec k, D, \vec{k'})$ be the corresponding boxed circuits. We say that $(\vec\ell, C, \vec{\ell'})$ and $(\vec k, D, \vec{k'})$ are \emph{equivalent} and we write $(\vec\ell, C, \vec{\ell'}) \cong (\vec k, D, \vec{k'})$ when they only differ by a renaming of labels, that is when $C=D$ in $\cat M$.
\end{df}

\begin{df}[$\freshlabels$]
Given a term $M$ and a simple M-type $T$, we define the function $\freshlabels$ as follows:
$$\freshlabels(M,T) = (Q,\vec\ell),$$
such that the labels in $\vec\ell$ do not occur in $M$ and $\emptyset;Q\vdash \vec\ell : T$.
\end{df}

\begin{df}[$\append$]
Let $C:Q_1\to Q_1'$ and $D:Q_2\to Q_2'$ be two labelled circuits and let $(\vec\ell_1, C, \vec{\ell_1'})$ and $(\vec\ell_2, D, \vec{\ell_2'})$ be the corresponding boxed circuits. Let $\vec k$ be a subset of the labels which occur in $\vec{\ell_1'}$. We define the function $\append$ as follows:
$$\append(C,\vec k,\vec\ell_2, D, \vec{\ell_2'})=(C', \vec{k'}),$$
where $C'$ is the circuit obtained by attaching the inputs of $D'$ to the matching outputs of $C$, for $(\vec k, D', \vec{k'})\cong(\vec\ell_2, D, \vec{\ell_2'})$. More formally, assume, without loss of generality, that $Q_1'$ is the concatenation of $Q_{11}'$ and $Q_{12}'$, where $Q_{12}'$ contains all and only the labels in $\vec k$. Then we have
$$C' = (id_{Q_{11}'} \otimes D') \circ C.$$
\end{df}

\begin{figure}[H]
    \centering
    \includegraphics[width=.4\textwidth]{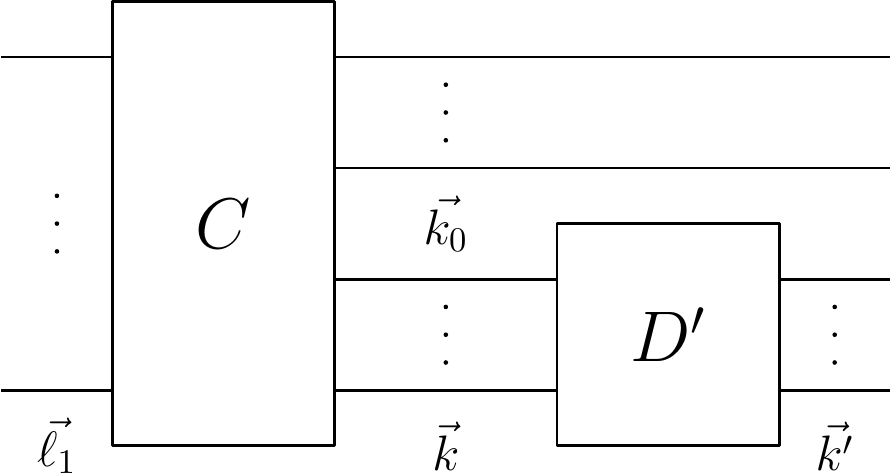}
\end{figure}

\noindent Now we have all the prerequisites for the definition of an operational semantics. We define $\Downarrow$ as a binary relation over configurations. Informally, $(C,M)\Downarrow(D,V)$ means that the evaluation of $M$ with an underlying circuit $C$ eventually results in value $V$ and in the construction of circuit $D$.

$$
\frac{\void}
{(C,x) \Downarrow \serr}
\qquad
\frac{\void}
{(C,\vec\ell) \Downarrow (C,\vec\ell)}
\qquad
\frac{\void}
{(C,\lambda x. M) \Downarrow (C,\lambda x. M)}
$$
\vspace{5pt}
$$
\frac{(C,M)\Downarrow(C_1,\lambda x.P) \quad (C_1, N) \Downarrow (C_2, V) \quad (C_2,P[V/x]) \Downarrow (C_3,W)}
{(C,MN) \Downarrow (C_3,W)}
$$
\vspace{5pt}
$$
\frac{(C,M)\Downarrow \sother}
{(C,MN)\Downarrow \serr}
\qquad
\frac{(C,M)\Downarrow(C_1,\tuple{V_1,V_2}) \quad (C_1, N[V_1/x][V_2/y]) \Downarrow (C_3,W)}
{(C,\letin{\tuple{x,y}}{M}{N}) \Downarrow (C_3,W)}
$$
\vspace{5pt}
$$
\frac{(C,M)\Downarrow \sother}
{(C,\letin{\tuple{x,y}}{M}{N}) \Downarrow \serr}
\qquad
\frac{(C,M)\Downarrow(C_1,V) \quad (C_1,N)\Downarrow (C_2,W)}
{(C,\tuple{M,N}) \Downarrow (C_2,\tuple{V,W})}
$$
\vspace{5pt}
$$
\frac{\void}
{(C,\lift M)\Downarrow(C,\lift M)}
\qquad
\frac{(C,M)\Downarrow(C_1,\lift N) \quad (C_1, N) \Downarrow (C_2,V)}
{(C,\force M) \Downarrow (C_2,V)}
$$
\vspace{5pt}
$$
\frac{(C,M)\Downarrow\sother}
{(C,\force M) \Downarrow \serr}
$$
\vspace{5pt}
$$
\frac{(C,M)\Downarrow(C_1,\lift N) \quad (Q,\vec\ell)=\freshlabels(N,T) \quad (id_Q,N\vec\ell) \Downarrow (D,\vec{\ell'})}
{(C,\boxt_T M) \Downarrow (C_1,(\vec\ell,D,\vec{\ell'}))}
$$
\vspace{5pt}
$$
\frac{(C,M)\Downarrow \sother}
{(C,\boxt_T M) \Downarrow \serr}
\qquad
\frac{(C,M)\Downarrow \sother}
{(C,\apply(M,N)) \Downarrow \serr}
\qquad
\frac{\void}
{(C,(\vec\ell,D,\vec{\ell'})) \Downarrow (C,(\vec\ell,D,\vec{\ell'}))}
$$
\vspace{5pt}
$$
\frac{(C,M)\Downarrow(C_1,(\vec\ell,D,\vec{\ell'})) \quad (C_1, N) \Downarrow (C_2, \vec k) \quad (C_3,\vec{k'}) = \append(C_2,\vec k, \vec\ell, D, \vec{\ell'})}
{(C,\apply(M,N)) \Downarrow (C_3,\vec{k'})}
$$
\vspace{5pt}

\noindent where the notation ``$(C,M)\Downarrow\sother$'' is shorthand for $(C,M)\Downarrow(D,V)$ where $V$ does not match the explicit form required by any other rule that evaluates the same configuration.

\subsection{Small-step Operational Semantics}

It is now time to take the first step towards a machine semantics for Proto-Quipper-M. In this section, we extrapolate an equivalent small-step semantics from the big-step semantics that we just saw, and we examine its properties and its limitations. For the sake of simplicity, from now on we will assume that the terms in the configurations we work with do not contain free variables.

\begin{df}[Small-step Configuration]
A \emph{small-step configuration} is a pair of the form $(C,M)$, where $C$ is a circuit and $M$ is a term with no free variables.
\end{df}

\noindent We then define a binary reduction relation $\red$ on configurations. Informally, $(C,M)\red (D,N)$ means that $M$ evaluates to $N$ in one step, in a way that updates the underlying circuit from $C$ to $D$. In order to mimic the behavior of the big-step semantics, we design two sets of rules. The rules in the first set each resemble one of the main rules of the original big-step semantics and operate directly on redexes.
$$
\frac{\void}
{(C,(\lambda x.M)V) \to (C,M[V/x])}\textit{$\beta$-reduction}
$$
\vspace{5pt}
$$
\frac{\void}
{(C,\letin{\tuple{x,y}}{\tuple{V,W}}{M})\red (C,M[V/x][W/y])}\textit{let}
$$
\vspace{5pt}
$$
\frac{\void}
{(C,\force  (\lift M)) \to (C,M)}\textit{force}
\qquad
\frac{(Q,\vec\ell)=\freshlabels(M,T)\quad (id_Q,M\vec\ell) \red\dots\red (D,\vec{\ell'})}
{(C,\boxt_T(\lift M)) \red (C,(\vec\ell,D,\vec{\ell'}))}
\textit{box}
$$
\vspace{5pt}
$$
\frac{(C',\vec k')= \append(C,\vec k, \vec \ell, D, \vec\ell')}
{(C,\apply ((\vec\ell,D,\vec{\ell'}),\vec k)) \to (C',\vec k')}
\textit{apply}
$$
\vspace{5pt}

\noindent where the notation $(C,M)\red\dots\red(C',M')$ is shorthand for
\begin{align*}
    (C,M) \equiv\; &(C_1,M_1),(C_1,M_1) \red(C_2,M_2), \dots,\\
    &(C_{n-1},M_{n-1})\red(C_n,M_n), (C_n,M_n) \equiv (C',M'),
\end{align*}
for some $n>0$. That is, a reduction sequence of finite length from $(C,M)$ to $(C',M')$. We employ a finite, but arbitrary number of premises instead of a single premise with the transitive and reflexive closure $\red^*$ of the reduction relation $\red$ in order to make proofs by induction easier. We now examine the second set of rules, which we call \textit{contextual rules}. Each one of these rules allows for the intermediate evaluation of an immediate sub-term within a term.
$$
\frac{(C,M)\red (C',M')}
{(C,MN)\red (C',M'N)}\textit{ctx-app-left}
\qquad
\frac{(C,M)\red (C',M')}
{(C,VM)\red (C',VM')}\textit{ctx-app-right}
$$
\vspace{5pt}
$$
\frac{(C,M)\red (C',M')}
{(C,\tuple{M,N})\red (C',\tuple{M',N})}\textit{ctx-tuple-left}
\qquad
\frac{(C,M)\red (C',M')}
{(C,\tuple{V,M})\red (C',\tuple{V,M'})}\textit{ctx-tuple-right}
$$
\vspace{5pt}
$$
\frac{(C,M)\red(C',M')}
{(C,\letin{\tuple{x,y}}{M}{N})\red (C',\letin{\tuple{x,y}}{M'}{N})}\textit{ctx-let}
$$
\vspace{5pt}
$$
\frac{(C,M)\red (C',M')}
{(C,\force M)\red (C',\force M')}\textit{ctx-force}
\qquad
\frac{(C,M)\red (C',M')}
{(C,\boxt_T M)\red (C',\boxt_T M')}\textit{ctx-box}
$$
\vspace{5pt}
$$
\frac{(C,M)\red (C',M')}
{(C,\apply(M,N))\red (C',\apply(M',N))}\textit{ctx-apply-left}
$$
\vspace{5pt}
$$
\frac{(C,M)\red (C',M')}
{(C,\apply(V,M))\red (C',\apply(V,M'))}\textit{ctx-apply-right}
$$

\paragraph{} The way we introduced it, the reduction relation $\red$ is deterministic, as stated in the following results.

\begin{theoremEnd}{lem}\label{small-step rule mutex}
Every small-step configuration $(C,M)$ can be reduced by at most one rule of the small-step operational semantics.
\end{theoremEnd}
\begin{proofEnd}
We proceed by cases on $M$:
\begin{itemize}
    \item Case $M\equiv x$. This case is impossible, since by the definition of small-step configuration $M$ must contain no free variables.
    
    \item Case $M\equiv \vec\ell$. In this case $\vec\ell$ is a value and $(C,\vec\ell)$ cannot be reduced by any rule.
    
    \item Case $M\equiv \lambda x.N$. In this case $\lambda x.N$ is a value and $(C,\lambda x.N)$ cannot be reduced by any rule.
    
    \item Case $M\equiv NP$. In this case $NP$ is matched by the \textit{$\beta$-reduction}, \textit{ctx-app-left} and \textit{ctx-app-right} rules. However, \textit{$\beta$-reduction} requires that both $N$ and $P$ be values to be applied, while \textit{ctx-app-left} requires that $N$ be reducible (and therefore not a value) and \textit{ctx-app-right} requires that $N$ be a value and $P$ be reducible (and therefore not a value). Because these conditions mutually exclude each other, we conclude that $(C,NP)$ can be reduced by at most one rule.
    
    \item Case $M\equiv \tuple{N,P}$. In this case $\tuple{N,P}$ is matched by the \textit{ctx-tuple-left} and \textit{ctx-tuple-right} rules. However, the latter requires that $N$ be a value to be applied, while the former requires that it be reducible (and therefore not a value). Because these conditions mutually exclude each other, we conclude that $(C,\tuple{N,P})$ can be reduced by at most one rule.
    
    \item Case $M\equiv \letin{\tuple{x,y}}{N}{P}$. In this case $\letin{\tuple{x,y}}{N}{P}$ is matched by the \textit{let} and \textit{ctx-let} rules. However, the former requires that $N$ be a value to be applied, while the latter requires that it be reducible (and therefore not a value). Because these conditions mutually exclude each other, we conclude that $(C,\letin{\tuple{x,y}}{N}{P})$ can be reduced by at most one rule.
    
    \item Case $M\equiv \lift N$. In this case $\lift N$ is a value and $(C,\lift N)$ cannot be reduced by any rule.
    
    \item Case $M\equiv \force N$. In this case $\force N$ is matched by the \textit{force} and \textit{ctx-force} rules. However, the former requires that $N$ be a value to be applied, while the latter requires that it be reducible (and therefore not a value). Because these conditions mutually exclude each other, we conclude that $(C,\force N)$ can be reduced by at most one rule.
    
    \item Case $M\equiv \boxt_T N$. In this case $\boxt_T N$ is matched by the \textit{box} and \textit{ctx-box} rules. However, the former requires that $N$ be a value to be applied, while the latter requires that it be reducible (and therefore not a value). Because these conditions mutually exclude each other, we conclude that $(C,\boxt_T N)$ can be reduced by at most one rule.
    
    \item Case $M\equiv \apply(N,P)$. In this case $\apply(N,P)$ is matched by the \textit{apply}, \textit{ctx-apply-left} and \textit{ctx-apply-right} rules. However, \textit{apply} requires that both $N$ and $P$ be values to be applied, while \textit{ctx-apply-left} requires that $N$ be reducible (and therefore not a value) and \textit{ctx-apply-right} requires that $N$ be a value and $P$ be reducible (and therefore not a value). Because these conditions mutually exclude each other, we conclude that $(C,\apply(N,P))$ can be reduced by at most one rule.
    
    \item Case $M\equiv (\vec\ell,D,\vec{\ell'})$. In this case $(\vec\ell,D,\vec{\ell'}$ is a value and $(C,(\vec\ell,D,\vec{\ell'})$ cannot be reduced by any rule.
\end{itemize}
\end{proofEnd}

\begin{prop}[Determinism of Small-step Semantics]\label{small-step determinism}
The reduction relation $\red$ is deterministic. That is, if $(C,M) \red (D,N)$, then for every configuration $(D',N')$ such that $(C,M) \red (D',N')$ we have $D=D'$ and $N\equiv N'$.
\end{prop}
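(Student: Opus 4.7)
The plan is to proceed by induction on the derivation of the reduction $(C,M) \red (D,N)$, relying on Lemma~\ref{small-step rule mutex} to ensure that any second reduction $(C,M) \red (D',N')$ must be obtained by the very same rule. With the rule fixed, I then argue uniqueness of the output case by case.

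First, I would handle the non-contextual base rules (\textit{$\beta$-reduction}, \textit{let}, \textit{force}, and \textit{apply}). In each of these, the resulting configuration is entirely determined by the shape of the redex: capture-avoiding substitution is a deterministic function of its inputs, and so is $\append$. Hence $(D,N)$ and $(D',N')$ coincide syntactically without needing the inductive hypothesis.

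Next, I would dispatch the contextual rules (\textit{ctx-app-left}, \textit{ctx-app-right}, \textit{ctx-tuple-left}, \textit{ctx-tuple-right}, \textit{ctx-let}, \textit{ctx-force}, \textit{ctx-box}, \textit{ctx-apply-left}, \textit{ctx-apply-right}). In each case, the premise of both derivations is a single-step reduction of the same immediate sub-term; applying the inductive hypothesis to that premise makes the two sub-results coincide, whence so do the enclosing configurations.

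The main obstacle is the \textit{box} rule, whose premise is a whole multi-step reduction sequence $(id_Q, M\vec\ell) \red \dots \red (D,\vec{\ell'})$ rather than a single step. I would first observe that $\freshlabels$, being defined as a function, returns the same pair $(Q,\vec\ell)$ in both derivations, so both sub-sequences start from the common configuration $(id_Q, M\vec\ell)$. Then I would perform an auxiliary induction on the length of this sub-sequence: at each step, the outer inductive hypothesis together with Lemma~\ref{small-step rule mutex} forces the two sequences to take the same step, and since each terminates precisely when the term becomes a tuple of labels $\vec{\ell'}$ (a value that admits no further reduction), the two sequences must terminate at the same point with the same final configuration. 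The boxed circuits produced by the two outer \textit{box} reductions therefore agree, completing the proof.
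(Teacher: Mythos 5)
Your proposal is correct and follows essentially the same route as the paper: invoke Lemma~\ref{small-step rule mutex} to fix the rule, then argue by induction on the derivation that each rule is individually deterministic. The paper dismisses this second part as trivial, whereas you usefully spell out the only genuinely delicate case, the \textit{box} rule, whose multi-step premise you correctly pin down by noting that both sub-sequences start at the same configuration (since $\freshlabels$ is a function), proceed deterministically step by step, and must stop at the same irreducible label tuple.
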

\begin{proof}
We already known by Lemma \ref{small-step rule mutex} that at most one rule can be applied to reduce any given configuration. What is left to do is prove that each rule is deterministic by itself, which is done trivially by induction on the derivation of $(C,M) \red (D,N)$.
\end{proof}

\subsubsection{Evaluation Contexts}

The contextual rules are defined in a recursive fashion, which means that redexes can be reduced at an arbitrary depth within a term through multiple rule applications. In order to be able to reason about all the valid positions where a reduction may occur within a term, we introduce the notion of \textit{evaluation context}. 

\begin{df}[Evaluation Context]
An evaluation context defines where, within a term, we can reduce a sub-term. Formally, an evaluation context is a function defined by the following grammar:
\begin{align*}
E,F::=\;&[\cdot] \mid EM \mid VE \mid \tuple{E,M} \mid \tuple{V,E} \mid \letin{\tuple{x,y}}{E}{N} \\
&\mid \force E \mid \boxt_T E \mid \apply(E, M) \mid \apply(V, E),
\end{align*}

and the following semantics:
\begin{align*}
    [\cdot][M]     &= M\\
    (EN)[M]        &= (E[M])N\\
    (VE)[M]        &= V(E[M])\\
    \tuple{E,N}[M] &= \tuple{E[M],N}\\
    \tuple{V,E}[M] &= \tuple{V,E[M]}\\
    (\letin{\tuple{x,y}}{E}{N})[M]&=\letin{\tuple{x,y}}{E[M]}{N}\\
    (\force E)[M]  &= \force E[M]\\
    (\boxt_T E)[M] &= \boxt_T E[M]\\
    \apply(E,N)[M] &= \apply(E[M],N)\\
    \apply(V,E)[M] &= \apply(V,E[M]).
\end{align*}
\end{df}

\noindent An evaluation context is in fact a function from terms to terms. However, it is perhaps more intuitively to understand an evaluation context as an incomplete term with a hole in which we can stick different sub-terms. The way the grammar for evaluation contexts is designed guarantees that whenever a term is reducible on its own, then it is also reducible when we stick it in an evaluation context, and vice-versa.

\begin{theoremEnd}{thm}[Fundamental Theorem of Evaluation Contexts]\label{context reduction lemma}
For every evaluation context $E$, we have that $(C,M)\red(C',M')$ if and only if $(C,E[M])\red(C',E[M'])$.
\end{theoremEnd}
\begin{proofEnd}
We first prove that if $(C,M)\red(C',M')$, then $(C,E[M])\red(C',E[M'])$. We proceed by induction on the form of $E$:
\begin{itemize}
    \item Case $E\equiv [\cdot]$. In this case $E[M]\equiv M$ and the claim is trivially true.
    \item Case $E\equiv FN$. In this case $E[M] \equiv (F[M])N$. By inductive hypothesis we know that $(C,F[M]) \red (C',F[M'])$, so we conclude $(C,(F[M])N) \red (C',(F[M'])N)$ by the \textit{ctx-app-left} rule.
    
    \item Case $E\equiv VF$. In this case $E[M] \equiv V(F[M])$. By inductive hypothesis we know that $(C,F[M]) \red (C',F[M'])$, so we conclude $(C,V(F[M])) \red (C',V(F[M']))$ by the \textit{ctx-app-right} rule.
    
    \item Case $E\equiv \tuple{F,N}$. In this case $E[M] \equiv \tuple{F[M],N}$. By inductive hypothesis we know that $(C,F[M]) \red (C',F[M'])$, so we conclude $(C,\tuple{F[M],N}) \red (C',\tuple{F[M'],N})$ by the \textit{ctx-tuple-left} rule.
    
    \item Case $E\equiv \tuple{V,F}$. In this case $E[M] \equiv \tuple{V,F[M]}$. By inductive hypothesis we know that $(C,F[M]) \red (C',F[M'])$, so we conclude $(C,\tuple{V,F[M]}) \red (C',\tuple{V,F[M]})$ by the \textit{ctx-tuple-right} rule.
    
    \item Case $E\equiv \letin{\tuple{x,y}}{F}{N}$. In this case $E[M]= \letin{\tuple{x,y}}{F[M]}{N}$. By inductive hypothesis we know that $(C,F[M]) \red (C',F[M'])$, so we conclude $(C,\letin{\tuple{x,y}}{F[M]}{N}) \red (C',\letin{\tuple{x,y}}{F[M']}{N})$ by the \textit{ctx-let} rule.
    
    \item Case $E\equiv \force F$. In this case $E[M]= \force F[M]$. By inductive hypothesis we know that $(C,F[M]) \red (C',F[M'])$, so we conclude $(C, \force F[M]) \red (C', \force F[M'])$ by the \textit{ctx-force} rule.
    
    \item Case $E\equiv \boxt_T F$. In this case $E[M]= \boxt_T F[M]$. By inductive hypothesis we know that $(C,F[M]) \red (C',F[M'])$, so we conclude $(C, \boxt_T F[M]) \red (C', \boxt_T F[M'])$ by the \textit{ctx-box} rule.
    
    \emergencystretch=10pt
    \item Case $E\equiv \apply(F,N)$. In this case $E[M] \equiv \apply(F[M],N)$. By inductive hypothesis we know that $(C,F[M]) \red (C',F[M'])$, so we conclude that $(C,\apply(F[M],N)) \red (C',\apply(F[M'],N))$ by the \textit{ctx-apply-left} rule.
    
    \item Case $E\equiv \apply(V,F)$. In this case $E[M] \equiv \apply(V, F[M])$. By inductive hypothesis we know that $(C,F[M]) \red (C',F[M'])$, so we conclude that $(C,\apply(V, F[M])) \red (C',\apply(V, F[M']))$ by the \textit{ctx-apply-right} rule
\end{itemize}

\noindent Now we prove that if $(C,E[M])\red(C',E[M'])$, then $(C,M)\red(C',M')$. We again proceed by induction on the form of $E$:

\begin{itemize}
    \item Case $E \equiv [\cdot]$. In this case $E[M]\equiv M$ and the claim is trivially true.
    
    \item Case $E\equiv FN$. In this case we have
    $$
    \frac{(C,F[M]) \red (C',F[M'])}
    {(C,(F[M])N) \red (C',(F[M'])N)
    }\textit{ctx-app-left}
    $$
    and by inductive hypothesis on $F$ we immediately conclude $(C,M) \red (C',M')$.
    
    \item Case $E\equiv VF$. In this case we have
    $$
    \frac{(C,F[M]) \red (C',F[M'])}
    {(C,V(F[M])) \red (C',V(F[M']))
    }\textit{ctx-app-right}
    $$
    and by inductive hypothesis on $F$ we immediately conclude $(C,M) \red (C',M')$.
    
    \item Case $E\equiv \tuple{F,N}$. In this case we have
    $$
    \frac{(C,F[M]) \red (C',F[M'])}
    {(C,\tuple{F[M],N}) \red (C',\tuple{F[M'],N})
    }\textit{ctx-tuple-left}
    $$
    and by inductive hypothesis on $F$ we immediately conclude $(C,M) \red (C',M')$.
    
    \item Case $E\equiv \tuple{V,F}$. In this case we have 
    $$
    \frac{(C,F[M]) \red (C',F[M'])}
    {(C,\tuple{V,F[M]}) \red (C',\tuple{V,F[M']})
    }\textit{ctx-tuple-right}
    $$
    and by inductive hypothesis on $F$ we immediately conclude $(C,M) \red (C',M')$.
    
    \item Case $E\equiv \letin{\tuple{x,y}}{F}{N}$. In this case  we have
    $$
    \frac{(C,F[M]) \red (C',F[M'])}
    {(C,\letin{\tuple{x,y}}{F[M]}{N}) \red (C',\letin{\tuple{x,y}}{F[M']}{N})
    }\textit{ctx-let}
    $$
    and by inductive hypothesis on $F$ we immediately conclude $(C,M) \red (C',M')$.
    
    \item Case $E\equiv \force F$. In this case we have
    $$
    \frac{(C,F[M]) \red (C',F[M'])}
    {(C,\force F[M]) \red (C',\force F[M'])
    }\textit{ctx-force}
    $$
    and by inductive hypothesis on $F$ we immediately conclude $(C,M) \red (C',M')$.
    
    \item Case $E\equiv \boxt_T F$. In this case we have
    $$
    \frac{(C,F[M]) \red (C',F[M'])}
    {(C,\boxt_T F[M]) \red (C',\boxt_T F[M'])
    }\textit{ctx-box}
    $$
    and by inductive hypothesis on $F$ we immediately conclude $(C,M) \red (C',M')$.
    
    \item Case $E\equiv \apply(F,N)$. In this case we have
    $$
    \frac{(C,F[M]) \red (C',F[M'])}
    {(C,\apply(F[M],N)) \red (C',\apply(F[M'],N))
    }\textit{ctx-apply-left}
    $$
    and by inductive hypothesis on $F$ we immediately conclude $(C,M) \red (C',M')$.
    
    \item Case $E\equiv \apply(V,F)$. In this case we have
    $$
    \frac{(C,F[M]) \red (C',F[M'])}
    {(C,\apply(V,F[M])) \red (C',\apply(V,F[M']))
    }\textit{ctx-apply-left}
    $$
    and by inductive hypothesis on $F$ we immediately conclude $(C,M) \red (C',M')$.
    
\end{itemize}
\end{proofEnd}

\begin{cor}\label{subterm reducibility}
Suppose we have a term $M$ of the form $E[N]$ for some $E$. Then $(C,M)$ is reducible if and only if $(C,N)$ is reducible.
\end{cor}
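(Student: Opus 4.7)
The plan is to derive the biconditional directly from Theorem \ref{context reduction lemma} (the Fundamental Theorem of Evaluation Contexts), treating each direction separately.

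For the ``if'' direction, I would suppose $(C, N)$ is reducible, so there exist $(C', N')$ with $(C, N) \red (C', N')$. Applying the forward implication of Theorem \ref{context reduction lemma} to the context $E$ yields $(C, E[N]) \red (C', E[N'])$, which shows that $(C, M) = (C, E[N])$ is reducible. This direction is essentially immediate from the theorem.

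For the ``only if'' direction, I would suppose $(C, M) = (C, E[N])$ is reducible, so $(C, E[N]) \red (C', P)$ for some $(C', P)$. The key step is to show that $P$ must be of the form $E[N']$ for some $N'$; once this is established, the backward implication of Theorem \ref{context reduction lemma} yields $(C, N) \red (C', N')$, witnessing reducibility of $(C, N)$. The shape of $P$ is obtained by structural induction on $E$: the base case $E \equiv [\cdot]$ is immediate, and in each inductive case the grammar of evaluation contexts matches exactly one contextual rule of the small-step semantics, so the outer shape of $E$ is preserved and the reduction recurses into the hole, giving the inductive hypothesis directly.

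The main obstacle is the ``only if'' direction, where one must argue that a reduction of $E[N]$ cannot escape the hole and fire at a position outside $E$. This is ultimately guaranteed by the lock-step correspondence between the productions of the evaluation-context grammar and the contextual reduction rules, which is precisely the correspondence already exploited in the proof of Theorem \ref{context reduction lemma}. Because of this, no new machinery is required, and the argument reduces to a mechanical case analysis mirroring the one performed for that theorem.
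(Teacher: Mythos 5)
Your ``if'' direction is exactly right and is all the paper itself ever relies on: Corollary \ref{subterm reducibility} is used (through its contrapositive, Corollary \ref{subterm irreducibility}) only to pass from irreducibility of $(C,E[N])$ to irreducibility of $(C,N)$, and that is precisely the forward implication of Theorem \ref{context reduction lemma} applied to a witness of reducibility of $(C,N)$. You also correctly noticed that the ``only if'' direction does \emph{not} follow immediately from the theorem, because the backward implication of Theorem \ref{context reduction lemma} only applies once you know the reduct of $E[N]$ again has the shape $E[N']$. That observation is more careful than the paper, which states the corollary without proof.

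However, the step you propose to close that gap is false, and so is the ``only if'' direction of the corollary as literally stated. Take $E \equiv [\cdot]W$ with $W$ a value and $N \equiv \lambda x.P$. Then $E[N] \equiv (\lambda x.P)W$ and $(C,(\lambda x.P)W) \red (C,P[W/x])$ by \textit{$\beta$-reduction}; the rule \textit{ctx-app-left} does not fire here (it requires $(C,N)$ to be reducible, but $N$ is a value), so the reduction does not ``recurse into the hole,'' and $P[W/x]$ is in general not of the form $N'W$. Moreover $(C,\lambda x.P)$ is irreducible, so this is a direct counterexample to ``$(C,E[N])$ reducible $\Rightarrow$ $(C,N)$ reducible.'' The same failure occurs for $E \equiv \force\,[\cdot]$ with $N \equiv \lift L$, for $E \equiv \boxt_T\,[\cdot]$ with $N \equiv \lift L$, and so on: whenever the hole is filled by a value that completes a redex of $E$'s outermost constructor, exactly one rule applies by Lemma \ref{small-step rule mutex}, but it is a top-level redex rule rather than the contextual rule your induction assumes. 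Your argument (and the statement's biconditional) is sound only under the additional hypothesis that $N$ is itself reducible or, more generally, that $N$ is not a value; as used in the paper, the corollary should be read as the single implication from reducibility of $(C,N)$ to reducibility of $(C,E[N])$.
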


\begin{cor}\label{subterm irreducibility}
Suppose we have a term $M$ of the form $E[N]$ for some $E$. Then $(C,M)$ is irreducible if and only if $(C,N)$ is irreducible.
\end{cor}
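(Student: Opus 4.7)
The plan is to derive this as an immediate contrapositive of the preceding Corollary \ref{subterm reducibility} (or equivalently, directly from the Fundamental Theorem of Evaluation Contexts, Theorem \ref{context reduction lemma}). Since irreducibility of a configuration is by definition the negation of its reducibility, the biconditional in Corollary \ref{subterm reducibility} immediately dualizes.

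More explicitly, I would argue as follows. Suppose $(C,N)$ is irreducible, i.e.\ there exist no $(C',N')$ with $(C,N) \red (C',N')$. Assume for contradiction that $(C,M) = (C,E[N])$ is reducible, say $(C,E[N]) \red (C',P)$ for some $(C',P)$. Since every reduction rule of the small-step semantics applied to a term of the shape $E[N]$ must either contract a redex inside the hole or leave the surrounding context structure intact, the Fundamental Theorem of Evaluation Contexts forces $P$ to be of the form $E[N']$ with $(C,N) \red (C',N')$, contradicting the irreducibility of $(C,N)$. Conversely, if $(C,E[N])$ is irreducible and $(C,N) \red (C',N')$ were possible, then by Theorem \ref{context reduction lemma} we would have $(C,E[N]) \red (C',E[N'])$, again a contradiction.

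The only subtle point is that the above argument implicitly relies on the fact that the rules of the small-step semantics never produce a reduct whose top-level structure comes from nowhere; this is exactly the content of the second direction of Theorem \ref{context reduction lemma}, so nothing new needs to be shown. Consequently the proof reduces to a single appeal to the theorem together with the logical equivalence $\neg P \Leftrightarrow \neg Q$ whenever $P \Leftrightarrow Q$.

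I do not expect any real obstacle here: the statement is purely a logical restatement of Corollary \ref{subterm reducibility}, and the heavy lifting has already been done in the proof of Theorem \ref{context reduction lemma}. The one thing to be careful about when writing the proof formally is to recall that ``irreducible'' means there is \emph{no} pair $(C',M')$ with $(C,M) \red (C',M')$, so the contrapositive must quantify correctly over all such potential successor configurations.
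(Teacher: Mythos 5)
There is a genuine gap in one direction of your argument. The step ``the Fundamental Theorem of Evaluation Contexts forces $P$ to be of the form $E[N']$'' is not what Theorem \ref{context reduction lemma} says, and it is in fact false. The theorem is a biconditional between two reductions of a \emph{prescribed shape}: it tells you that \emph{if} $(C,E[N])\red(C',E[N'])$ \emph{then} $(C,N)\red(C',N')$; it does not tell you that every reduct of $(C,E[N])$ has the form $(C',E[N'])$. The context structure can very well be consumed by the step. Concretely, take $E\equiv \force\,[\cdot]$ and $N\equiv \lift M_0$: then $(C,N)$ is irreducible because $\lift M_0$ is a value, yet $(C,E[N])=(C,\force(\lift M_0))\red(C,M_0)$ by the \textit{force} rule, and $M_0$ need not be of the form $\force N'$. (The same happens with $E\equiv[\cdot]V$ and $N\equiv\lambda x.M_0$ via \textit{$\beta$-reduction}.) So the implication ``$(C,N)$ irreducible $\implies$ $(C,E[N])$ irreducible'' cannot be obtained by your contradiction argument, and indeed it fails whenever $N$ is a value whose surrounding context completes a redex. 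The subtlety you flagged and then dismissed is exactly the point at which the proof breaks.

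The converse direction of your proof is fine: if $(C,N)\red(C',N')$ then $(C,E[N])\red(C',E[N'])$ by the forward half of Theorem \ref{context reduction lemma}, so $(C,E[N])$ irreducible implies $(C,N)$ irreducible. Notably, this is the only direction the paper ever invokes (e.g., throughout the proof of the Progress theorem), and the same defect already afflicts the statement of Corollary \ref{subterm reducibility} as written; the ``only if'' half of both corollaries would need an additional hypothesis on $N$ (for instance that $N$ is not a value, or is itself a proto-redex) to be salvageable. You should either restrict the statement accordingly or prove only the sound direction rather than appealing to a shape-preservation property that the Fundamental Theorem does not provide.
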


In the light of the Fundamental Theorem of Evaluation Contexts, we can easily understand that if we have a term of the form $E[M]$, where $M$ is a redex, then $E[M]\not\equiv F[N]$ for any other evaluation context $F$ and redex $N$, otherwise we could choose to reduce $M$ or $N$, leading to different configurations and breaking determinism. In fact, we prove that this result does not only hold for redexes, but more generally for \textit{proto-redexes}. Intuitively, a proto-redex is a redex without constraints on the explicit form of the values that make it up. More formally, we give the following definition.

\begin{df}[Proto-redex]
A term is said to be a \emph{proto-redex} when it can be generated by the following grammar:
$$
P::= VW \mid \letin{\tuple{x,y}}{V}{N} \mid \force V \mid \boxt_T V \mid \apply(V,W).
$$
\end{df}

\begin{theoremEnd}{lem}\label{context identity lemma}
Suppose $M$ and $N$ are proto-redexes. If $E[M]\equiv E'[N]$, then $E\equiv E'$.
\end{theoremEnd}
\begin{proofEnd}
By induction on the form of $E$:
\begin{itemize}
    \item Case $E \equiv [\cdot]$. In this case $E[M]\equiv M \equiv E'[N]$. Suppose $E' \not\equiv [\cdot]$. We proceed by cases on $M$:
    \begin{itemize}
        \item $M\equiv VW$. In this case the only possibilities are $E'\equiv [\cdot]W$ and $E'\equiv V[\cdot]$. However, both would imply that $N$ is a value, which is impossible since $N$ is a proto-redex.
        
        \item $M\equiv \letin{\tuple{x,y}}{V}{P}$. In this case the only possibility is $E'\equiv \letin{\tuple{x,y}}{[\cdot]}{P}$. However, this would imply that $N$ is a value, which is impossible since $N$ is a proto-redex.
        
        \item $M\equiv \force(V)$. In this case the only possibility is $E'\equiv \force \,[\cdot]$. However, this would imply that $N$ is a value, which is impossible since $N$ is a proto-redex.
        
        \item $M\equiv \boxt_T(V)$. In this case the only possibility is $E'\equiv \boxt_T \,[\cdot]$. However, this would imply that $N$ is a value, which is impossible since $N$ is a proto-redex.
        
        \item $M\equiv \apply(V,W)$. In this case the only possibilities are $E'\equiv \apply([\cdot],W)$ and $E'\equiv \apply(V,[\cdot])$. However, both would imply that $N$ is a value, which is impossible since $N$ is a proto-redex.
    \end{itemize}
    Since all possibilities lead to a contradiction, we conclude that $E'\equiv [\cdot]\equiv E$.
    
    \item $E \equiv FP$. In this case $E[M]\equiv (F[M])P \equiv E'[N]$ and the two possibilities are $E'\equiv E''P$ and $E' \equiv (F[M])E''$. In the former case we have $E''[N]\equiv F[M]$, so by inductive hypothesis we get $E''\equiv F$ and conclude $E'\equiv FP \equiv E$. The latter case would require that $F[M]$ be a value, which is impossible since $M$ is a proto-redex.
    
    \item $E \equiv VF$. In this case $E[M]\equiv V(F[M]) \equiv E'[N]$ and the two possibilities are $E'\equiv E''(F[M])$ and $E' \equiv VE''$. The former would require that $E''[N]$ be a value, which is impossible since $N$ is a proto-redex. In the latter case we have $E''[N]\equiv F[M]$, so by inductive hypothesis we get $E''\equiv F$ and conclude $E'\equiv VF \equiv E$.
    
    \item $E \equiv \tuple{F,P}$. In this case $E[M]\equiv \tuple{F[M],P} \equiv E'[N]$ and the two possibilities are $E'\equiv \tuple{E'',P}$ and $E' \equiv \tuple{F[M],E''}$. In the former case we have $E''[N]\equiv F[M]$, so by inductive hypothesis we get $E''\equiv F$ and conclude $E'\equiv \tuple{F,P} \equiv E$. The latter case would require that $F[M]$ be a value, which is impossible since $M$ is a proto-redex.
    
    \item $E \equiv \tuple{V,F}$. In this case $E[M]\equiv \tuple{V,F[M]} \equiv E'[N]$ and the two possibilities are $E'\equiv \tuple{E'',F[M]}$ and $E' \equiv \tuple{V,E''}$. The former would require that $E''[N]$ be a value, which is impossible since $N$ is a proto-redex. In the latter case we have $E''[N]\equiv F[M]$, so by inductive hypothesis we get $E''\equiv F$ and conclude $E'\equiv \tuple{V,F} \equiv E$.
    
    \item $E \equiv \letin{\tuple{x,y}}{F}{P}$. In this case $E[M] \equiv \letin{\tuple{x,y}}{F[M]}{P}\equiv E'[N]$ and the only possibility is $E'\equiv \letin{\tuple{x,y}}{E''}{P}$. This implies $E''[N]\equiv F[M]$, so by inductive hypothesis we get $E'' \equiv F$ and conclude $E' \equiv \letin{\tuple{x,y}}{F}{P}\equiv E$.
    
    \item $E \equiv \force F$. In this case $E[M] \equiv \force F[M]\equiv E'[N]$ and the only possibility is $E'\equiv \force E''$. This implies $E''[N]\equiv F[M]$, so by inductive hypothesis we get $E'' \equiv F$ and conclude $E' \equiv \force F \equiv E$.
    
    \item $E \equiv \boxt_T F$. In this case $E[M] \equiv \boxt_T F[M]\equiv E'[N]$ and the only possibility is $E'\equiv \boxt_T E''$. This implies $E''[N]\equiv F[M]$, so by inductive hypothesis we get $E'' \equiv F$ and conclude $E' \equiv \boxt_T F \equiv E$.
    
    \item $E \equiv \apply(F,P)$. In this case $E[M]\equiv \apply(F[M],P) \equiv E'[N]$ and the two possibilities are $E'\equiv \apply(E'',P)$ and $E' \equiv \apply(F[M],E'')$. In the former case we have $E''[N]\equiv F[M]$, so by inductive hypothesis we get $E''\equiv F$ and conclude $E'\equiv \apply(F,P) \equiv E$. The latter case would require that $F[M]$ be a value, which is impossible since $M$ is a proto-redex.
    
    \item $E \equiv \apply(V,F)$. In this case $E[M]\equiv \apply(V,F[M]) \equiv E'[N]$ and the two possibilities are $E'\equiv \apply(E'',F[M])$ and $E' \equiv \apply(V,E'')$. The former would require that $E''[N]$ be a value, which is impossible since $N$ is a proto-redex. In the latter case we have $E''[N]\equiv F[M]$, so by inductive hypothesis we get $E''\equiv F$ and conclude $E'\equiv \apply(V,F) \equiv E$.
    
\end{itemize}
\end{proofEnd}

\begin{lem}\label{term identity lemma}
If $E[M]\equiv E[N]$, then $M\equiv N$
\end{lem}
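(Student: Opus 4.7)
The plan is a straightforward induction on the structure of the evaluation context $E$, exploiting the fact that every production in the grammar for $E$ places the hole $[\cdot]$ in a uniquely identifiable position within the resulting term.

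In the base case $E \equiv [\cdot]$, by definition of substitution $E[M] \equiv M$ and $E[N] \equiv N$, so the hypothesis $E[M] \equiv E[N]$ gives $M \equiv N$ immediately. For each inductive case, $E$ has the form $\mathcal{K}(F)$ for some smaller context $F$ and some outer syntactic wrapper $\mathcal{K}$ (for instance $\mathcal{K}(F) = FP$, or $\mathcal{K}(F) = \tuple{V,F}$, or $\mathcal{K}(F) = \boxt_T F$, and so on). By the clauses of the substitution definition for evaluation contexts, $E[M]$ and $E[N]$ then share the same outer constructor, with the two sides differing only inside the sub-term produced by $F$. Structural equality of terms therefore forces $F[M] \equiv F[N]$, and the inductive hypothesis yields $M \equiv N$.

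Concretely, in the case $E \equiv FP$ we have $E[M] \equiv (F[M])P$ and $E[N] \equiv (F[N])P$; the assumed equality of these two application terms forces equality of the function positions, i.e.\ $F[M] \equiv F[N]$, and the induction applies. The remaining cases $E \equiv VF$, $\tuple{F,P}$, $\tuple{V,F}$, $\letin{\tuple{x,y}}{F}{P}$, $\force F$, $\boxt_T F$, $\apply(F,P)$, $\apply(V,F)$ are all handled in exactly the same way: the outer constructor and any sibling sub-term determined by $E$ are identical on the two sides of the equation, so the hole-bearing sub-term $F[\cdot]$ yields two identical terms to which the inductive hypothesis applies.

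There is no real obstacle here; the proof is purely syntactic bookkeeping. The only thing worth being careful about is ensuring that the grammar of evaluation contexts really does put the hole in a syntactically unambiguous position within $E[M]$ — but this is immediate from the substitution clauses, since each clause simply copies a surrounding constructor around the recursive call $F[M]$. Consequently no case analysis on $M$ or $N$ is needed, and the induction closes uniformly across all productions of $E$.
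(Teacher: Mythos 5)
Your proof is correct and follows exactly the route the paper indicates: a structural induction on $E$, with the base case immediate and each inductive case reducing to the hole-bearing sub-term via injectivity of the outer constructor. The paper simply declares this induction trivial, so your write-up is just a fuller version of the same argument.
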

\begin{proof}
The claim follows naturally from the definition of evaluation context. The proof is trivial by induction on $E$.
\end{proof}

\begin{prop}[Context Exclusivity]
\label{context exclusion theorem proto}
Suppose $M$ and $N$ are proto-redexes. If $M\not\equiv N$, then $E[M] \not\equiv F[N]$, for any $E,F$.
\end{prop}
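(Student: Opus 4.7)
The plan is to prove the contrapositive: suppose that $E[M] \equiv F[N]$ for some evaluation contexts $E$ and $F$, where $M$ and $N$ are proto-redexes; we then want to show $M \equiv N$.

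First, I would invoke Lemma \ref{context identity lemma} directly. Since both $M$ and $N$ are proto-redexes and $E[M] \equiv F[N]$, that lemma yields $E \equiv F$. This is the substantive step; all of the case analysis on the shape of the enclosing context has already been absorbed into the proof of Lemma \ref{context identity lemma}, so there is nothing to re-derive here.

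Next, once we know the contexts agree, we apply Lemma \ref{term identity lemma} to $E[M] \equiv E[N]$ and conclude $M \equiv N$. This contradicts the hypothesis $M \not\equiv N$, so no such $E$ and $F$ can exist, proving the proposition.

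There is no real obstacle: the proposition is essentially a corollary packaging Lemma \ref{context identity lemma} and Lemma \ref{term identity lemma} together into a single usable statement about proto-redexes occupying unique positions. The only thing to check carefully is that the hypotheses of both lemmas are satisfied — namely that both $M$ and $N$ are proto-redexes (given) and that the outer terms coincide syntactically (given by the assumption we are contradicting). Thus the proof is a short chain of two citations rather than a fresh induction.
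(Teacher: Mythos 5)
Your proof is correct and matches the paper's own argument exactly: both prove the contrapositive by first applying Lemma \ref{context identity lemma} to obtain $E\equiv F$, then Lemma \ref{term identity lemma} to obtain $M\equiv N$. No further comment is needed.
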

\begin{proof}
Suppose $M$ and $N$ are proto-redexes. If $E[M] \equiv F[N]$ for some $E,F$, then by Lemma \ref{context identity lemma} we get $E\equiv F$. Because $E[M] \equiv F[N] \equiv E[N]$, by Lemma \ref{term identity lemma} we also get $M\equiv N$, so we know that $E[M]\equiv F[N]$ entails $M\equiv N$. From this we conclude that if $M \not\equiv N$, then $E[M] \not\equiv F[N]$ for any $E,F$.
\end{proof}

\begin{cor}\label{context exclusion lemma redex}
Suppose $M$ and $N$ are redexes. If $M\not\equiv N$, then $E[M]\not\equiv F[N]$, for any $E,F$.
\end{cor}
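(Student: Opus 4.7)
The plan is to derive this corollary directly from Proposition \ref{context exclusion theorem proto} by observing that every redex is in particular a proto-redex. So the main work is just to exhibit the inclusion of redex forms into the proto-redex grammar.

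First I would enumerate the shapes a redex can take, reading them off the redex-operating rules of the small-step semantics. These are exactly $(\lambda x.M)V$, $\letin{\tuple{x,y}}{\tuple{V,W}}{M}$, $\force(\lift M)$, $\boxt_T(\lift M)$, and $\apply((\vec\ell,D,\vec{\ell'}), \vec k)$. For each one, I would check that it matches a clause of the proto-redex grammar
$$P::= VW \mid \letin{\tuple{x,y}}{V}{N} \mid \force V \mid \boxt_T V \mid \apply(V,W),$$
using only that $\lambda x.M$, $\tuple{V,W}$, $\lift M$, $(\vec\ell,D,\vec{\ell'})$ and $\vec k$ are all values (immediate from the value grammar). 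This establishes that every redex is a proto-redex.

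Given this, the corollary is immediate: if $M$ and $N$ are redexes with $M\not\equiv N$, then they are in particular proto-redexes with $M\not\equiv N$, so by Proposition \ref{context exclusion theorem proto} we conclude $E[M]\not\equiv F[N]$ for any evaluation contexts $E,F$.

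There is no real obstacle here; the only thing to be careful about is confirming that the specific constraints in the redex clauses (e.g.\ that the left component of $\letin{\tuple{x,y}}{\cdot}{M}$ be specifically a pair $\tuple{V,W}$, not merely a value) do not prevent the redex from satisfying the looser proto-redex shape. Since the proto-redex clauses only require values in those positions, the stronger redex constraints are no obstacle, and the reduction to the proposition goes through without further case analysis.
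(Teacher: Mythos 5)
Your proof is correct and takes essentially the same approach as the paper's: both reduce the corollary to Proposition \ref{context exclusion theorem proto} by noting that every redex is a proto-redex, obtained by instantiating the generic values in the proto-redex grammar with the specific value forms required by the redex rules. Your version merely spells out the enumeration of redex shapes more explicitly.
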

\begin{proof}
The claim follows naturally from the fact that every redex is also a proto-redex. This is obvious, as every redex can be obtained from a production of the proto-redex grammar by instantiating the generic values occurring in the latter with the explicit forms required by the former.
\end{proof}

\paragraph{} We give one last result about evaluation contexts: if we ``inject'' an evaluation context into a second evaluation context, the result is still an evaluation context.

\begin{theoremEnd}{prop}[Context Injection]
\label{context propagation}
Suppose we have a term $M$ of the form $E[N]$ for some $E$. If $N$ is of the form $E'[L]$ for some $E'$, then $M$ is of the form $E''[L]$ for some $E''$.
\end{theoremEnd}
\begin{proofEnd}
By induction on the form of $E$:
\begin{itemize}
    \item Case $E\equiv[\cdot]$. In this case $M\equiv N$ and the claim is trivially true.
    
    \item Case $E\equiv FP$. In this case $M\equiv (F[N])P$. By inductive hypothesis we know that $F[N]$ is of the form $F'[L]$ for some $F'$, so $M$ is of the form $E''[L]$ for $E''\equiv F'P$.
    
    \item Case $E\equiv VF$. In this case $M\equiv V(F[N])$. By inductive hypothesis we know that $F[N]$ is of the form $F'[L]$ for some $F'$, so $M$ is of the form $E''[L]$ for $E''\equiv VF'$.
    
    \item Case $E\equiv \tuple{F,P}$. In this case $M\equiv \tuple{F[N],P}$. By inductive hypothesis we know that $F[N]$ is of the form $F'[L]$ for some $F'$, so $M$ is of the form $E''[L]$ for $E''\equiv \tuple{F',P}$.
    
    \item Case $E\equiv \tuple{V,F}$. In this case $M\equiv \tuple{V,F[N]}$. By inductive hypothesis we know that $F[N]$ is of the form $F'[L]$ for some $F'$, so $M$ is of the form $E''[L]$ for $E''\equiv \tuple{V,F'}$.
    
    \item Case $E\equiv \letin{\tuple{x,y}}{F}{P}$. In this case $M\equiv \letin{\tuple{x,y}}{F[N]}{P}$. By inductive hypothesis we know that $F[N]$ is of the form $F'[L]$ for some $F'$, so $M$ is of the form $E''[L]$ for $E''\equiv \letin{\tuple{x,y}}{F'}{P}$.
    
    \item Case $E\equiv \force F$. In this case $M\equiv \force F[N]$. By inductive hypothesis we know that $F[N]$ is of the form $F'[L]$ for some $F'$, so $M$ is of the form $E''[L]$ for $E''\equiv \force F'$.
    
    \item Case $E\equiv \boxt_T F$. In this case $M\equiv \boxt_T F[N]$. By inductive hypothesis we know that $F[N]$ is of the form $F'[L]$ for some $F'$, so $M$ is of the form $E''[L]$ for $E''\equiv \boxt_T F'$.
    
    \item Case $E\equiv \apply(F,P)$. In this case $M\equiv \apply(F[N],P)$. By inductive hypothesis we know that $F[N]$ is of the form $F'[L]$ for some $F'$, so $M$ is of the form $E''[L]$ for $E''\equiv \apply(F',P)$.
    
    \item Case $E\equiv \apply(V,F)$. In this case $M\equiv \apply(V,F[N])$. By inductive hypothesis we know that $F[N]$ is of the form $F'[L]$ for some $F'$, so $M$ is of the form $E''[L]$ for $E''\equiv \apply(V,F')$.
    
\end{itemize}
\end{proofEnd}

\begin{cor}\label{context propagation corollary}
Suppose we have a term $M$ of the form $E[N]$ for some $E$. If $M$ is not of the form $E'[L]$ for any $E'$, then $N$ is not of the form $E''[L]$ for any $E''$.
\end{cor}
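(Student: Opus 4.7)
The plan is to prove the corollary as a direct contrapositive of Proposition \ref{context propagation} (Context Injection), which was just established. Concretely, I would argue as follows: suppose, toward a contradiction, that $N$ \emph{is} of the form $E''[L]$ for some evaluation context $E''$. Then the hypotheses of Proposition \ref{context propagation} are satisfied, since $M \equiv E[N]$ and $N \equiv E''[L]$, so we can conclude that $M$ is of the form $\widetilde{E}[L]$ for some evaluation context $\widetilde{E}$. But this directly contradicts our assumption that $M$ is not of the form $E'[L]$ for any $E'$.

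The only non-routine point to double-check is that the witness $\widetilde{E}$ produced by Proposition \ref{context propagation} is really an evaluation context in the sense of the grammar and not some other kind of context, but this is guaranteed by the statement of that proposition, whose inductive construction walks through each production of the evaluation-context grammar and yields a new context built from the same productions (e.g., $F'P$, $VF'$, $\tuple{F',P}$, and so on). Consequently, no additional case analysis is required here; the corollary is a one-line consequence.

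I do not anticipate any substantive obstacle: the only subtlety is that the quantification over $E'$ in the hypothesis must match the quantification over $\widetilde{E}$ produced by the proposition, and since both range over arbitrary evaluation contexts, the implication goes through immediately. I would therefore present the proof as a two-sentence contrapositive argument referencing Proposition \ref{context propagation} rather than redoing the induction on $E$.
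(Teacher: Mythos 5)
Your proof is correct and matches the paper's intent exactly: the corollary is stated immediately after Proposition \ref{context propagation} precisely because it is its contrapositive, and the paper offers no further argument. Your two-sentence contrapositive reading is all that is needed.
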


\subsubsection{Convergence, Deadlock and Divergence}
\label{small-step convergence, deadlock and divergence}

Now that we have (deterministic) small-step semantics and evaluation contexts, we can start distinguishing between \textit{converging}, \textit{deadlocking} and \textit{diverging} configurations. Informally, a configuration converges if its evaluation terminates successfully returning a value, it goes into deadlock if it gets stuck without returning a value, and it diverges if its evaluation does not terminate at all. The first two definitions are standard inductive definitions.

\begin{df}[Converging Small-step Configuration]
Let $\converges$ be the smallest unary relation over small-step configurations such that:
\begin{enumerate}
    \item For every circuit $C$ and value $V$, $(C,V)\converges$,
    \item If $(C,M)\red (D,N)$ and $(D,N)\converges$, then $(C,M)\converges$.
\end{enumerate}
We say that a configuration $(C,M)$ is \emph{converging} when $(C,M)\converges$.
\end{df}

\begin{df}[Deadlocking Small-step Configuration]
\label{small-step deadlocking}
Let $\deadlocks$ be the smallest unary relation over small-step configurations such that:
\begin{enumerate}
    \item If $(C,M)$ is irreducible and $M$ is neither of the form $E[\boxt_T(\lift N)]$, nor a value, then $(C,M)\deadlocks$.
    
    \item If $(C,M)\red (D,N)$ and $(D,N)\deadlocks$, then $(C,M)\deadlocks$.
    
    \emergencystretch=15pt
    \item If $(C,M)$ is of the form $(C,E[\boxt_T(\lift N)])$ and $(id_{Q},N\vec\ell)\deadlocks$, where $(Q,\vec\ell)=\freshlabels(N,T)$, then $(C,M) \deadlocks$.
    
    \item If $(C,M)$ is of the form $(C,E[\boxt_T(\lift N)])$ and $(id_{Q},N\vec\ell) \red^* (D,V)$, where $(Q,\vec\ell)=\freshlabels(N,T)$, and $V$ is not a label tuple, then $(C,M) \deadlocks$.
\end{enumerate}
We say that a configuration $(C,M)$ \emph{goes into deadlock} when $(C,M)\deadlocks$.
\end{df}

\noindent The last definition, the one for diverging configurations, is co-inductive. The intuitive difference between the two kinds of definition is that whereas an element belongs in an inductive set if there is a good reason for it to do so, an element belongs in a co-inductive set if there is no good reason for it not to. More practically, an inductive definition starts with the empty set and states the properties that an element must satisfy in order to get into the set, whereas a co-inductive definition starts with the universal set and states the properties that an element which is \textit{already} in the set must satisfy in order not to get kicked out.

\begin{df}[Diverging Small-step Configuration]
Let $\diverges$ be the largest unary relation over small-step configurations such that whenever $(C,M)\diverges$ either one of the following is true:
\begin{enumerate}
    \item $(C,M)\red(D,N)$ and $(D,N)\diverges$,
    
    \item $(C,M)$ is a configuration of the form $(C,E[\boxt_T(\lift N)])$ and $(id_Q,N\vec\ell)\diverges$, where $(Q,\vec\ell)=\freshlabels(N,T).$
\end{enumerate}
We say that a configuration $(C,M)$ is \emph{diverging} when $(C,M)\diverges$.
\end{df}

\noindent What this definition says is that every configuration is diverging unless it is irreducible (i.e. a normal form or a deadlocked form) or reduces to a configuration that does not diverge (directly or as part of a sub-derivation introduced by box). Naturally, we expect convergence, deadlock and divergence to be mutually exclusive. This expectation is formalized by the following proposition.

\begin{theoremEnd}{prop}\label{small-step mutex}
The relations $\converges, \deadlocks$ and $\diverges$ are mutually exclusive over small-step configurations. That is, for every small-step configuration $(C,M)$, the following are true:
\begin{enumerate}
    \item If $(C,M)\converges$, then $(C,M)\ndeadlocks$,
    \item If $(C,M)\converges$, then $(C,M)\ndiverges$,
    \item If $(C,M)\deadlocks$, then $(C,M)\ndiverges$.
\end{enumerate}
\end{theoremEnd}
\begin{proofEnd}
We prove each claim separately:
\begin{enumerate}
    \item We proceed by induction on $(C,M)\converges$:
    \begin{itemize}
        \item Case of $M \equiv V$. Since $(C,V)$ is irreducible, but $V$ is a value and (as a consequence) cannot be of the form $E[\boxt_T(\lift N)]$, there is no way for $(C,V)$ to go into deadlock, so we conclude $(C,V)\ndeadlocks$.
        
        \item Case of $(C,M) \red (D,N)$ and $(D,N)\converges$. Since $(C,M)$ is reducible, it must be that $(C,M) \red (D,N)$ ($\red$ is deterministic) and $(D,N)\deadlocks$ in order for $(C,M)$ to go into deadlock. However, by inductive hypothesis we know that $(D,N)\ndeadlocks$, so this is impossible and we conclude $(C,M)\ndeadlocks$.
    \end{itemize}
    
    \item We proceed by induction on $(C,M)\converges$:
    \begin{itemize}
        \item Case of $M \equiv V$. Since $(C,V)$ is irreducible, $V$ is a value and (as a consequence) cannot be of the form $E[\boxt_T(\lift N)]$, there is no way for $(C,V)$ to diverge, so we conclude $(C,V)\ndiverges$.
        
        \item Case of $(C,M) \red (D,N)$ and $(D,N)\converges$. Since $(C,M)$ is reducible, it must be that $(C,M) \red (D,N)$ ($\red$ is deterministic) and $(D,N)\diverges$ in order for $(C,M)$ to diverge. However, by inductive hypothesis we know that $(D,N)\ndiverges$, so this is impossible and we conclude $(C,M)\ndiverges$.
    \end{itemize}
    
    \item We proceed by induction on $(C,M)\deadlocks$:
    \begin{itemize}
        \item Case in which $(C,M)$ is irreducible, $M\not\equiv V$ and $M \not\equiv E[\boxt_T(\lift N)]$. Since $(C,M)$ is irreducible and $M$ is not of the form $E[\boxt_T(\lift N)]$, there is no way for $(C,M)$ to diverge, so we conclude $(C,M)\ndiverges$.
        
        \item Case of $(C,M) \red (D,N)$ and $(D,N)\deadlocks$. Since $(C,M)$ is reducible, it must be that $(C,M) \red (D,N)$ and $(D,N)\diverges$ in order for $(C,M)$ to diverge. However, by inductive hypothesis we know that $(D,N)\ndiverges$, so this is impossible and we conclude $(C,M)\ndiverges$.
        
        \item Case of $M \equiv E[\boxt_T(\lift N)]$ and $(id_Q,N\vec\ell)\deadlocks$. Since $(C,M)$ is irreducible, it must be that $(id_Q,N\vec\ell)\diverges$ in order for $(C,M)$ to diverge. However, by inductive hypothesis we know that $(id_Q,N\vec\ell)\ndiverges$, so this is impossible and we conclude $(C,M)\ndiverges$.

        \item Case of $M \equiv E[\boxt_T(\lift N)]$, $(id_Q,N\vec\ell) \red^* (D,V)$ and $V$ is not a label tuple, where $(Q,\vec\ell)=\freshlabels(N,T)$. Since $(C,M)$ is irreducible, it must be that $(id_Q,N\vec\ell)\diverges$ in order for $(C,M)$ to diverge. However, because $(id_Q,N\vec\ell) \red^* (D,V)$ (that is, $(id_Q,N\vec\ell)\converges$), we know by claim 2 that $(id_Q,N\vec\ell)\ndiverges$, so this is impossible and we conclude $(C,M)\ndiverges$.
    \end{itemize}
\end{enumerate}
\end{proofEnd}

\paragraph{}Just as naturally, we expect the same relations to saturate the space of small-step configurations. That is, the three relations are defined in such a way that every configuration either converges or goes into deadlock or diverges.

\begin{theoremEnd}{prop}\label{small-step totality}
Every small-step configuration $(C,M)$ either converges, goes into deadlock or diverges, that is:
$$(C,M)\converges \vee (C,M)\deadlocks \vee (C,M)\diverges.$$
\end{theoremEnd}
\begin{proofEnd}
Let $\clen$ be a function that, given a small-step configuration, returns the number of reduction steps that can be taken starting from that configuration, either at the top-level or in any of the sub-reductions introduced by a boxing operation. The $\clen$ function is defined as the least fixed point of the following equation on functions from small-step configurations to $\mathbb{N}^\infty$:
\begin{align*}
    \clen_h(C,M) &= \begin{cases}
        0 & \textnormal{if $(C,M)$ is irreducible,}\\
        \clen(D,N) + 1 & \textnormal{if $(C,M) \red (D,N)$.}
    \end{cases}\\
    \clen_v(C,M) &= \begin{cases}
        0 & \textnormal{if $M \not \equiv E[\boxt_T(\lift N)],$}\\
        \clen(id_Q,N\vec\ell) + 1 & \textnormal{if $M \equiv E[\boxt_T(\lift N)].$}
    \end{cases}\\
    \clen(C,M) &= \clen_h(C,M) + \clen_v(C,M).
\end{align*}
Where $(Q,\vec\ell)=\freshlabels(N,T)$. If $\clen(C,M)=\infty$, that means that either $(C,M) \red (D,N)$ and $\clen(D,N)=\infty$ or $M\equiv E[\boxt_T(\lift N)]$ and $\clen(id_Q,N\vec\ell)=\infty$. Because $\diverges$ is defined as the largest relation such that $(C,M)\diverges$ implies either $(C,M) \red (D,N)$ and $(D,N)\diverges$ or $M\equiv E[\boxt_T(\lift N)]$ and $(id_Q,N\vec\ell)\diverges$, we conclude that $(C,M)\diverges$. On the other hand, if $\clen(C,M)\in\mathbb{N}$, we proceed by induction on $\clen(C,M)$:
        \begin{itemize}
            \item Case $\clen(C,M) = 0$. In this case $(C,M)$ is irreducible and $M\not\equiv E[\boxt_T(\lift N)]$. If $M$ is a value, then we trivially conclude $(C,M)\converges$. Otherwise, if $M$ is not a value, we trivially conclude $(C,M)\deadlocks$.
            
            \item Case $\clen(C,M) = n+1$. We distinguish two cases:
            \begin{itemize}
            \emergencystretch=10pt
                \item If $\clen_h(C,M)=0$, then $M\equiv E[\boxt_T(\lift N)]$ and $\clen(id_Q,N\vec\ell)=n$, where $(Q,\vec\ell)=\freshlabels(N,T)$. By inductive hypothesis we know that either $(id_Q,N\vec\ell)\converges$ or $(id_Q,N\vec\ell)\deadlocks$ (we exclude $(id_Q,N\vec\ell)\diverges$ as it would entail $\clen(C,E[\boxt_T(\lift N)])=\infty\not\in\mathbb{N}$). If $(id_Q,N\vec\ell)\deadlocks$ then we immediately conclude $(C,E[\boxt_T(\lift N)])\deadlocks$. Otherwise, if $(id_Q,N\vec\ell)\converges$ we know that $(id_Q,N\vec\ell)\red^*(D,V)$. If $V$ were a label tuple $\vec{\ell'}$ we would have $(C,\boxt_T(\lift N))\red(C,(\vec\ell,D,\vec\ell'))$ and, by Theorem \ref{context reduction lemma}, $(C,E[\boxt_T(\lift N)])\red(C,E[(\vec\ell,D,\vec\ell')])$, which would contradict the hypothesis that $\clen_h(C,E[\boxt_T(\lift N)])=0$. As a result, $V$ is not a label tuple and we conclude $(E[\boxt_T(\lift N)])\deadlocks$.
                
                \item If $\clen_h(C,M)>0$, then $(C,M)\red(D,N)$ and $\clen(D,N)\leq n$. By inductive hypothesis we know that either $(D,N)\converges$ or $(D,N)\deadlocks$ (we exclude $(D,N)\diverges$ as it would entail $\clen(C,M)=\infty\not\in\mathbb{N}$). If $(D,N)\converges$ then we immediately conclude $(C,M)\converges$. Otherwise, if $(D,N)\deadlocks$ we immediately conclude $(C,M)\deadlocks$.
            \end{itemize}
        \end{itemize}
\end{proofEnd}

\subsubsection{Equivalence with the Big-Step Semantics}

In order for our small-step semantics to be of any use, we must prove that it behaves like the original big-step semantics. The following results guarantee that our small-step semantics is ultimately equivalent to the big step semantics as far as converging computations are concerned. That is, that $(C,M)\Downarrow (D,N)$ implies $(C,M)\red^*(D,N)$ and vice-versa.

\begin{theoremEnd}{lem}\label{big-step to small-step correspondence}
If $(C,M)\Downarrow (D,V)$, then $(C,M) \red^* (D,V)$.
\end{theoremEnd}
\begin{proofEnd}
We proceed by induction on the derivation of $(C,M)\Downarrow (D,V)$:
\begin{itemize}
    \item Case $M \equiv \vec\ell$ and
    $$
    \frac{\void}
    {(C,\vec\ell) \Downarrow (C,\vec\ell)}
    $$
    In this case the claim is trivially true by the reflexivity of $\red^*$.
    
    \item Case $M \equiv \lambda x.N$ and
    $$
    \frac{\void}
    {(C,\lambda x.N) \Downarrow (C,\lambda x.N)}
    $$
    In this case the claim is trivially true by the reflexivity of $\red^*$.
    
    \item Case $M \equiv NP$ and
    $$
    \frac{(C,N)\Downarrow(C_1,\lambda x.L) \quad (C_1, P) \Downarrow (C_2, W) \quad (C_2,L[W/x]) \Downarrow (D,V)}
    {(C,NP) \Downarrow (D,V)}
    $$
    In this case by inductive hypothesis we know $(C,N)\red^* (C_1,\lambda x.L)$ and $(C_1, P) \red^* (C_2, W)$, so by a finite number of applications of the \textit{ctx-app-left} and \textit{ctx-app-right} rules we get $(C,NP)\red^* (C_1,(\lambda x.L)P)$ and $(C_1,(\lambda x.L)P) \red^* (C_2,(\lambda x.L)W)$, respectively. Next, by the \textit{$\beta$-reduction} rule we get $(C_2,(\lambda x.L)W) \red (C_2, L[W/x])$. By inductive hypothesis we also know $(C_2,L[W/x]) \red^* (D,V)$, so by the transitivity of $\red^*$ we conclude $(C,NP) \red^* (D,V)$.
    
    \item Case $M \equiv \letin{\tuple{x,y}}{N}{P}$ and
    $$
    \frac{(C,N)\Downarrow(C_1,\tuple{W_1,W_2}) \quad (C_1, P[W_1/x][W_2/y]) \Downarrow (D,V)}
    {(C,\letin{\tuple{x,y}}{N}{P}) \Downarrow (D,V)}
    $$
    \emergencystretch=15pt
    In this case by inductive hypothesis we know $(C,N)\red^* (C_1,\tuple{W_1,W_2})$, so by a finite number of applications of the \textit{ctx-let} rule we get $(C,\letin{\tuple{x,y}}{N}{P})\red^* (C_1,\letin{\tuple{x,y}}{\tuple{W_1,W_2}}{P})$. Next, by the \textit{let} rule we get $(C_1,\letin{\tuple{x,y}}{\tuple{W_1,W_2}}{P}) \red (C_1, P[W_1/x][W_2/y])$. By inductive hypothesis we also know $(C_1,P[W_1/x][W_2/y]) \red^* (D,V)$, so by the transitivity of $\red^*$ we conclude $(C,\letin{\tuple{x,y}}{N}{P}) \red^* (D,V)$.
    
    \item Case $M \equiv \tuple{N,P}$ and
    $$
    \frac{(C,N)\Downarrow(C_1,W_1) \quad (C_1,P) \Downarrow (D,W_2)}
    {(C,\tuple{N,P}) \Downarrow (D,\tuple{W_1,W_2})}
    $$
    In this case by inductive hypothesis we know $(C,N)\red^* (C_1,W_1)$ and $(C_1,P)\red^*(D,W_2)$, so by a finite number of applications of the \textit{ctx-tuple-left} and \textit{ctx-tuple-right} rules we get $(C,\tuple{N,P})\red^* (C_1,\tuple{W_1,P})$ and $(C_1,\tuple{W_1,P}) \red^* (D,\tuple{W_1,W_2})$. By the transitivity of $\red^*$ we conclude $(C,\tuple{N,P}) \red^* (D,\tuple{W_1,W_2})$.
    
    \item Case $M \equiv \lift N$ and
    $$
    \frac{\void}
    {(C,\lift N) \Downarrow (C,\lift N)}
    $$
    In this case the claim is trivially true by the reflexivity of $\red^*$.
    
    \item Case $M \equiv \force N$ and
    $$
    \frac{(C,N)\Downarrow(C_1,\lift P) \quad (C_1, P) \Downarrow (D,V)}
    {(C,\force N) \Downarrow (D,V)}
    $$
    In this case by inductive hypothesis we know $(C,N)\red^* (C_1,\lift P)$, so by a finite number of applications of the \textit{ctx-force} rule we get $(C,\force N)\red^* (C_1,\force (\lift P))$. Next, by the \textit{force} rule we get $(C_1,\force(\lift P)) \red (C_1, P)$. By inductive hypothesis we also know $(C_1,P) \red^* (D,V)$, so by the transitivity of $\red^*$ we conclude $(C,\force N) \red^* (D,V)$.
    
    \item Case $M \equiv \boxt_T N$ and
    $$
    \frac{(C,N)\Downarrow(D,\lift P) \quad (Q,\vec\ell)=\freshlabels(P,T) \quad
    (id_Q,P\vec\ell) \Downarrow (D',\vec{\ell'})}
    {(C,\boxt_T N) \Downarrow (D,(\vec\ell,D',\vec{\ell'}))}
    $$
    In this case by inductive hypothesis we know $(C,N)\red^* (D,\lift P)$, so by a finite number of applications of the \textit{ctx-box} rule we get $(C,\boxt_T N)\red^* (D,\boxt_T (\lift P))$. By inductive hypothesis we also know $(id_Q,P\vec\ell)\red^*(D',\vec{\ell'})$, so by the \textit{box} rule we get $(D,\boxt_T(\lift P)) \red (D,(\vec\ell,D',\vec{\ell'}))$ and conclude $(C,\boxt_T N) \red^* (D,(\vec\ell,D',\vec{\ell'}))$ by the transitivity of $\red^*$.
    
    \item Case $M \equiv \apply(N,P)$ and
    $$
    \frac{(C,N)\Downarrow(C_1,(\vec\ell,D',\vec{\ell'})) \quad (C_1, P) \Downarrow (C_2, \vec k) \quad (D,\vec{k'}) = \append(C_2,\vec k, \vec\ell, D', \vec{\ell'})}
    {(C,\apply(N,P)) \Downarrow (D,\vec{k'})}
    $$
    \emergencystretch=20pt
    In this case by inductive hypothesis we know $(C,N)\red^* (C_1,(\vec\ell,D',\vec{\ell'}))$ and $(C_1, P) \red^* (C_2, \vec k)$, so by a finite number of applications of the \textit{ctx-apply-left} and \textit{ctx-apply-right} rules we get $(C,\apply(N,P))\red^* (C_1,\apply((\vec\ell,D',\vec{\ell'}),P))$ and $(C_1,\apply((\vec\ell,D',\vec{\ell'}),P)) \red^* (C_2,\apply((\vec\ell,D',\vec{\ell'}),\vec k))$, respectively. Next, we get $(C_2,\apply((\vec\ell,D',\vec{\ell'}),\vec k)) \red (D, \vec{k'})$ by the \textit{apply} rule and conclude $(C,\apply(N,P)) \red^* (D,\vec{k'})$ by the transitivity of $\red^*$.
    
    \item Case $M \equiv (\vec\ell,D,\vec{\ell'})$ and
    $$
    \frac{\void}
    {(C,(\vec\ell,D,\vec{\ell'})) \Downarrow (C,(\vec\ell,D,\vec{\ell'}))}
    $$
    In this case the claim is trivially true by the reflexivity of $\red^*$.
\end{itemize}
\end{proofEnd}

\begin{theoremEnd}[all end]{lem}\label{decomposition of converging computations}
The following are all true:
\begin{enumerate}
    \emergencystretch=20pt
    \item If $(C,MN) \red^+ (D,V)$, then $(C,MN)\red^*(C_1,(\lambda x.P)N) \red^* (C_2,(\lambda x.P)W) \red (C_2,P[W/x])\red^* (D,V)$.
    
    \item If $(C,\tuple{M,N}) \red^+ (D,V)$, then $V\equiv \tuple{V_1,V_2}$ and $(C,\tuple{M,N})\red^*(C_1,\tuple{V1,N}) \red^* (D,\tuple{V_1,V_2})$.
    
    \item If $(C,\letin{\tuple{x,y}}{M}{N}) \red^+ (D,V)$, then $(C,\letin{\tuple{x,y}}{M}{N})\red^*(C_1,\letin{\tuple{x,y}}{\tuple{V,W}}{N}) \red (C_1,N[V/x][W/y])\red^* (D,V)$.
    
    \item If $(C,\force M) \red^+ (D,V)$, then $(C,\force M)\red^*(C_1,\force (\lift N)) \red (C_1,N)\red^* (D,V)$.
    
    \item If $(C,\boxt_T(M)) \red^+ (D,V)$, then  $(C,\boxt_T(M))\red^*(D,\boxt_T(\lift N)) \red (D,(\vec\ell,D',\vec{\ell'}))$ and $(id_Q,N\vec\ell) \red^* (D',\vec{\ell'})$, for $(Q,\vec\ell)=\freshlabels(N,T)$.
    
    \item If $(C,\apply(M,N)) \red^+ (D,V)$, then $(C,\apply(M,N))\red^*(C_1,\apply((\vec\ell,D,\vec{\ell'}),N)) \red^* (C_2,\apply((\vec\ell,D,\vec{\ell'}),\vec k)) \red  (D,\vec{k'})$ for $(D,\vec{k'})=\append(C_2,\vec k, \vec\ell, D', \vec{\ell'})$.
\end{enumerate}
\end{theoremEnd}
\begin{proofEnd}
We prove each claim separately:
\begin{enumerate}

    \item We proceed by induction on the length of the derivation $(C,MN)\red^+(D,V)$:
    \begin{itemize}
        \item Case $(C,MN)\red(D,V)$. In this case we necessarily have $M\equiv \lambda x.x$ and $N\equiv V$ and $C=D$. We get $(C,MN)=(C,(\lambda x.x)N)=(C,(\lambda x.x)V)\red(C,V)$ by the \textit{app} rule, which trivially proves the claim by the reflexivity of $\red^*$.
        
        \item Case $(C,MN)\red^+(D,V)$ in two or more steps. We distinguish three cases
        \begin{itemize}
            \item $M\equiv \lambda x.P$ and $N\equiv W$. In this case we have $(C,(\lambda x.P)W) \red (C,P[W/x])$ by the \textit{$\beta$-reduction} rule and $(C,P[W/x]) \red^+ (D,V)$, so we immediately conclude $(C,MN)=(C,(\lambda x.P)N)=(C,(\lambda x.P)W)\red(C,P[W/x])\red^*(D,V)$.
            
            \item $M\equiv \lambda x.P$ and $N\not\equiv W$. In this case we have $(C,(\lambda x.P)N) \red (C_1,(\lambda x.P)N')$ by the \textit{ctx-app-right} rule and $(C_1,(\lambda x.P)N') \red^+ (D,V)$. By inductive hypothesis we get $(C_1,(\lambda x.P)N') \red^* (C_2,(\lambda x.P)W) \red (C_2,P[W/x]) \red^* (D,V)$ and conclude $(C,MN)=(C,(\lambda x.P)N)\red^*(C_2,(\lambda x.P)W)\red(C_2,P[W/x])\red^*(D,V)$ by the transitivity of $\red^*$.
            
            \emergencystretch=14pt
            \item $M\not\equiv \lambda x.P$ and $N\not\equiv W$. In this case we have $(C,MN) \red (C_1,M'N)$ by the \textit{ctx-app-left} rule and $(C_1,M'N) \red^+ (D,V)$. By inductive hypothesis we get $(C_1,M'N) \red^* (C_2,(\lambda x.P)N) \red^* (C_3,(\lambda x.P)W) \red (C_3,P[W/x]) \red^* (D,V)$ and conclude $(C,MN)\red^*(C_2,(\lambda x.P)N)\red^*(C_3,(\lambda x.P)W)\red(C_3,P[W/x])\red^*(D,V)$ by the transitivity of $\red^*$.
        \end{itemize}
        
    \end{itemize}
    
    \item We proceed by induction on the length of the derivation $(C,\tuple{M,N})\red^+(D,V)$:
    \begin{itemize}
        \item Case $(C,\tuple{M,N})\red(D,V)$. In this case we necessarily have $M\equiv V_1$ and $(C,N)\red(D,V_2)$. We get $(C,\tuple{M,N})=(C,\tuple{V_1,N})\red(D,\tuple{{V_1,V_2}})$ by the \textit{ctx-tuple-right} rule, which trivially proves the claim by the reflexivity of $\red^*$.
        
        \item Case $(C,\tuple{M,N})\red^+(D,V)$ in two or more steps. We distinguish two cases
        \begin{itemize}
            \emergencystretch=15pt
            \item $M\equiv V_1$. In this case we have $(C,\tuple{V_1,N}) \red (C_1,\tuple{V_1,N'})$ by the \textit{ctx-tuple-right} rule and $(C_1,\tuple{V_1,N'}) \red^+ (D,V)$. By inductive hypothesis we get $(C_1,\tuple{V_1,N'}) \red^* (D,\tuple{V_1,V_2})$ and conclude $(C,\tuple{M,N})=(C,\tuple{V_1,N})\red^*(D,\tuple{V_1,V_2)}$ by the transitivity of $\red^*$.
            
            \emergencystretch=15pt
            \item $M\not\equiv V_1$. In this case we have $(C,\tuple{M,N}) \red (C_1,\tuple{M',N})$ by the \textit{ctx-tuple-left} rule and $(C_1,\tuple{M',N}) \red^+ (D,V)$. By inductive hypothesis we get $(C_1,\tuple{M',N}) \red^* (C_2,\tuple{V_1,N}) \red^* (D,\tuple{V_1,V_2})$ and conclude $(C,\tuple{M,N})\red^*(C_2,\tuple{V_1,N})\red^*(D,\tuple{V_1,V_2)}$ by the transitivity of $\red^*$.
        \end{itemize}
        
    \end{itemize}
    
    \item We proceed by induction on the length of the derivation $(C,\letin{\tuple{x,y}}{M}{N})\red^+(D,V)$:
    \begin{itemize}
        \item Case $(C,\letin{\tuple{x,y}}{M}{N}) \red (D,V)$. In this case we necessarily have $M\equiv \tuple{W_1,W_2}$ and $N\equiv V$ and $C=D$. By the \textit{let} rule we get $(C,\letin{\tuple{x,y}}{M}{V}) =(C,\letin{\tuple{x,y}}{\tuple{W_1,W_2}}{V}) \red (C,V[W_1/x][W_2/y])=(C,V)$, which trivially proves the claim by the reflexivity of $\red^*$.
        
        \item Case $(C,\letin{\tuple{x,y}}{M}{N})\red^+(D,V)$ in two or more steps. We distinguish two cases:
        \begin{itemize}
            \emergencystretch=25pt
            \item Case $M\equiv \tuple{W_1,W_2}$. In this case we have $(C,\letin{\tuple{x,y}}{\tuple{W_1,W_2}}{N}) \red (C,N[W_1/x][W_2/y])$ by the \textit{let} rule and $(C,N[W_1/x][W_2/y])\red^+(D,V)$, so we immediately conclude $(C,\letin{\tuple{x,y}}{M}{N})=(C,\letin{\tuple{x,y}}{\tuple{W_1,W_2}M}{N}) \red (C,N[W_1/x][W_2/y])\red^* (D,V)$.
            
            \emergencystretch=25pt
            \item Case $M\not\equiv \tuple{W_1,W_2}$. In this case we have $(C,\letin{\tuple{x,y}}{M}{N}) \red (C_1, \letin{\tuple{x,y}}{M'}{N})$ by the \textit{ctx-let} rule and $(C_1, \letin{\tuple{x,y}}{M'}{N}) \red^+ (D,V)$. By inductive hypothesis we get $(C_1,\letin{\tuple{x,y}}{M'}{N}) \red^* (C_2,\letin{\tuple{x,y}}{\tuple{W_1,W_2}}{N}) \red (C_2,N[W_1/x][W_2/y]) \red^* (D,V)$ and conclude $(C,\letin{\tuple{x,y}}{M}{N}) \red^* (C_2,\letin{\tuple{x,y}}{\tuple{W_1,W_2}}{N}) \red (C_2,N[W_1/x][W_2/y]) \red^* (D,V)$ by the transitivity of $\red^*$.
        \end{itemize}
    \end{itemize}
    
    \item We proceed by induction on the length of the derivation $(C,\force M)\red^+(D,V)$:
    \begin{itemize}
        \item Case $(C,\force M) \red (D,V)$. In this case we necessarily have $M\equiv \lift V$ and $C=D$. We get $(C,\force M) = (C,\force(\lift V)) \red (C,V)$ by the \textit{force} rule, which trivially proves the claim by the reflexivity of $\red^*$.
        
        \item Case $(C,\force M)\red^+(D,V)$ in two or more steps. We distinguish two cases:
        \begin{itemize}
            \emergencystretch=10pt
            \item Case $M\equiv \lift N$. In this case we have $(C,\force(\lift N)) \red (C,N)$ by the \textit{force} rule and $(C,N)\red^+(D,V)$, so we immediately conclude $(C,\force M)=(C,\force (\lift N)) \red (C,N)\red^* (D,V)$.
            
            \emergencystretch=15pt
            \item Case $M\not\equiv \lift N$. In this case we have $(C,\force M) \red (C_1, \force M')$ by the \textit{ctx-force} rule and $(C_1, \force M') \red^+ (D,V)$. By inductive hypothesis we get $(C_1,\force M') \red^* (C_2,\force (\lift N)) \red (C_2,N) \red^* (D,V)$ and conclude $(C,\force M) \red^* (C_2,\force (\lift N)) \red (C_2,N) \red^* (D,V)$ by the transitivity of $\red^*$.
        \end{itemize}
    \end{itemize}
    
    \item We proceed by induction on the length of the derivation $(C,\boxt_T M)\red^+(D,V)$:
    \begin{itemize}
        \emergencystretch=10pt
        \item Case $(C,\boxt_T M) \red (D,V)$. In this case we necessarily have $M\equiv \lift N$ and $(id_Q,N\vec\ell)\red^* (D',\vec{\ell'})$ for $(Q,\vec\ell)=\freshlabels(N,T)$ and $C=D$. We get $(C,\boxt_T M) = (C,\boxt_T(\lift N)) \red (C,(\vec\ell,D',\vec{\ell'}))$ by the \textit{box} rule, which trivially proves the claim by the reflexivity of $\red^*$.
        
        \item Case $(C,\boxt_T M)\red^+(D,V)$ in two or more steps. We distinguish two cases:
        \begin{itemize}
            \item Case $M\equiv \lift N$. This case is impossible since it would entail $(C,\boxt_T(\lift N)) \red (C,(\vec\ell,D',\vec{\ell'}))$ in one step.
            
            \item Case $M\not\equiv \lift N$. In this case we have $(C,\boxt_T M) \red (C_1, \boxt_T M')$ by the \textit{ctx-box} rule and $(C_1, \boxt_T M') \red^+ (D,V)$. By inductive hypothesis we get $(C_1,\boxt_T M') \red^* (D,\boxt_T (\lift N)) \red (D,(\vec\ell,D',\vec{\ell'}))$ and $(id_Q,N\vec\ell) \red^* (D',\vec{\ell'})$ for $(Q,\vec\ell)=\freshlabels(N,T)$. We conclude $(C,\boxt_T M) \red^* (D,\boxt_T (\lift N)) \red (D,(\vec\ell,D',\vec{\ell'}))$ by the transitivity of $\red^*$.
        \end{itemize}
    \end{itemize}
    
    \item We proceed by induction on the length of the derivation $(C,\apply(M,N))\red^+(D,V)$:
    \begin{itemize}
        \emergencystretch=30pt
        \item Case $(C,\apply(M,N))\red(D,V)$. In this case we necessarily have $M\equiv (\vec\ell,D',\vec{\ell'})$ and $N\equiv \vec k$. As a consequence, we know that $(C,\apply(M,N))=(C,\apply((\vec\ell,D',\vec{\ell'}),N))=(C,\apply((\vec\ell,D',\vec{\ell'}),\vec k))\red(D,\vec{k'})$ by the \textit{apply} rule, where $(D,\vec{k'})=\append(C,\vec k, \vec\ell, D',\vec{\ell'})$. This trivially proves the claim by the reflexivity of $\red^*$.
        
        \item Case $(C,\apply(M,N))\red^+(D,V)$ in two or more steps. We distinguish three cases:
        \begin{itemize}
        
            \item $M\equiv (\vec\ell,D',\vec{\ell'})$ and $N\equiv \vec k$. This case is impossible since it would entail $(C,\apply(M,N))\red(D,V)$ in one step.
            
            \emergencystretch=50pt
            \item $M\equiv (\vec\ell,D',\vec{\ell'})$ and $N\not\equiv \vec k$. In this case we have $(C,\apply((\vec\ell,D',\vec{\ell'}),N)) \red (C_1,\apply((\vec\ell,D',\vec{\ell'}),N'))$ by the \textit{ctx-apply-right} rule, and we also have $(C_1,\apply((\vec\ell,D',\vec{\ell'}),N')) \red^+ (D,V)$. By inductive hypothesis we know that $(C_1,\apply((\vec\ell,D',\vec{\ell'}),N')) \red^* (C_2,\apply((\vec\ell,D',\vec{\ell'}),\vec k)) \red (D,\vec{k'})$, where $(D,\vec{k'})=\append(C_2,\vec k, \vec\ell, D',\vec{\ell'})$, and conclude $(C,\apply(M,N))=(C,\apply((\vec\ell,D',\vec{\ell'}),N))\red^*(C_2,\apply((\vec\ell,D',\vec{\ell'}),\vec k))\red(D,\vec{k'})$ by the transitivity of $\red^*$.
            
            \emergencystretch=30pt
            \item $M\not\equiv (\vec\ell,D',\vec{\ell'})$ and $N\not\equiv \vec k$. In this case we have $(C,\apply(M,N)) \red (C_1,\apply(M',N))$ by the \textit{ctx-apply-left} rule and $(C_1,M'N) \red^+ (D,V)$. By inductive hypothesis we get $(C_1,\apply(M',N)) \red^* (C_2,\apply((\vec\ell,D',\vec{\ell'}),N)) \red^* (C_3,\apply((\vec\ell,D',\vec{\ell'}),\vec k)) \red (D,\vec{k'})$, where $(D,\vec{k'})=\append(C_3,\vec k, \vec\ell, D',\vec{\ell'})$ and therefore conclude $(C,\apply(M,N))\red^*(C_2,\apply((\vec\ell,D',\vec{\ell'}),N))\red^*(C_3,\apply((\vec\ell,D',\vec{\ell'}),\vec k))\red(D,\vec{k'})$ by the transitivity of $\red^*$.
        \end{itemize}
        
    \end{itemize}
\end{enumerate}
\end{proofEnd}

\begin{theoremEnd}{lem}\label{small-step to big-step correspondence}
If $(C,M)\red^* (D,V)$, then $(C,M) \Downarrow (D,V)$.
\end{theoremEnd}
\begin{proofEnd}
We reason by induction on the size of the proof for $(C,M)\red^* (D,V)$, proceeding by cases on $M$:
\begin{itemize}
    \item Case $M\equiv x$. This case is impossible, since by the definition of small-step configuration $M$ must contain no free variables.
        
    \item Case $M\equiv \vec\ell$. In this case $\vec\ell$ is a value, so we trivially have $C=D$ and
    $$
    \frac{\void}
    {(C,\vec\ell) \Downarrow (C,\vec\ell)}
    $$
        
    \item Case $M\equiv \lambda x.N$. In this case $\lambda x.N$ is a value, so we trivially have $C=D$ and
    $$
    \frac{\void}
    {(C,\lambda x.N) \Downarrow (C,\lambda x.N)}
    $$
        
    \item Case $M\equiv NP$. In this case we have $(C,NP)\red^+(D,V)$ and by Lemma \ref{decomposition of converging computations} we know that $(C,NP)\red^*(C_1,(\lambda x.L)P) \red^* (C_2,(\lambda x.L)W) \red (C_2,L[W/x]) \red^* (D,V)$. By repeated inversion on the \textit{ctx-app-left} and \textit{ctx-app-right} rules we get $(C,N)\red^*(C_1,\lambda x.L)$ and $(C1,P) \red^* (C_2,W)$, respectively. By inductive hypothesis we therefore know $(C,N) \Downarrow (C_1,\lambda x.L)$ and $(C1,P) \Downarrow (C_2,W)$ and $(C_2,L[W/x]) \Downarrow (D,V)$, by which we conclude
        $$
        \frac{(C,N) \Downarrow (C_1,\lambda x.L) \quad (C1,P) \Downarrow (C_2,W) \quad (C_2,L[W/x]) \Downarrow (D,V)}
        {(C,NP) \Downarrow (D,V)}
        $$
        
    \item Case $M\equiv \tuple{N,P}$. In this case we distinguish two cases. If $\tuple{N,P}\equiv\tuple{V_1,V_2}$, then $C=D$ and we trivially conclude
    $$
    \frac{\void}
    {(C,\tuple{V_1,V_2})\Downarrow(C,\tuple{V_1,V_2})}
    $$
    
    Otherwise, if $(C,\tuple{N,P})\red^+(D,V)$, then by Lemma \ref{decomposition of converging computations} we know that $(C,\tuple{N,P})\red^*(C_1,\tuple{V_1,P}) \red^* (D,\tuple{V_1,V_2})$. By repeated inversion on the \textit{ctx-tuple-left} and \textit{ctx-tuple-right} rules we get $(C,N)\red^*(C_1,V_1)$ and $(C_1,P) \red^* (D,V_2)$, respectively. By inductive hypothesis we therefore know $(C,N)\Downarrow(C_1,V_1)$ and $(C_1,P) \Downarrow (D,V_2)$, by which we conclude
    $$
    \frac{(C,N)\Downarrow(C_1,V_1) \quad (C_1,P) \Downarrow (D,V_2)}
    {(C,\tuple{N,P}) \Downarrow (D,\tuple{V_1,V_2})}
    $$
        
    \item Case $M\equiv \letin{\tuple{x,y}}{N}{P}$. In this case we have $(C,\letin{\tuple{x,y}}{N}{P})\red^+(D,V)$ and by Lemma \ref{decomposition of converging computations} we know that $(C,\letin{\tuple{x,y}}{N}{P}) \red^* (C_1,\letin{\tuple{x,y}}{\tuple{W_1,W_2}}{P}) \red (C_1,P[W_1/x][W_2/y]) \red^* (D,V)$.
    By repeated inversion on the \textit{ctx-let} we get $(C,N) \red^* (C_1,\tuple{W_1,W_2})$.
    By inductive hypothesis we therefore know $(C,N) \Downarrow (C_1,\tuple{W_1,W_2})$ and $(C_1,P[W_1/x][W_2/y]) \Downarrow (D,V)$, by which we conclude
    $$
    \frac{(C,N) \Downarrow (C_1,\lift P) \quad (C_1,P[W_1/x][W_2/y]) \Downarrow (D,V)}
    {(C,\letin{\tuple{x,y}}{N}{P}) \Downarrow (D,V)}
    $$
        
    \item Case $M\equiv \lift N$. In this case $\lift N$ is a value, so we trivially have $C=D$ and
    $$
    \frac{\void}
    {(C,\lift N) \Downarrow (C,\lift N)}
    $$
        
    \item Case $M\equiv \force N$. In this case we have $(C,\force N)\red^+(D,V)$ and by Lemma \ref{decomposition of converging computations} we know that $(C,\force N) \red^* (C_1,\force (\lift P)) \red (C_1,P) \red^* (D,V)$ and by repeated inversion on the \textit{ctx-force} rule we get $(C,N) \red^* (C_1,\lift P)$.
    By inductive hypothesis we therefore know $(C,N) \Downarrow (C_1,\lift P)$ and $(C_1,P) \Downarrow (D,V)$, by which we conclude
    $$
    \frac{(C,N) \Downarrow (C_1,\lift P) \quad (C_1,P) \Downarrow (D,V)}
    {(C,\force(\lift P)) \Downarrow (D,V)}
    $$
        
    \item Case $M\equiv \boxt_T N$. In this case we have $(C,\boxt_T N)\red^+(D,V)$ and by Lemma \ref{decomposition of converging computations} we know that $(C,\boxt_T N) \red^* (D,\boxt_T (\lift P)) \red (D,(\vec\ell,D',\vec{\ell'}))$ and $(id_Q,P\vec\ell)\red^*(D',\vec{\ell'})$, for $(Q,\vec\ell)=\freshlabels(P,T)$ and by repeated inversion on the \textit{ctx-box} rule we get $(C,N) \red^* (D,\lift P)$.
    By inductive hypothesis we therefore know $(C,N) \Downarrow (D,\lift P)$ and $(id_Q,P\vec\ell) \Downarrow (D',\vec{\ell'})$, by which we conclude
    $$
    \frac{(C,N) \Downarrow (D,\lift P) \quad (Q,\vec\ell)=\freshlabels(P,T) \quad (id_Q,P\vec\ell) \Downarrow (D',\vec{\ell'})}
    {(C,\boxt_T M) \Downarrow (D,(\vec\ell,D',\vec{\ell'}))}
    $$
    
    \item Case $M\equiv \apply(N,P)$. In this case we have $(C,\apply(N,P))\red^+(D,V)$ and by Lemma \ref{decomposition of converging computations} we know that $(C,\apply(N,P))\red^*(C_1,\apply((\vec\ell,D',\vec{\ell'}),P)) \red^* (C_2,\apply((\vec\ell,D',\vec{\ell'}),\vec k)) \red (D,\vec{k'})$.
    By repeated inversion on the \textit{ctx-apply-left} and \textit{ctx-apply-right} rules we get $(C,N)\red^*(C_1,(\vec\ell,D',\vec{\ell'}))$ and $(C1,P) \red^* (C_2,\vec k)$, respectively.
    By inductive hypothesis we therefore know $(C,N) \Downarrow (C_1,(\vec\ell,D',\vec{\ell'}))$ and $(C1,P) \Downarrow (C_2,\vec k)$, by which we conclude
    $$
    \frac{(C,N) \Downarrow (C_1,(\vec\ell,D',\vec{\ell'})) \quad (C1,P) \Downarrow (C_2,\vec k) \quad (D,\vec{k'})=\append(C_2,\vec k, \vec\ell, D', \vec{\ell'})}
    {(C,\apply(N,P)) \Downarrow (D,\vec{k'})}
    $$
        
    \item Case $M\equiv (\vec\ell,D',\vec{\ell'})$. In this case $(\vec\ell,D',\vec{\ell'})$ is a value, so we trivially have $C=D$ and
    $$
    \frac{\void}
    {(C,(\vec\ell,D',\vec{\ell'})) \Downarrow (C,(\vec\ell,D',\vec{\ell'}))}
    $$
\end{itemize}
\end{proofEnd}

\begin{thm}\label{small-step big-step equivalence}
Suppose $(C,M)$ and $(D,V)$ are small-step configurations. We have that $(C,M)\Downarrow(D,V)$ if and only if $(D,M)\red^*(D,V)$.
\end{thm}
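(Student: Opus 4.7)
The plan is to obtain the theorem as a direct consequence of the two correspondence lemmas already proved. The ``only if'' direction, $(C,M)\Downarrow(D,V)\implies(C,M)\red^*(D,V)$, is exactly the content of Lemma \ref{big-step to small-step correspondence}, which proceeds by induction on the big-step derivation and, at each non-axiom rule, glues together a finite prefix of contextual reduction steps (of the \textit{ctx-$\cdot$} family) followed by the corresponding core rule (\textit{$\beta$-reduction}, \textit{let}, \textit{force}, \textit{box}, or \textit{apply}), using transitivity of $\red^*$ to conclude.

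The ``if'' direction, $(C,M)\red^*(D,V)\implies(C,M)\Downarrow(D,V)$, is exactly Lemma \ref{small-step to big-step correspondence}. Here the argument proceeds by induction on the length of the converging reduction sequence and uses Lemma \ref{decomposition of converging computations} to canonicalise any reduction $(C,M)\red^+(D,V)$ into the characteristic pattern ``contextual reductions on the immediate sub-terms, then one core reduction at the head'', from which the shape of the corresponding big-step rule can be read off and its premises supplied by the inductive hypothesis.

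The theorem therefore reduces to invoking these two lemmas in sequence. I would additionally flag that the right-hand side of the displayed statement should read $(C,M)\red^*(D,V)$; the printed $(D,M)\red^*(D,V)$ is a typographical slip, as is clear from the statements of the two lemmas being chained.

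I do not anticipate a genuine obstacle, because all of the real work has already been absorbed into the preceding lemmas. The key design choices that make the equivalence go through, namely the provision of a contextual rule for every congruence appearing in the big-step derivations and the formulation of the \textit{box} rule as a direct appeal to an inner sub-reduction $(id_Q,M\vec\ell)\red\dots\red(D,\vec{\ell'})$ rather than to a freshly-spawned big-step evaluation, are precisely what aligns the two semantics clause-by-clause, so the final theorem is no more than a one-line composition of the two halves.
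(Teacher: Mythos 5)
Your proposal matches the paper's proof exactly: the theorem is obtained by combining Lemma \ref{big-step to small-step correspondence} and Lemma \ref{small-step to big-step correspondence}, and your observation that the displayed $(D,M)\red^*(D,V)$ should read $(C,M)\red^*(D,V)$ is a correct reading of a typographical slip. No further comment is needed.
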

\begin{proof}
The claim follows immediately from lemmata \ref{big-step to small-step correspondence} and \ref{small-step to big-step correspondence}.
\end{proof}

Note that although some form of equivalence could also be expected between the error relation of the big-step semantics ($\Downarrow\serr$) and the deadlocking relation of the small-step semantics ($\deadlocks$), this is not the case. This is mainly due to the fact that the big-step semantics interrupts the evaluation of a term as soon as an error is encountered, whereas by the definition of deadlocking configuration we reduce a term as much as possible before declaring that it is indeed stuck. If we had, for example, a configuration $(C,MN)$ where the evaluation of $M$ leads to something that is not an abstraction and the evaluation of $N$ diverges, in the big-step semantics the configuration would evaluate to an error as soon as $M$ is evaluated, whereas in the small-step semantics it would diverge. This means that the line between deadlocking and diverging configurations is not the same in the big-step and small-step semantics. This discrepancy, however, is not relevant, as by Theorem \ref{small-step big-step equivalence} we know that there are no cases in which the same configuration evaluates correctly in one semantics and raises an error (or diverges) in the other.

\subsubsection{Safety Results}\label{safety results}

The operational semantics is not the only aspect under which circuits and terms are related. We mentioned earlier that all the occurrences of labels within a term are free, and in Section \ref{proto-quipper-m:types} we saw that labels are given a type by label contexts. In order for a configuration $(C,M)$ to be considered \textit{well-typed}, we must be able to derive a type judgement for $M$ using exclusively labels coming from the outputs of $C$.

\begin{df}[Well-typedness]
Given label contexts $Q$ and $Q'$, a type $A$ and a configuration $(C,M)$, we say that the latter is \emph{well-typed} with input labels $Q$, output labels $Q'$ and type $A$, and we write
$$Q\vdash (C,M):A;Q',$$
when there exists $Q''$ disjoint from $Q'$ such that
$$ C:Q \to Q' \cup Q'',\quad \emptyset;Q''\vdash M:A.$$
\end{df}

\noindent We now prove two fundamental safety properties of the small-step semantics. The \textit{subject reduction} result tells us that reducing a configuration in the small-step semantics does not alter its type.

\begin{theoremEnd}[all end]{lem}\label{bridging}
Suppose we have circuits $C:Q_1\to Q_1'\cup Q_1'',D:Q_2\to Q_2'\cup Q_2''$, terms $M,N$ and types $A,B$. If $Q_1\vdash(C,M):A;Q_1' \implies Q_2\vdash(D,N):B;Q_2'$, then
$$\emptyset;Q_1''\vdash M:A \implies \emptyset;Q_2''\vdash N:B.$$
\end{theoremEnd}
\begin{proofEnd}
The proof follows immediately from the definition of well-typedness.
\end{proofEnd}

\begin{theoremEnd}{thm}[Subject Reduction]\label{subject reduction}
If $Q\vdash (C,M):A;Q'$ and $(C,M)\red(C',M')$, then $Q\vdash (C',M'):A;Q'$
\end{theoremEnd}
\begin{proofEnd}
By the definition of well-typedness we know that there exists $Q''$ disjoint from $Q'$ such that $C\to Q'\cup Q''$ and $\emptyset;Q''\vdash M:A$. We proceed by induction on the derivation of $(C,M)\red(C',M')$:
\begin{itemize}
    \item Case of \textit{$\beta$-reduction}. Suppose $M\equiv(\lambda x.N)V$. We know $Q\vdash (C,(\lambda x.N)V):A;Q'$ and we must prove $Q\vdash (C,N[V/x]):A;Q'$. We know $\emptyset;Q''\vdash (\lambda x.N)V:A$, so by Lemma \ref{generation lemma} we get
    $$\emptyset;Q''_1\vdash(\lambda x.N):B\multimap A,\quad \emptyset;Q''_2\vdash V:B,$$
    for some $B$ and $Q''_1,Q''_2$ such that $Q''=Q''_1,Q''_2$. By the same lemma, we get
    $$x:B;Q''_1 \vdash N:A,$$
    By Theorem \ref{substitution theorem} we get $\emptyset;Q''_1,Q''_2\vdash N[V/x]:A$ and we conclude
    $$Q\vdash (C, N[V/x]):A;Q'.$$
    
    \item Case of \textit{let}. Suppose $M\equiv \letin{\tuple{x,y}}{\tuple{V,W}}{N}$. We know $Q\vdash(C,\letin{\tuple{x,y}}{\tuple{V,W}}{N}):A;Q'$ and we must prove $Q\vdash(C,N[V/x][W/y]):A;Q'$. We know $\emptyset;Q''\vdash \letin{\tuple{x,y}}{\tuple{V,W}}{N} : A$, so by applying Lemma \ref{generation lemma} twice consecutively we get
    $$  \emptyset;Q''_1\vdash V:B, \quad
        \emptyset;Q''_2\vdash W:C, \quad
        x:B,y:C;Q''_3\vdash V:B,
    $$
    for some $B,C$ and $Q''_1,Q''_2,Q''_3$ such that $Q''_1,Q''_2,Q''_3=Q''$. By Theorem \ref{substitution theorem} we get first $y:C;Q_3'',Q_1''\vdash N[V/x]:A$ and then $\emptyset;Q_3'',Q_1'',Q_2''\vdash N[V/x][W/y]:A$ and we conclude
    $$Q\vdash (C,N[V/x][W/y]):A;Q'.$$
    
    \item Case of \textit{force}. Suppose $M \equiv \force(\lift N)$. We know $Q\vdash (C,\force(\lift N)):A;Q'$ and we must prove $Q\vdash (C,N):A;Q'$. We know $\emptyset;Q''\vdash \force(\lift N):A$, so by applying Lemma \ref{generation lemma} twice consecutively we get that $Q''=\emptyset$ and
    $$\emptyset;\emptyset\vdash \lift N: \bang A, \quad \emptyset; \emptyset \vdash N : A,$$
    from which we conclude
    $$Q\vdash (C,N):A;Q'.$$
    
    \item Case of \textit{box}. Suppose $M \equiv \boxt_T(\lift N)$. We know $Q\vdash \boxt_T(\lift N):\circt(T,U);Q'$ and we must prove $Q\vdash (C, (\vec\ell, D, \vec{\ell'})):\circt(T,U);Q'$, where $(Q_0,\vec\ell)=\freshlabels(N,T)$ and $(id_{Q_0},N\vec\ell)\red\dots\red (D,\vec{\ell'})$. We know $\emptyset;Q''\vdash\boxt_T(\lift N):\circt(T,U)$ and, by the definition of $\freshlabels$, $\emptyset;Q_0\vdash \vec\ell : T$. By applying Lemma \ref{generation lemma} twice we get that $Q''=\emptyset$ and
    $$\emptyset;\emptyset\vdash \lift N: \bang(T\multimap U), \quad \emptyset;\emptyset\vdash N: T\multimap U,$$
    and we therefore know by the \textit{app} rule that $$\emptyset;Q_0\vdash N\vec\ell:U,$$
    which entails that the configuration $(id_{Q_0},N\vec\ell)$ is well-typed, with $Q_0\vdash (id_{Q_0},N\vec\ell):U;\emptyset$. By applying the inductive hypothesis $n$ times, we know that all the configurations in the sequence $(id_{Q_0},N\vec\ell)\red\dots\red (D,\vec{\ell'})$ are well typed, with $Q_0\vdash (C_i,M_i):U;\emptyset$ for $i=1\dots n$. This includes $(C_n,M_n)\equiv(D,\vec{\ell'})$, so we get
    $$Q_0 \vdash (D,\vec{\ell'}):U;\emptyset.$$
    This in turn entails that there exists $Q_0''$ such that $\emptyset;Q_0''\vdash \vec{\ell'}:U$ and $D$ is a circuit of the form $D:Q_0\to Q_0''$. Thus $D\in \functorfam(Q_0,Q_0'')$ and by the \textit{circ} rule we get $\emptyset;\emptyset\vdash (\vec\ell, D, \vec{\ell'}):\circt(T,U)$, from which we conclude
    $$Q\vdash (C, (\vec\ell, D, \vec{\ell'})):\circt(T,U),Q'.$$
    
    \item Case of \textit{apply}. Suppose $M\equiv\apply((\vec\ell,D,\vec{\ell'}),\vec k)$. We know $Q\vdash(C,\apply((\vec\ell,D,\vec{\ell'}),\vec k)):U;Q'$ and we must prove $Q\vdash (C',\vec{k'}):U;Q'$, where $(C',\vec{k'})=\append(C,\vec k,\vec\ell,D,\vec{\ell'})$. We know $\emptyset;Q''\vdash \apply((\vec\ell,D,\vec{\ell'}),\vec k):U$, so by Lemma \ref{generation lemma} we get
    $$\emptyset;\emptyset\vdash (\vec\ell,D,\vec{\ell'}):\circt(T,U), \quad \emptyset;Q''\vdash \vec k:T,$$
    for some $T$. We also know that $\append$ finds $(\vec k,D',\vec{k'})\cong(\vec\ell,D,\vec{\ell'})$ to apply to the outputs of $C$. By Lemma \ref{generation lemma} we get $\emptyset;Q''\vdash \vec k:T$, which we already know, as well as
    $$\emptyset;Q'''\vdash \vec{k'} :U, \quad D'\in \functorfam(Q'',Q'''),$$
    for some $Q'''$. By appending $D':Q''\to Q'''$ to $C:Q\to Q'\cup Q''$ we thus obtain $C':Q \to Q' \cup Q'''$ and because $\emptyset; Q''' \vdash \vec{k'}:U$ we conclude
    $$Q\vdash (C',\vec{k'}):U;Q'.$$
    
    \item Case of \textit{ctx-app-left}. Suppose $M\equiv NP$, where $N$ is not a value. We know $Q\vdash (C,NP):A;Q'$ and we must prove $Q\vdash (C',N'P):A;Q'$, where $(C,N)\red(C',N')$. We know $\emptyset;Q''\vdash NP:A$, so by Lemma \ref{generation lemma} we get
    $$\emptyset;Q_1''\vdash N:B\multimap A,\quad \emptyset;Q_2''\vdash P:B,$$
    for some $B$ and $Q_1'',Q_2''$ such that $Q_1'',Q_2''=Q''$. This entails $Q\vdash (C,N):B\multimap A; Q'\cup Q_2''$ and by inductive hypothesis and Lemma \ref{bridging} we get $\emptyset;Q'''\vdash N':B\multimap A$, with $C':Q\to Q' \cup Q_2'' \cup Q'''$. Finally, by the \textit{app} rule we get $\emptyset;Q''',Q_2''\vdash N'P:A$ and conclude
    $$Q\vdash (C',N'P):A;Q'.$$
    
    \item Case of \textit{ctx-app-right}. Suppose $M\equiv VN$, where $N$ is not a value. We know $Q\vdash (C,VN):A;Q'$ and we must prove $Q\vdash (C',VN'):A;Q'$, where $(C,N)\red(C',N')$. We know $\emptyset;Q''\vdash VN:A$, so by Lemma \ref{generation lemma} we get
    $$\emptyset;Q_1''\vdash V:B\multimap A,\quad \emptyset;Q_2''\vdash N:B,$$
    for some $B$ and $Q_1'',Q_2''$ such that $Q_1'',Q_2''=Q''$. This entails $Q\vdash (C,N):B; Q'\cup Q_1''$ and by inductive hypothesis and Lemma \ref{bridging} we get $\emptyset;Q'''\vdash N':B$, with $C':Q\to Q' \cup Q_1'' \cup Q'''$. Finally, by the \textit{app} rule we get $\emptyset;Q_1'',Q'''\vdash VN':A$ and conclude
    $$Q\vdash (C',VN'):A;Q'.$$
    
    \item Case of \textit{ctx-tuple-left}. Suppose $M\equiv \tuple{N,P}$, where $N$ is not a value. We know $Q\vdash (C, \tuple{N,P}):B\otimes C;Q'$ and we must prove $Q\vdash (C', \tuple{N',P}):B\otimes C;Q'$, where $(C,N)\red(C',N')$. We know $\emptyset;Q''\vdash \tuple{N,P}:B\otimes C$, so by Lemma \ref{generation lemma} we get
    $$\emptyset;Q_1''\vdash N:B, \quad \emptyset;Q_1''\vdash P:C,$$
    for some $B,C$ and $Q_1'',Q_2''$ such that $Q_1'',Q_2''=Q''$. This entails $Q\vdash(C,N):B;Q'\cup Q_2''$ and by inductive hypothesis and Lemma \ref{bridging} we get $\emptyset;Q'''\vdash N':B$ with $C':Q\to Q'\cup Q_2'' \cup Q'''$. Finally, by the \textit{tuple} rule we get $\emptyset;Q_1'',Q_2''\vdash \tuple{N',P}:B\otimes C$ and conclude
    $$Q\vdash (C',\tuple{N',P}):B\otimes C;Q'.$$
    
    \item Case of \textit{ctx-tuple-right}. Suppose $M\equiv \tuple{V,N}$, where $N$ is not a value. We know $Q\vdash (C, \tuple{V,N}):B\otimes C;Q'$ and we must prove $Q\vdash (C', \tuple{V,N'}):B\otimes C;Q'$, where $(C,N)\red(C',N')$. We know $\emptyset;Q''\vdash \tuple{V,N}:B\otimes C$, so by Lemma \ref{generation lemma} we get
    $$\emptyset;Q_1''\vdash V:B, \quad \emptyset;Q_1''\vdash N:C,$$
    for some $B,C$ and $Q_1'',Q_2''$ such that $Q_1'',Q_2''=Q''$. This entails $Q\vdash(C,N):B;Q'\cup Q_1''$ and by inductive hypothesis and Lemma \ref{bridging} we get $\emptyset;Q'''\vdash N':B$ with $C':Q\to Q'\cup Q_1'' \cup Q'''$. Finally, by the \textit{tuple} rule we get $\emptyset;Q_1'',Q_2''\vdash \tuple{V,N'}:B\otimes C$ and conclude
    $$Q\vdash (C',\tuple{V,N'}):B\otimes C;Q'.$$
    
    \item Case of \textit{ctx-let}. Suppose $M\equiv \letin{\tuple{x,y}}{N}{P}$, where $N$ is not a value. We know $Q\vdash (C, \letin{\tuple{x,y}}{N}{P}):A;Q'$ and we must prove $Q\vdash (C', \letin{\tuple{x,y}}{N'}{P}):A;Q'$, where $(C,N)\red(C',N')$. We know $\emptyset;Q''\vdash \letin{\tuple{x,y}}{N}{P}:A$, so by Lemma \ref{generation lemma} we get
    $$\emptyset;Q_1''\vdash N:B\otimes C,\quad x:B,y:C;Q_2''\vdash P:A,$$
    for some $B,C$ and $Q_1'',Q_2''$ such that $Q_1'',Q_2''=Q''$. This entails $Q\vdash(C,N):B\otimes C;Q'\cup Q_2''$ and by inductive hypothesis and Lemma \ref{bridging} we get $\emptyset;Q'''\vdash N':B\otimes C$ with $C':Q\to Q'\cup Q_2''\cup Q'''$. Finally, by the \textit{let} rule we get $\emptyset;Q_1'',Q_2''\vdash \letin{\tuple{x,y}}{N'}{P}:A$ and conclude
    $$Q\vdash (C', \letin{\tuple{x,y}}{N'}{P}):A;Q'.$$
    
    \item Case of \textit{ctx-force}. Suppose $M\equiv \force N$, where $N$ is not a value. We know $Q\vdash (C,\force N):A;Q'$ and we must prove $Q\vdash (C',\force N'):A;Q'$, where $(C,N)\red(C',N')$. We know $\emptyset;Q''\vdash\force N:A$, so by Lemma \ref{generation lemma} we get
    $$\emptyset;Q''\vdash N:\bang A,$$
    which entails $Q\vdash (C,N):\bang A;Q'$. By inductive hypothesis and Lemma \ref{bridging} we get $\emptyset;Q''':N':\bang A$, with $C':Q\to Q'\cup Q'''$. Finally, by the \textit{force} rule we get $\emptyset;Q'''\vdash\force N':A$ and conclude
    $$Q\vdash (C',\force N'):A;Q'.$$
    
    \item Case of \textit{ctx-box}. Suppose $M\equiv \boxt N$, where $N$ is not a value. We know $Q\vdash (C,\boxt_T N):\circt(T,U);Q'$ and we must prove $Q\vdash (C',\boxt_T N'):\circt(T,U);Q'$, where $(C,N)\red(C',N')$. We know $\emptyset;Q''\vdash \boxt_T N : \circt(T,U)$, so by Lemma \ref{generation lemma} we get
    $$\emptyset;Q''\vdash N: \bang(T\multimap U),$$
    which entails $Q\vdash(C,N):\bang(T\multimap U);Q'$. By inductive hypothesis and Lemma \ref{bridging} we get $\emptyset;Q'''\vdash N': \bang(T\multimap U)$ with $C':Q\to Q' \cup Q'''$. Finally, by the \textit{box} rule we get $\emptyset;Q'''\vdash \boxt_T N' : \circt(T,U)$ and conclude
    $$Q\vdash (C', \boxt_T N'):\circt(T,U):Q'.$$
    
    \emergencystretch=15pt
    \item Case of \textit{ctx-apply-left}. Suppose $M\equiv \apply(N,P)$, where $N$ is not a value. We know $Q\vdash (C,\apply(N,P)):U;Q'$ and we must prove $Q\vdash (C',\apply(N',P)):U;Q'$, where $(C,N)\red(C',N')$. We know $\emptyset;Q''\vdash \apply(N,P):U$, so by Lemma \ref{generation lemma} we get
    $$\emptyset;Q_1''\vdash N:\circt(T,U),\quad \emptyset;Q_2''\vdash P:T,$$
    for some $T$ and $Q_1'',Q_2''$ such that $Q_1'',Q_2''=Q''$. This entails $Q\vdash (C,N):\circt(T,U); Q'\cup Q_2''$ and by inductive hypothesis and Lemma \ref{bridging} we get $\emptyset;Q'''\vdash N':\circt(T,U)$, with $C':Q\to Q' \cup Q_2'' \cup Q'''$. Finally, by the \textit{app} rule we get $\emptyset;Q''',Q_2''\vdash \apply(N',P):U$ and conclude
    $$Q\vdash (C',\apply(N',P)):U;Q'.$$
    
    \item Case of \textit{ctx-apply-right}. Suppose $M\equiv \apply(V,N)$, where $N$ is not a value. We know $Q\vdash (C,\apply(V,N)):U;Q'$ and we must prove $Q\vdash (C',\apply(V,N')):U;Q'$, where $(C,N)\red(C',N')$. We know $\emptyset;Q''\vdash \apply(V,N):U$, so by Lemma \ref{generation lemma} we get
    $$\emptyset;Q_1''\vdash V:\circt(T,U),\quad \emptyset;Q_2''\vdash N:T,$$
    for some $T$ and $Q_1'',Q_2''$ such that $Q_1'',Q_2''=Q''$. This entails $Q\vdash (C,N):T; Q'\cup Q_1''$ and by inductive hypothesis and Lemma \ref{bridging} we get $\emptyset;Q'''\vdash N':T$, with $C':Q\to Q' \cup Q_1'' \cup Q'''$. Finally, by the \textit{app} rule we get $\emptyset;Q_1'',Q'''\vdash \apply(V,N'):U$ and conclude
    $$Q\vdash (C',\apply(V,N')):U;Q'.$$
\end{itemize}
\end{proofEnd}

\paragraph{} On the other hand, the \textit{progress} result tells us that if a configuration is well-typed, then its evaluation can always be carried on by taking a further step either in the main reduction or in a sub-reduction introduced by a boxing operation. In simpler words, it tells us that well-typed configurations are never \textit{deadlocked}. We already know (from Definition \ref{small-step deadlocking}) what it means for a configuration to \textit{go} into deadlock, and now we need to know what it means for a configuration to \textit{be} in a deadlock already. Intuitively, this definition corresponds to cases 1, 3 and 4 of Definition \ref{small-step deadlocking}, or more formally:

\begin{df}[Deadlock]
Let $\deadlocked$ be the smallest set of configurations such that:
\begin{enumerate}
    \item If $(C,M)$ is irreducible and $M$ is neither of the form $E[\boxt_T(\lift N)]$, nor a value, then $(C,M)\in \deadlocked$.
    
    \emergencystretch=15pt
    \item If $M$ is of the form $E[\boxt_T(\lift N)]$ and $(id_Q,N\vec\ell)\red^*(D,N)$, where $(Q,\vec\ell) = \freshlabels(N,T)$, and $(D,N)\in \deadlocked$, then $(C,M)\in \deadlocked$.
    
    \emergencystretch=15pt
    \item If $M$ is of the form $E[\boxt_T(\lift N)]$ and $(id_{Q},N\vec\ell)\red^*(D,V)$, where $(Q,\vec\ell) = \freshlabels(N,T)$, and $V$ is not a label tuple, then $(C,M)\in \deadlocked$.
\end{enumerate}
We say that a configuration $(C,M)$ is \emph{deadlocked} when $(C,M)\in \deadlocked$.
\end{df}

\noindent In other words, a configuration is deadlocked when it cannot be further reduced to a value or when it contains a boxing operation and the sub-reduction introduced by it ends up going into deadlock too. Note that this definition does not include the case in which such a sub-reduction diverges. In a sense, the following progress result asserts that well-typed configurations are safe from deadlock, but not from \textit{livelock}.

\begin{theoremEnd}[all end]{lem}\label{no label tuple entails no mtype}
Suppose $(C,V)$ is a configuration. If $V$ is not a label tuple, then there exists no $Q''$ such that $C: Q\to Q'\cup Q''$ and $\emptyset;Q'' \vdash M:T$, for any simple M-type $T$.
\end{theoremEnd}
\begin{proofEnd}
The proof is trivial and stems directly from the fact that simple M-types are just products of wire types and that labels are the only value that can be assigned a wire type.
\end{proofEnd}

\begin{theoremEnd}{thm}[Progress]
If $Q\vdash (C,M):A;Q'$ then $(C,M)\not\in \deadlocked$. That is, $(C,M)$ is not deadlocked.
\end{theoremEnd}
\begin{proofEnd}
By the definition of well-typedness we know that there exists $Q''$ disjoint from $Q'$ such that $C:Q\to Q'\cup Q''$ and $\emptyset;Q''\vdash M:A$. We therefore prove the contrapositive: if $(C,M)\in \deadlocked$, then there exists no $Q''$ such that $C:Q\to Q'\cup Q''$ and $\emptyset;Q''\vdash M :A$. We proceed by induction on the structure of the proof that $(C,M)\in \deadlocked$, distinguishing the case in which $M\not\equiv E[\boxt_T(\lift L)]$ from the case in which $M\equiv E[\boxt_T(\lift L)]$.

\begin{itemize}
\item Case $M\not\equiv E[\boxt_T(\lift L)]$. In this case, $M$ is necessarily not a value and there exist no $C',M'$ such that $(C,M)\red(C',M')$. We must prove that there exists no $Q''$ such that $C:Q\to Q'\cup Q''$ and $\emptyset;Q''\vdash M :A$ We proceed by induction on the form of $M$:
\begin{itemize}
    \item Case $M\equiv x$. This case is impossible, since by the definition of small-step configuration $M$ must contain no free variables.
    
    \item Case $M \equiv \ell$. This case is impossible since $\ell$ is a value.
    
    \item Case $M \equiv \lambda x. N$. This case is impossible since $\lambda x. N$ is a value.
    
    \item Case $M \equiv NP$. We know that $(C,NP)$ is irreducible and $NP$ is not of the form $E[\boxt_T(\lift L)]$ and we must prove that there exists no $Q''$ such that $C:Q\to Q'\cup Q''$ and $\emptyset;Q''\vdash NP:A$. This would amount to finding $Q''_1,Q''_2$ such that $Q''_1,Q''_2 = Q''$ and
    $$\emptyset;Q_1''\vdash N:B\multimap A,\quad \emptyset;Q_2''\vdash P:B,$$
    for some $B$. By corollaries \ref{subterm irreducibility} and \ref{context propagation corollary} we know that $(C,N)$ is irreducible and $N$ is not of the form $E'[\boxt_T(\lift L)]$, so we distinguish two cases:
    \begin{itemize}
        \item $N$ is a value. In this case $N\equiv \lambda x.O$ for some $x$ and $O$, by Lemma \ref{value generation}, and we consider $P$. By corollaries \ref{subterm irreducibility} and \ref{context propagation corollary} we know that $(C,P)$ is irreducible and $P$ is not of the form $E'[\boxt_T(\lift L)]$. Also, $P$ is not a value, since if that were the case then $(C,(\lambda x.O)P)$ would be reducible by the \textit{$\beta$-reduction} rule. Therefore, by inductive hypothesis we get that there exists no $Q''_2$ such that $C:Q\to Q'\cup Q''_1\cup Q''_2$ and $\emptyset;Q''_2\vdash P : B$ and conclude that there exists no $Q''$ such that $C:Q\to Q'\cup Q''$ and $\emptyset;Q''\vdash NP:A$.
        \item $N$ is not a value. In this case, by inductive hypothesis we get that there exists no $Q''_1$ such that $C:Q\to Q'\cup Q''_1\cup Q''_2$ and $\emptyset;Q''_1\vdash N : B\multimap A$ and conclude that there exists no $Q''$ such that $C:Q\to Q'\cup Q''$ and $\emptyset;Q''\vdash NP:A$.
    \end{itemize}
    
    \item Case $M\equiv \tuple{N,P}$. We know that $(C,\tuple{N,P})$ is irreducible and $\tuple{N,P}$ is neither a value, nor of the form $E[\boxt_T(\lift L)]$ and we must prove that there exists no $Q''$ such that $C:Q\to Q'\cup Q''$ and $\emptyset;Q''\vdash \tuple{N,P}:B\otimes C$. This would amount to finding $Q''_1,Q''_2$ such that $Q''_1,Q''_2 = Q''$ and
    $$\emptyset;Q_1''\vdash N:B,\quad \emptyset;Q_2''\vdash P:C,$$
    for some $B,C$. By corollaries \ref{subterm irreducibility} and \ref{context propagation corollary} we know that $(C,N)$ is irreducible and $N$ is not of the form $E'[\boxt_T(\lift L)]$, so we distinguish two cases:
    \begin{itemize}
        \item $N$ is a value. In this case we consider $P$. By corollaries \ref{subterm irreducibility} and \ref{context propagation corollary} we know that $(C,P)$ is irreducible and $P$ is not of the form $E'[\boxt_T(\lift L)]$. Also, $P$ is not a value, since if that were the case then $\tuple{N,P}$ would be a value. Therefore, by inductive hypothesis we get that there exists no $Q''_2$ such that $C:Q\to Q'\cup Q''_1\cup Q''_2$ and $\emptyset;Q''_2\vdash P : C$ and conclude that there exists no $Q''$ such that $C:Q\to Q'\cup Q''$ and $\emptyset;Q''\vdash \tuple{N,P}:B\otimes C$.
        \item $N$ is not a value. In this case, by inductive hypothesis we get that there exists no $Q''_1$ such that $C:Q\to Q'\cup Q''_1\cup Q''_2$ and $\emptyset;Q''_1\vdash N : B$ and conclude that there exists no $Q''$ such that $C:Q\to Q'\cup Q''$ and $\emptyset;Q''\vdash \tuple{N,P}:B\otimes C$.
    \end{itemize}
    
    \item Case $M\equiv \letin{\tuple{x,y}}{N}{P}$. We know that $(C,\letin{\tuple{x,y}}{N}{P})$ is irreducible and $\letin{\tuple{x,y}}{N}{P}$ is not of the form $E[\boxt_T(\lift L)]$ and we must prove that there exists no $Q''$ such that $C:Q\to Q' \cup Q''$ and $\emptyset;Q''\vdash \letin{\tuple{x,y}}{N}{P} : A$. This would amount to finding $Q_1'',Q_2''$ such that $Q_1'',Q_2''=Q''$ and
    $$  \emptyset;Q_1''\vdash N : B\otimes C,\quad
        x:B,y:C;Q_2''\vdash P: A,    
    $$
    
    for some $B,C$. By corollaries \ref{subterm irreducibility} and \ref{context propagation corollary} we know that $(C,N)$ is irreducible and $N$ is not of the form $E'[\boxt_T(\lift L)]$. Also, $N$ is not a value, since if that were the case then $(C,\letin{\tuple{x,y}}{N}{P})$ would be reducible by the \textit{let} rule. Therefore, by inductive hypothesis we get that there exists no $Q_1''$ such that $C:Q\to Q' \cup Q_1'' \cup Q_2''$ and $\emptyset;Q_1'' \vdash N:B\otimes C$ and conclude that there exists no $Q''$ such that $C: Q \red Q' \cup Q''$ and $\emptyset;Q''\vdash \letin{\tuple{x,y}}{N}{P} : A$. 
    
    \item Case $M \equiv \lift N$. This case is impossible since $\lift N$ is a value.
    
    \item Case $M \equiv \force N$. We know that $(C,\force N)$ is irreducible and $\force N$ is not of the form $E[\boxt_T(\lift L)]$ and we must prove that there exists no $Q''$ such that $C:Q\to Q'\cup Q''$ and $\emptyset;Q''\vdash \force N:A$. This would amount to finding $Q''$ such that
    $$\emptyset;Q''\vdash N:\bang A.$$
    By corollaries \ref{subterm irreducibility} and \ref{context propagation corollary} we know that $(C,N)$ is irreducible and $N$ is not of the form $E'[\boxt_T(\lift L)]$. Also, $N$ is not a value, since if that were the case then we would have $N\equiv \lift L$ for some $L$, by Lemma \ref{value generation}, and $(C,\force (\lift L))\red (C,L)$ by the \textit{force} rule. Therefore, by inductive hypothesis we get that there exists no $Q''$ such that $C:Q\to Q' \cup Q''$ and $\emptyset;Q''\vdash N:\bang A$ and conclude that there exists no $Q''$ such that $C:Q\to Q' \cup Q''$ and $\emptyset;Q''\vdash \force N: A$
    
    \item Case $M \equiv \boxt_T N$. We know that $(C,\boxt_T N)$ is irreducible and $\boxt_T N$ is not of the form $E[\boxt_T(\lift L)]$ and we must prove that there exists no $Q''$ such that $C:Q\to Q'\cup Q''$ and $\emptyset;Q''\vdash\boxt_T N:\circt(T,U)$. This would amount to finding $Q''$ such that
    $$\emptyset;Q''\vdash N:\bang(T\multimap U).$$
   By corollaries \ref{subterm irreducibility} and \ref{context propagation corollary} we know that $(C,N)$ is irreducible and $N$ is not of the form $E'[\boxt_T(\lift L)]$. Also, $N$ is not a value, since if that were the case then we would have $N\equiv\lift L$ for some $L$, by Lemma \ref{value generation}, and $M\equiv E[\boxt_T(\lift L)]$ for $E\equiv [\cdot]$, which would contradict the hypothesis. Therefore, by inductive hypothesis we get that there exists no $Q''$ such that $C:Q\to Q' \cup Q''$ and $\emptyset;Q''\vdash N:\bang (T\multimap U)$ and we conclude that there exists no $Q''$ such that $C:Q\to Q' \cup Q''$ and $\emptyset;Q''\vdash \boxt_T N: \circt(T,U)$. 
    
    \item Case $M \equiv \apply(N,P)$. We know that $(C,\apply(N,P))$ is irreducible and also that $\apply(N,P)$ is not of the form $E[\boxt_T(\lift L)]$ and we must prove that there exists no $Q''$ such that $C:Q\to Q'\cup Q''$ and $\emptyset;Q''\vdash \apply(N,P):U$. This would amount to finding $Q''_1,Q''_2$ such that $Q''_1,Q''_2 = Q''$ and
    $$\emptyset;Q_1''\vdash N:\circt(T,U),\quad \emptyset;Q_2''\vdash P:T,$$
    for some $T$. By corollaries \ref{subterm irreducibility} and \ref{context propagation corollary} we know that $(C,N)$ is irreducible and $N$ is not of the form $E'[\boxt_T(\lift L)]$, so we distinguish two cases:
    \begin{itemize}
        \item $N$ is a value. In this case $N\equiv (\vec\ell,D,\vec{\ell'})$ for some $\vec\ell,D$ and $\vec{\ell'}$, by Lemma \ref{value generation}, and we consider $P$. By corollaries \ref{subterm irreducibility} and \ref{context propagation corollary} we know that $(C,P)$ is irreducible and $P$ is not of the form $E'[\boxt_T(\lift L)]$. Also, $P$ is not a value, since if that were the case then we would have $P\equiv \vec k$ for some $\vec k$, by Lemma \ref{value generation}, and $(C,(\vec\ell,D,\vec{\ell'}),\vec k)\red(C',\vec {k'})$ for $(C',\vec{k'})=\append(C,\vec k, \vec\ell,D,\vec{\ell'})$, by the \textit{apply} rule. Therefore, by inductive hypothesis we get that there exists no $Q''_2$ such that $C:Q\to Q'\cup Q''_1\cup Q''_2$ and $\emptyset;Q''_2\vdash P : T$ and conclude that there exists no $Q''$ such that $C:Q\to Q'\cup Q''$ and $\emptyset;Q''\vdash \apply(N,P):U$.
        \item $N$ is not a value. In this case, by inductive hypothesis we get that there exists no $Q''_1$ such that $C:Q\to Q'\cup Q''_1\cup Q''_2$ and $\emptyset;Q''_1\vdash N :\circt(T,U)$ and we conclude that there exists no $Q''$ such that $C:Q\to Q'\cup Q''$ and $\emptyset;Q''\vdash \apply(N,P):U$.
    \end{itemize}
    
    \item Case $M \equiv (\vec\ell, D, \vec{\ell'})$. This case is impossible since $(\vec\ell, D, \vec{\ell'})$ is a value.
    
\end{itemize}

\item Case $M\equiv E[\boxt_T(\lift L)]$. In this case, we necessarily have $(id_{Q_0},L\vec\ell)\red^*(C_n,M_n)$, for $(Q_0,\vec\ell)=\freshlabels(L,T)$, and either $(C_n,M_n)\in \mathcal{D}$ or $M_n$ is a value other than a label tuple. We proceed by induction on the form of $E$:
\begin{itemize}
    \item Case $E\equiv [\cdot]$. In this case $M\equiv \boxt_T(\lift L)$. Suppose there exists $Q''$ such that $C:Q\to Q'\cup Q''$ and $\emptyset;Q''\vdash \boxt_T(\lift L):\circt(T,U)$. By applying Lemma \ref{generation lemma} twice we get that $Q''=\emptyset$ and 
    $$\emptyset;\emptyset\vdash \lift L :\bang(T\multimap U), \quad \emptyset;\emptyset \vdash L : T \multimap U.$$
    This entails $\emptyset;Q_0\vdash L\vec\ell:U$, where $U$ is a simple M-type, by the definition of $\freshlabels$ and the \textit{app} rule. We know that $(id_{Q_0},L\vec\ell)\red^*(C_n,M_n)$, that is, $(id_{Q_0},L\vec\ell)$ reduces to $(C_n,M_n)$ in zero or more steps, so we distinguish two possibilities:
    \begin{itemize}
        \item $(id_{Q_0},L\vec\ell) = (C_n,M_n)$. In this case we necessarily have $(id_{Q_0},L\vec\ell)\in\deadlocked$, since $L\vec\ell$ is not a value, so by the outer inductive hypothesis we know that there exists no $Q''_0$ such that $id_{Q_0}:Q_0\to Q_0'\cup Q_0''$ and $\emptyset;Q_0''\vdash L\vec\ell:B$, for any $B$, which contradicts $\emptyset;Q_0\vdash L\vec\ell:U$.
        
        \item $(id_{Q_0},L\vec\ell)\red^+(C_n,M_n)$. In this case, we know that $Q_0\vdash(id_{Q_0},L\vec\ell):U;\emptyset$ and by a finite number of consecutive applications of Theorem \ref{subject reduction} we get $Q_0\vdash(C_n,M_n):U;\emptyset$. This entails $C_n:Q_0\to Q_0 $ and $\emptyset;Q_0\vdash M_n:U$. At the same time, we know that either $(C_n,M_n)\in \deadlocked$, and by the outer inductive hypothesis there exists no $Q_0''$ such that $C_n:Q_0\to Q'_0 \cup Q''_0$ and $\emptyset;Q_0''\vdash M_n:B$, for any $B$, or $M_n$ is a value other than a label tuple, and by Lemma \ref{no label tuple entails no mtype} there exists no $Q''_0$ such that $C':Q_0\to Q_0'\cup Q_0''$ and $\emptyset;Q_0''\vdash M_n:U$, for any simple M-type $U$. Both cases contradict $\emptyset;Q_0\vdash M_n:U$.
    \end{itemize}
    
    In either case, we reach a contradiction and conclude that there exists no $Q''$ such that $C:Q\to Q'\cup Q''$ and $\emptyset;Q''\vdash \boxt_T(\lift L):\circt(T,U)$.
    
    \item Case $E\equiv E'P$. In this case $M\equiv NP$ and $N$ is of the form $E'[\boxt_T(\lift L)]$. By inductive hypothesis we get that there exists no $Q_1''$ such that $C:Q\to Q'\cup Q''_1\cup Q''_2$ and $\emptyset;Q''_1\vdash N:B\multimap A$ and conclude that there exists no $Q''$ such that $C:Q\to Q'\cup Q''$ and $\emptyset;Q''\vdash NP:A$.
    
    \item Case $E\equiv VE'$. In this case $M\equiv VP$ and $P$ is of the form $E'[\boxt_T(\lift L)]$. By inductive hypothesis we get that there exists no $Q_2''$ such that $C:Q\to Q'\cup Q''_1\cup Q''_2$ and $\emptyset;Q''_2\vdash P:B$ and conclude that there exists no $Q''$ such that $C:Q\to Q'\cup Q''$ and $\emptyset;Q''\vdash VP:A$.
    
    \item Case $E\equiv \tuple{E',P}$. In this case $M\equiv \tuple{N,P}$ and $N$ is of the form $E'[\boxt_T(\lift L)]$. By inductive hypothesis we get that there exists no $Q_1''$ such that $C:Q\to Q'\cup Q''_1\cup Q''_2$ and $\emptyset;Q''_1\vdash N:B$ and conclude that there exists no $Q''$ such that $C:Q\to Q'\cup Q''$ and $\emptyset;Q''\vdash \tuple{N,P}:B\otimes C$.
    
    \item Case $E\equiv \tuple{V,E'}$. In this case $M\equiv \tuple{V,P}$ and $P$ is of the form $E'[\boxt_T(\lift L)]$. By inductive hypothesis we get that there exists no $Q_2''$ such that $C:Q\to Q'\cup Q''_1\cup Q''_2$ and $\emptyset;Q''_2\vdash P:C$ and conclude that there exists no $Q''$ such that $C:Q\to Q'\cup Q''$ and $\emptyset;Q''\vdash \tuple{V,P}:B\otimes C$.
    
    \item Case $E\equiv \letin{\tuple{x,y}}{E'}{P}$. In this case $M\equiv \letin{\tuple{x,y}}{N}{P}$ and $N$ is of the form $E'[\boxt_T(\lift L)]$. By inductive hypothesis we get that there exists no $Q_1''$ such that $C:Q\to Q'\cup Q_1''\cup Q_2''$ and $\emptyset;Q_1''\vdash N:B\otimes C$ and conclude that there exists no $Q''$ such that $C:Q\to Q'\cup Q''$ and $\emptyset;Q''\vdash \letin{\tuple{x,y}}{N}{P}:A$.
    
    \item Case $E\equiv \force E'$. In this case $M\equiv \force N$ and $N$ is of the form $E'[\boxt_T(\lift L)]$. By inductive hypothesis we get that there exists no $Q''$ such that $C:Q\to Q'\cup Q''$ and $\emptyset;Q''\vdash N:\bang A$ and conclude that there exists no $Q''$ such that $C:Q\to Q'\cup Q''$ and $\emptyset;Q''\vdash \force N:A$.
    
    \item Case $E\equiv \boxt_T E'$. In this case $M\equiv \boxt_T N$ and $N$ is of the form $E'[\boxt_T(\lift L)]$. By inductive hypothesis we get that there exists no $Q''$ such that $C:Q\to Q'\cup Q''$ and $\emptyset;Q''\vdash N:\bang (T\multimap U)$ and conclude that there exists no $Q''$ such that $C:Q\to Q'\cup Q''$ and $\emptyset;Q''\vdash \boxt_T N:\circt(T,U)$.
    
    \item Case $E\equiv \apply(E',P)$. In this case $M\equiv \apply(N,P)$ and $N$ is of the form $E'[\boxt_T(\lift L)]$. By inductive hypothesis we get that there exists no $Q_1''$ such that $C:Q\to Q'\cup Q''_1\cup Q''_2$ and $\emptyset;Q''_1\vdash N:\circt(T,U)$ and conclude that there exists no $Q''$ such that $C:Q\to Q'\cup Q''$ and $\emptyset;Q''\vdash \apply(N,P):U$.
    
    \item Case $E\equiv \apply(V,E')$. In this case $M\equiv \apply(V,P)$ and $P$ is of the form $E'[\boxt_T(\lift L)]$. By inductive hypothesis we get that there exists no $Q_2''$ such that $C:Q\to Q'\cup Q''_1\cup Q''_2$ and $\emptyset;Q''_2\vdash P:T$ and conclude that there exists no $Q''$ such that $C:Q\to Q'\cup Q''$ and $\emptyset;Q''\vdash \apply(V,P):U$.
\end{itemize}
\end{itemize}
\end{proofEnd}

\subsubsection{Limitations of the Current Semantics}

In this section we introduced Proto-Quipper-M and we gave a small-step semantics for its evaluation. We showed that this semantics is equivalent to the original big-step semantics by Rios and Selinger and we gave the relevant subject reduction and progress results. The proposed small-step semantics constitutes a first step towards our goal, which is devising a machine semantics for Proto-Quipper-M, but it falls short on one crucial detail: it is not truly small-step. This is due to the fact that the \textit{box} reduction rule actually requires that a full evaluation $(id_Q,N\vec\ell)\red\dots\red(D,\vec{\ell'})$, of arbitrary length, take place in its premises in order to compute the individual step $(C,\boxt_T(\lift N))\red (C,(\vec\ell,D,\vec{\ell'}))$. Furthermore, unlike the other recursive rules of the semantics (the contextual ones), the \textit{box} rule recurs on the term $N\vec\ell$, which is not a sub-term of $\boxt_T(\lift N)$. As a consequence, this small-step semantics is harder-than-usual to reason about and on multiple occasions we have had to consider the \textit{box} case separately when giving definitions or proving results about the semantics (take, for example, Definition \ref{small-step deadlocking}).

\section{Towards a Machine Semantics}
\label{towards machine semantics}

In this section we first introduce an intermediate semantics that solves the problems with the boxing operator that we just mentioned in the previous section, and that consequently is truly small-step. Then, we proceed to give the actual machine semantics for Proto-Quipper-M which is the objective of this paper. We also give a number of definitions and results about the individual semantics, which will be useful in the next section, when we explore the relationship between the different semantics.

\paragraph{}In his PhD thesis \cite{proto-quipper-s}, Ross gives small-step semantics for Proto-Quipper (specifically, Proto-Quipper-S) and avoids our pitfall with the \textit{box} rule by introducing a term of the form $(\vec\ell,D,M)$ and a contextual rule that allows to reduce $(C,(\vec\ell,D,M))\red (C,(\vec\ell,D',M'))$ whenever $(D,M)\red(D',M')$. This approach introduces an implicit evaluation stack into the language, with every term of the form $(\vec\ell,D,M)$ conceptually representing an individual stack frame. This effectively avoids the problems we encountered with \textit{box}, but it also comes with its own set of complications. For instance, despite the fact that terms of the form $(\vec\ell,D,M)$ are ``intermediate forms'' which are never meant to be written by the users of the language, every result that holds for terms in general must hold for $(\vec\ell,D,M)$ too.

\paragraph{}It is mainly for this reason that we decide to take a different route. Specifically, instead of implicitly modelling a stack through the structure of the terms inside a configuration, we explicitly add a stack to the configurations themselves. At first, we only do it for the sub-reductions introduced by the \textit{box} rule, in order to get a fully small-step semantics which we call a \textit{stacked semantics}. Later, we extend this approach to all the contextual rules to obtain what we call a \textit{machine semantics}.

\subsection{Stacked Semantics}
\label{stacked semantics section}

As the name suggests, the stacked semantics operates on ``stacks of configurations''. Every time a term of the form $\boxt_T(\lift N)$ is ready to be evaluated, a new configuration $(id_Q,N\vec\ell)$ is pushed on the stack and marked with the labels that are locally available to its evaluation (in this case, $\vec\ell$). When $(id_Q,N\vec\ell)$ eventually evaluates to $(D,\vec{\ell'})$, the configuration is popped from the stack and $\boxt_T(\lift N)$ is replaced with $(\vec\ell,D,\vec{\ell'})$ in the previous stack frame.

\begin{df}[Stacked Configuration]
A \emph{stacked configuration} is given by the following grammar:
$$X,Y ::= \epsilon \mid (C,M)^{\vec\ell}.X ,$$
where $C$ is a circuit, $M$ is a term with no free variables and $\vec\ell$ is a label tuple, which can possibly be empty ($\vec\ell = \emptyset$).
\end{df}

\begin{df}[Well-formed Stacked Configuration]
A stacked configuration is said to be \emph{well-formed} when it is of the form $(C,M)^{\vec\ell}.X$ for some $C,M,\vec\ell$ and $X$ and either one of the following conditions is met:
\begin{enumerate}
    \item $\vec\ell = \emptyset$ and $X\equiv \epsilon$,
    \item $\vec\ell \neq \emptyset$ and $X$ is a well-formed stacked configuration.
\end{enumerate}
\end{df}

\noindent In this case a configuration of the form $(C,M)^\emptyset.\epsilon$ represents a situation in which no sub-reductions are being evaluated, and all the labels occurring in $M$ are global (i.e. they were not introduced by a boxing operation). From this point onward we will assume that every stacked configuration we work with is well-formed. We can define a reduction relation $\stred$ on stacked configurations, with the following rules:

$$
\frac{(C,M)\red(D,N) \quad M \not\equiv E[\boxt_T(\lift P)]}
{(C,M)^{\vec\ell}.X \stred (D,N)^{\vec\ell}.X}\textit{head}
$$
\vspace{5pt}
$$
\frac{(Q,\vec\ell) = \freshlabels(M,T)}
{(C,E[\boxt_T(\lift M)])^{\vec k}.X \stred (id_Q,M\vec\ell)^{\vec\ell}.(C,E[\boxt_T(\lift M)])^{\vec k}.X}\textit{step-in}
$$
\vspace{5pt}
$$
\frac{\void}
{(D,\vec{\ell'})^{\vec\ell}.(C,E[\boxt_T(\lift M)])^{\vec k}.X \stred (C,E[(\vec\ell,D,\vec{\ell'})])^{\vec k}.X}\textit{step-out}
$$
\vspace{5pt}

\noindent where can clearly see that if a term does not contain a sub-term of the form $\boxt_T(\lift M)$ which is ready to be evaluated, then by definition the stacked semantics behaves exactly like the small-step semantics, reducing the head configuration (the active stack frame). It is when a term of the form $\boxt_T(\lift M)$ is ready to be evaluated that we part ways with the small-step semantics and we start taking advantage of the stack structure of these new configurations. Naturally, the reduction relation $\stred$ is deterministic.

\begin{lem}\label{stacked rule mutex}
Every stacked configuration $(C,M)^{\vec\ell}.X$ can be reduced by at most one rule of the stacked semantics.
\end{lem}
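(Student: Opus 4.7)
The plan is to observe that the three stacked reduction rules --- \textit{head}, \textit{step-in}, and \textit{step-out} --- impose pairwise incompatible conditions on the top-frame configuration $(C,M)^{\vec\ell}$ and its stack below, so at most one of them can match a given stacked configuration.

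First I would isolate the applicability conditions. The \textit{head} rule demands that $(C,M) \red (C',M')$ in the small-step semantics together with the side condition $M \not\equiv E[\boxt_T(\lift P)]$ for any evaluation context $E$ and term $P$. The \textit{step-in} rule demands $M \equiv E[\boxt_T(\lift P)]$ for some $E$ and $P$. The \textit{step-out} rule demands that the top frame be of the form $(D,\vec{\ell'})^{\vec\ell}$ with $\vec{\ell'}$ a label tuple (in particular a value) and that the stack contain a second frame of the specific form $(C',E'[\boxt_T(\lift M')])^{\vec k}$.

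Then I would verify pairwise exclusion. The \textit{head} and \textit{step-in} rules are incompatible by their explicit side conditions on the syntactic form of $M$. The \textit{head} and \textit{step-out} rules are incompatible because the former requires $(C,M)$ to be $\red$-reducible, while in the latter $M$ is a value, and by Lemma \ref{small-step rule mutex} (or equivalently by direct inspection of the small-step rules) no value admits a small-step reduction. Finally, \textit{step-in} and \textit{step-out} are incompatible because a label tuple cannot be written as $E[\boxt_T(\lift P)]$ for any evaluation context $E$ and term $P$; this follows immediately from the grammars of values and evaluation contexts, since any decomposition $E[\boxt_T(\lift P)]$ contains the $\boxt_T$ symbol after unfolding $E$, whereas a label tuple is built only from label names and pairing.

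I expect no substantial obstacle: the only point that might look subtle is arguing that every term of the form $E[\boxt_T(\lift P)]$ is non-value, but this is trivial by induction on $E$, since the value grammar never nests a non-value into a value constructor in a position matching the hole of $E$. Since the three rules are thus pairwise exclusive on the form of $(C,M)^{\vec\ell}.X$, at most one of them applies, which is exactly the claim.
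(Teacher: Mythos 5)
Your proposal is correct and follows essentially the same route as the paper: both argue pairwise incompatibility of the three rules' applicability conditions (value vs.\ reducible for \textit{head}/\textit{step-out}, the explicit side condition for \textit{head}/\textit{step-in}, and the fact that a value cannot have the form $E[\boxt_T(\lift P)]$ for \textit{step-in}/\textit{step-out}). Your version is just slightly more explicit about the last point, which the paper takes for granted.
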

\begin{proof}
If a configuration $(C,M)^{\vec\ell}.X$ can be reduced by the \textit{step-out} rule, it means that $M$ is a value. Therefore, $(C,M)^{\vec\ell}.X$ cannot be reduced by either \textit{head} (because $(C,V)$ is irreducible) or \textit{step-in} (because $V$ cannot be of the form $E[\boxt_T(\lift N)]$, for any $E,N$). At the same time, if $(C,M)^{\vec\ell}.X$ can be reduced by the \textit{step-in} rule, it means that $M\equiv E[\boxt_T(\lift N)]$ and $(C,E[\boxt_T(\lift N)])^{\vec\ell}.X$ cannot be reduced by \textit{head}. This is sufficient to conclude that at most one rule can be applied to $(C,M)^{\vec\ell}.X$.
\end{proof}

\begin{prop}[Determinism of Stacked Semantics]\label{stacked determinism}
The reduction relation $\stred$ is deterministic. That is, if $(C,M)^{\vec\ell}.X \stred (D,N)^{\vec k}.Y$, then for every stacked configuration $(D',N')^{\vec{k'}}.Y'$ such that $(C,M)^{\vec\ell}.X \stred (D',N')^{\vec{k'}}.Y'$ we have $D=D',N\equiv N',k=k'$ and $Y=Y'$.
\end{prop}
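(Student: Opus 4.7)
The plan is to leverage Lemma \ref{stacked rule mutex} together with the determinism results already established for the small-step semantics. Since that lemma guarantees that at most one of the three stacked-reduction rules applies to any given configuration, it suffices to show that each individual rule, once identified as applicable, yields a unique output configuration. I would therefore structure the proof as a short case analysis over the three rules \textit{head}, \textit{step-in}, and \textit{step-out}.

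For the \textit{head} rule, the output depends entirely on the underlying small-step reduction $(C,M)\red(D,N)$, whose determinism is precisely Proposition \ref{small-step determinism}. Since $\vec\ell$ and the tail $X$ are carried over unchanged, the successor stacked configuration is pinned down uniquely.

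For the \textit{step-in} and \textit{step-out} rules, the only non-trivial point is that the evaluation context $E$ and the subterm $M$ appearing in a pattern of the form $E[\boxt_T(\lift M)]$ must be uniquely determined by the ambient term. This is where I would appeal to the context results from the previous subsection: $\boxt_T(\lift M)$ has the shape $\boxt_T V$ and therefore qualifies as a proto-redex, so by Lemma \ref{context identity lemma} any two decompositions $E[\boxt_T(\lift M)] \equiv E'[\boxt_{T'}(\lift M')]$ satisfy $E \equiv E'$, and then Lemma \ref{term identity lemma} forces $T \equiv T'$ and $M \equiv M'$. For \textit{step-in}, the remaining data $(Q,\vec\ell)$ is produced by $\freshlabels(M,T)$, which is treated throughout the paper as a deterministic function of its inputs, so the new head frame $(id_Q, M\vec\ell)^{\vec\ell}$ is uniquely determined; for \textit{step-out}, nothing further is needed, since plugging the boxed circuit $(\vec\ell, D, \vec{\ell'})$ into the unique hole of the unique $E$ yields a single well-defined successor.

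I do not expect a genuine obstacle here: once the mutual exclusion of rules is in hand, determinism reduces to (i) determinism of $\red$, (ii) determinism of the decomposition $E[\boxt_T(\lift M)]$, and (iii) determinism of $\freshlabels$. The mildest subtlety is point (iii): if one interpreted $\freshlabels$ as returning merely \emph{some} fresh labels, strict syntactic determinism would have to be relaxed to determinism up to renaming of fresh labels, but under the paper's convention that $\freshlabels$ is a function this concern disappears, and the proposition follows immediately.
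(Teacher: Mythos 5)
Your proof is correct and follows essentially the same route as the paper's: Lemma \ref{stacked rule mutex} for mutual exclusion of the three rules, Proposition \ref{small-step determinism} for the \textit{head} rule, and the functionality of $\freshlabels$ for \textit{step-in}. Your additional appeal to Lemmas \ref{context identity lemma} and \ref{term identity lemma} to show that the decomposition $E[\boxt_T(\lift M)]$ is unique is a welcome piece of extra precision that the paper's one-line treatment of the boxing rules glosses over.
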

\begin{proof}
We already know by Lemma \ref{stacked rule mutex} that at most one rule can be applied to reduce any given stacked configuration. What is left to do is prove that each rule is deterministic by itself, which is straightforward: the \textit{head} rule is deterministic thanks to Proposition \ref{small-step determinism}, while the \textit{step-out} rule is deterministic because $\freshlabels$ is a function and \textit{step-out} is trivially deterministic. We therefore conclude that $\stred$ is deterministic.
\end{proof}

\subsubsection{Initiality and Reachability}
\label{stacked initiality and reachability}

When reasoning about stacked configurations we must operate a necessary distinction between ``starting'' and ``intermediate'' configurations that we did not have to make with small-step configurations. Whereas in the small-step semantics we could expect a computation to start from any configuration $(C,M)$, in the stacked semantics we are only interested in starting a computation from a configuration of the form $(C,M)^\emptyset.\epsilon$, where the stack is empty and all of the labels in $M$ are global. Intuitively, a configuration of this form corresponds precisely to the small-step configuration $(C,M)$.

\begin{df}[Initial Stacked Configuration]
A stacked configuration is said to be \emph{initial} when it is of the form $(C,M)^\emptyset.\epsilon$. The set of initial stacked configurations is denoted by $\initialstacked$.
\end{df}

\noindent We also distinguish between the stacked configurations which can be reached by a computation starting from an initial configuration and those which cannot. For example, a configuration of the form $(C,\lambda x. x)^{\vec\ell}.(D,\lambda x. x)^\emptyset.\epsilon$, although well-formed, is clearly impossible to obtain during the evaluation of an initial configuration, since new stack frames are only introduced when a term containing a boxing operator is encountered. As a result, configurations such as this one are ill-natured in their own way. We therefore give the definition of \textit{reachable stacked configuration}.

\begin{df}[Reachable Stacked Configuration]
A stacked configuration of the form $(C,M)^{\vec \ell}.X$ is said to be \emph{reachable} when either of the following is true:
\emergencystretch=10pt
\begin{enumerate}
    \item $(C,M)^{\vec \ell}.X \in \initialstacked$,
    \item There exists a stacked configuration $(D,N)^{\vec{k}}.Y$ such that $(D,N)^{\vec{k}}.Y$ is reachable and $(D,N)^{\vec{k}}.Y \stred (C,M)^{\vec \ell}.X$.
\end{enumerate}
\end{df}

\subsubsection{Convergence, Deadlock and Divergence}
\label{stacked convergence, deadlock and divergence}

Like we did with small-step configurations, we define what it means for a stacked configuration to converge, go into deadlock or diverge.

\begin{df}[Converging Stacked Configuration]
Let $\converges$ be the smallest unary relation over stacked configurations such that:
\begin{enumerate}
    \item For every circuit $C$ and value $V$, $(C,V)^\emptyset.\epsilon \converges$,
    
    \item If $(C,M)^{\vec\ell}.X \stred (D,N)^{\vec{k}}.Y$ and $(D,N)^{\vec k}.Y \converges$, then $(C,M)^{\vec\ell}.X \converges$.
\end{enumerate}
We say that a configuration $(C,M)^{\vec\ell}.X$ is \emph{converging} when $(C,M)^{\vec\ell}.X\converges$.
\end{df}

\begin{df}[Deadlocking Stacked Configuration]
Let $\deadlocks$ be the smallest unary relation over stacked configurations such that:
\begin{enumerate}
    \item If there exists no $(D,N)^{\vec{k}}.Y$ such that $(C,M)^{\vec\ell}.X \stred (D,N)^{\vec{k}}.Y$ and either $M$ is not a value or $\vec\ell \neq \emptyset,X\neq \epsilon$, then $(C,M)^{\vec\ell}.X\deadlocks$,
    
    \item If $(C,M)^{\vec\ell}.X \stred (D,N)^{\vec{k}}.Y$ and $(D,N)^{\vec k}.Y \deadlocks$, then $(C,M)^{\vec\ell}.X \deadlocks$.
\end{enumerate}
We say that a configuration $(C,M)^{\vec\ell}.X$ \emph{goes into deadlock} when $(C,M)^{\vec\ell}.X\deadlocks$.
\end{df}

\begin{df}[Diverging Stacked Configuration]
Let $\diverges$ be the largest unary relation over stacked configurations such that whenever $(C,M)^{\vec\ell}.X\diverges$ there exists $(D,N)^{\vec k}.Y$ such that $(C,M)^{\vec\ell}.X\stred(D,N)^{\vec{k}}.Y$ and $(D,N)^{\vec{k}}.Y\diverges$. We say that a configuration $(C,M)^{\vec\ell}.X$ is \emph{diverging} when $(C,M)^{\vec\ell}.X\diverges$.
\end{df}

\noindent Note that although the intuition behind the concepts of convergence, deadlock and divergence is (obviously) still the same, the respective definitions for stacked configurations are much simpler than their small-step counterparts. This is an effect of the ``inlining'' of the box sub-reductions that we operated when defining the stacked semantics, which allows us to treat the boxing rules (\textit{step-in} and \textit{step-out}) homogeneously with the rest of the rules. Naturally, the three relations are still mutually exclusive.

\begin{theoremEnd}{prop}\label{stacked mutex}
The relations $\converges, \deadlocks$ and $\diverges$ are mutually exclusive over stacked configurations. That is, for every stacked configuration $(C,M)^{\vec\ell}.X$, the following are true:
\begin{enumerate}
    \item If $(C,M)^{\vec\ell}.X\converges$, then $(C,M)^{\vec\ell}.X\ndeadlocks$,
    \item If $(C,M)^{\vec\ell}.X\converges$, then $(C,M)^{\vec\ell}.X\ndiverges$,
    \item If $(C,M)^{\vec\ell}.X\deadlocks$, then $(C,M)^{\vec\ell}.X\ndiverges$.
\end{enumerate}
\end{theoremEnd}
\begin{proofEnd}
We prove each claim separately:
\begin{enumerate}
    \item We proceed by induction on $(C,M)^{\vec\ell}.X\converges$:
    \begin{itemize}
        \item Case of $M \equiv V, \vec\ell=\emptyset$ and $X=\epsilon$. Since $(C,V)^\emptyset.\epsilon$ is irreducible, but $V$ is a value, there is no way for $(C,V)^\emptyset.\epsilon$ to go into deadlock, so we conclude $(C,V)^\emptyset.\epsilon\ndeadlocks$.
        
        \item Case of $(C,M)^{\vec\ell}.X \stred (D,N)^{\vec k}.X'$ and $(D,N)^{\vec k}.X'\converges$. Since $(C,M)^{\vec\ell}.X$ is reducible, it must be that $(C,M)^{\vec\ell}.X \stred (D,N)^{\vec k}.X'$ ($\stred$ is deterministic) and $(D,N)^{\vec k}.X'\deadlocks$ in order for $(C,M)^{\vec\ell}.X$ to go into deadlock. However, by inductive hypothesis we know that $(D,N)^{\vec k}.X'\ndeadlocks$, so this is impossible and we conclude $(C,M)^{\vec\ell}.X\ndeadlocks$.
    \end{itemize}
    
    \item We proceed by induction on $(C,M)^{\vec\ell}.X\converges$:
    \begin{itemize}
        \item Case of $M \equiv V, \vec\ell=\emptyset$ and $X=\epsilon$. Since $(C,V)^\emptyset.\epsilon$ is irreducible, there is no way for it to diverge, so we conclude $(C,V)^\emptyset.\epsilon\ndiverges$.
        
        \item Case of $(C,M)^{\vec\ell}.X \stred (D,N)^{\vec k}.X'$ and $(D,N)^{\vec k}.X'\converges$. Since $(D,N)^{\vec k}.X'\ndiverges$ by inductive hypothesis and $(C,M)^{\vec\ell}.X$ cannot reduce to any other configuration ($\stred$ is deterministic), there is no way for $(C,M)^{\vec\ell}.X$ to diverge, so we conclude $(C,M)^{\vec\ell}.X\ndiverges$.
    \end{itemize}
    
    \item We proceed by induction on $(C,M)^{\vec\ell}.X\deadlocks$:
    \begin{itemize}
        \item Case in which $(C,M)^{\vec\ell}.X$ is irreducible and either $M\not\equiv V$ or $\vec\ell \neq \emptyset$ or $X \neq \epsilon$. Since $(C,M)^{\vec\ell}.X$ is irreducible, there is no way for it to diverge, so we conclude $(C,M)^{\vec\ell}.X\ndiverges$.
        
        \item Case of $(C,M)^{\vec\ell}.X \stred (D,N)^{\vec k}.X'$ and $(D,N)^{\vec k}.X'\deadlocks$. Since $(D,N)^{\vec k}.X'\ndiverges$ by inductive hypothesis and $(C,M)^{\vec\ell}.X$ cannot reduce to any other configuration ($\stred$ is deterministic), there is no way for $(C,M)^{\vec\ell}.X$ to diverge, so we conclude $(C,M)^{\vec\ell}.X\ndiverges$.
    \end{itemize}
\end{enumerate}
\end{proofEnd}

\begin{theoremEnd}[all end]{lem}\label{stack tail convergence}
If $(C,M)^{\vec\ell}.X$ is reachable, $(C,M)^{\vec\ell}.X \converges$ and $X\neq \epsilon$, then $X\converges$.
\end{theoremEnd}
\begin{proofEnd}
$(C,M)^{\vec\ell}.X$ is a stacked configuration of length $n \geq 2$. The base case for reachable stacked configurations has length one. This implies that in order for $(C,M)^{\vec\ell}.X$ to be reachable there must be at least one lengthening reduction somewhere between some $(D,N)^\emptyset.\epsilon \in \initialstacked$ and $(C,M)^{\vec\ell}.X$. Because lengthening reductions are only introduced by \textit{step-in}, we must have $X=(D',E[\boxt_T(\lift P)])^{\vec k}.X'$, where $(Q,\vec\ell)=\freshlabels(T,P)$, and
\begin{align*}
    (D,N)^\emptyset.\epsilon &\stred ^* (D',E[\boxt_T(\lift P)])^{\vec k}.X'\\
    &\stred (id_Q,P\vec\ell)^{\vec\ell}.(D',E[\boxt_T(\lift P)])^{\vec k}.X'\\
    &\stred^* (C,M)^{\vec\ell}.(D',E[\boxt_T(\lift P)])^{\vec k}.X'.
\end{align*}
\emergencystretch=20pt
This entails $(D',E[\boxt_T(\lift P)])^{\vec k}.X' \stred^+ (C,M)^{\vec\ell}.(D',E[\boxt_T(\lift P)])^{\vec k}.X'$ and since we already know $(C,M)^{\vec\ell}.(D',E[\boxt_T(\lift P)])^{\vec k}.X' \converges$ by hypothesis, we conclude $(D',E[\boxt_T(\lift P)])^{\vec k}.X' \converges$.
\end{proofEnd}

\begin{theoremEnd}[all end]{lem}\label{adding a stack does not solve deadlock}
Let $+_{\vec k}$ be a binary function which represents the concatenation of stacked configurations, defined as follows:
\begin{align*}
    (C,M)^\emptyset.\epsilon +_{\vec k} Y &= (C,M)^{\vec k}.Y, \\
    (C,M)^{\vec\ell}.X +_{\vec k} Y &= (C,M)^{\vec\ell}.(X +_{\vec k} Y).
\end{align*}
If $(C,M)^{\vec\ell}.X \deadlocks$, then $(C,M)^{\vec\ell}.X+_{\vec k}Y \deadlocks$ for every $\vec k$ and $Y \neq \epsilon$.
\end{theoremEnd}
\begin{proofEnd}
We proceed by induction on $(C,M)^{\vec\ell}.X \deadlocks$:
\begin{itemize}
    \item Case in which $(C,M)^{\vec\ell}.X$ is irreducible and either $M\not\equiv V$ or $\vec\ell \neq \emptyset,X \neq \epsilon$. If $(C,M)^{\vec\ell}.X+_{\vec k}Y$ were reducible by either the \textit{head} or \textit{step-in} rule, then so would $(C,M)^{\vec\ell}.X$, since both \textit{head} and \textit{step-in} can be applied regardless of the local labels or the rest of the stack. This contradicts the hypothesis, so we know that $(C,M)^{\vec\ell}.X+_{\vec k}Y$ cannot be reduced by these rules. Furthermore, if $(C,M)^{\vec\ell}.X+_{\vec k}Y$ were reducible via the \textit{step-out} rule, we would have $M\equiv \vec{\ell'}$ for some $\vec{\ell'}$ and as a consequence $\vec\ell \neq \emptyset$ and $X \neq \epsilon$ (otherwise we would contradict the hypothesis that either $M\not\equiv V$ or $\vec\ell \neq \emptyset, X \neq \epsilon$). We would therefore have $X=(D,E[\boxt_T(\lift N)])^{\vec k}.X'$ for some $D,E,N,\vec k$ and $X'$. This would entail the reducibility of $(C,\vec{\ell'})^{\vec\ell}.(D,E[\boxt_T(\lift N)])^{\vec k}.X'$, which contradicts the hypothesis, so $(C,M)^{\vec\ell}.X+_{\vec k}Y$ is ultimately irreducible and because $\vec\ell\neq\emptyset, X \neq \epsilon$ and $Y \neq \epsilon$ we trivially conclude $(C,M)^{\vec\ell}.X+_{\vec k}Y \deadlocks$.
    
    \item Case of $(C,M)^{\vec\ell}.X \stred (D,N)^{\vec{\ell'}}.X'$ and $(D,N)^{\vec{\ell'}}.X' \deadlocks$. We proceed by cases on the introduction of $(C,M)^{\vec\ell}.X \stred (D,N)^{\vec{\ell'}}.X'$:
    \begin{itemize}
        \item Case of \textit{head}. If $(C,M)^{\vec\ell}.X \stred (D,N)^{\vec{\ell}}.X$, then because the \textit{head} rule can be applied regardless of the local labels and the remaining stack we also have $(C,M)^{\vec\ell}.X +_{\vec k} Y \stred (D,N)^{\vec{\ell}}.X  +_{\vec k} Y$ by the same rule. By inductive hypothesis we get that $(D,N)^{\vec{\ell}}.X  +_{\vec k} Y \deadlocks$ and conclude $(C,M)^{\vec\ell}.X +_{\vec k} Y \deadlocks$.
        
        \item Case of \textit{step-in}. If $M\equiv E[\boxt_T(\lift P)]$ for some $E,P$ and $(C,E[\boxt_T(\lift P)])^{\vec\ell}.X \stred (id_Q,P\vec{\ell'})^{\vec{\ell'}}.(C,E[\boxt_T(\lift P)])^{\vec\ell}.X$ for $(Q,\vec\ell')=\freshlabels(P,T)$, then because the \textit{step-in} rule can be applied regardless of the local labels and the remaining stack we also have $(C,E[\boxt_T(\lift P)])^{\vec\ell}.X +_{\vec k} Y \stred (id_Q,P\vec{\ell'})^{\vec{\ell'}}.(C,E[\boxt_T(\lift P)])^{\vec\ell}.X +_{\vec k} Y$ by the same rule. By inductive hypothesis we get $(id_Q,P\vec{\ell'})^{\vec{\ell'}}.(C,E[\boxt_T(\lift P)])^{\vec\ell}.X +_{\vec k} Y \deadlocks$ and conclude $(C,E[\boxt_T(\lift P)])^{\vec\ell}.X +_{\vec k} Y \deadlocks$.
        
        \item Case of \textit{step-out}. In this case we necessarily have $\vec\ell\neq\emptyset, X\neq\epsilon$. If $M\equiv \vec{k'}$ for some $\vec{k'}$, $X=(D,E[\boxt_T(\lift P)])^{\vec {\ell'}}.X'$ for some $E,P$ and $(C,\vec{k'})^{\vec\ell}.(D,E[\boxt_T(\lift P)])^{\vec {\ell'}}.X' \stred (D,E[(\vec\ell,C,\vec{k'})])^{\vec {\ell'}}.X'$, then because $+_{\vec k}$ can only alter $\vec{\ell'}$ and $X'$ and the \textit{step-out} rule does not depend on them, we also have $(C,\vec{k'})^{\vec\ell}.(D,E[\boxt_T(\lift P)])^{\vec {\ell'}}.X' +_{\vec k} Y \stred (D,E[(\vec\ell,C,\vec{k'})])^{\vec {\ell'}}.X' +_{\vec k} Y$ by the same rule. By inductive hypothesis we get that $(D,E[(\vec\ell,C,\vec{k'})])^{\vec {\ell'}}.X' +_{\vec k} Y \ndeadlocks$ and conclude $(C,\vec{k'})^{\vec\ell}.(D,E[\boxt_T(\lift P)])^{\vec {\ell'}}.X' +_{\vec k} Y \deadlocks$.
    \end{itemize}
\end{itemize}
\end{proofEnd}

\subsection{Machine Semantics}\label{machine semantics subsection}

In this section we finally introduce a machine operational semantics for Proto-Quipper-M, which is the focus of our work. We call this a \textit{machine} semantics because it is heavily inspired by the concept of \textit{abstract machine}, which is something we ought to touch upon.

\subsubsection{Abstract Machines}

In computer science, an abstract machine is simply a theoretical model of a computer. To this effect, \textit{computer} is to be intended in the most abstract way possible, that is, something that computes. In fact, it is irrelevant whether the theoretical model is actually implementable in hardware, as long as it describes a computation the way a realistic mechanical computer would carry it out (i.e. algorithmically). Usually, this is done by means of a state transition system. Because an abstract machine allows us to know not only \textit{what} a program written in a given programming language evaluates to, but also \textit{how} exactly it is evaluated, abstract machines are often employed to define the semantics of programming languages.

\paragraph{}In the context of lambda-calculi, an abstract machine specification defines two important aspects. The first one is the evaluation strategy. For instance, whereas the basic semantics for the lambda-calculus leave the door open for both call-by-value and call-by-name strategies, any individual abstract machine must commit to \textit{either} a call-by-value \textit{or} a call-by-name strategy. The second aspect is the concrete algorithm used to carry out some of the operations which are otherwise assumed to be elementary, such as the substitution of values for variables within a term, or the exploration of a term in search of a redex. A number of abstract machines already exist that formalize the evaluation of a lambda-term. Notable examples include the SECD machine \cite{secd}, which implements a call-by-value semantics and is based on multiple evaluation stacks, the Krivine machine \cite{krivine}, which implements a call-by-name semantics, and the CEK machine \cite{cek}, which also implements a call-by-value semantics, but using continuations rather than a stack. We describe briefly this last machine, since it is the one that most inspired our machine semantics for Proto-Quipper-M. The CEK machine takes its name from the shape of its states, which are triples of the form
$$
\tuple{C,E,K},
$$
where $C$ is called \textit{control} and corresponds to the term currently being evaluated, $E$ is the \textit{environment}, that is, an associative array from variable names to values, and $K$ is the \textit{continuation} and represents the next action to perform once $C$ has been fully evaluated. For the sake of this presentation, we use the terms of the basic untyped lambda-calculus, that is:
$$
M,N ::= x \mid \lambda x. M \mid MN.
$$
The only values of this language are closures, that is, pairs $\{\lambda x.M, E\}$ of abstractions together with their definition environment. Therefore, the environment can be seen as nothing more than a list of bindings of the form
$$
x \mapsto \{\lambda y.M,E\}.
$$
When the control is a single variable name, we look that variable up in the environment to obtain the corresponding closure, whose abstraction becomes the new control and whose environment becomes the new environment. This is formalized by the following rule:
$$
\frac{\{\lambda x.M, E'\} = \lookup(x,E)}
{\tuple{x,E,K} \to \tuple{\lambda x. M, E', K}}\textit{var}
$$
where $\lookup(x,E)$ finds the first occurrence of $x$ in $E$ and returns the corresponding closure. When the control is an application $MN$, we start by evaluating $M$ to an abstraction. While we do so, we must remember that after we are done we must proceed to evaluate $N$. This is where continuations come into play. A continuation of the form $\contfarg(N,E,K)$ represents a reminder that after we are done evaluating the current control (whatever it might be), we should start evaluating $N$ in the environment $E$, and then proceed in a similar fashion with continuation $K$. The rule for evaluating applications is thus the following:
$$
\frac{\void}
{\tuple{MN,E,K} \to \tuple{M,E,\contfarg(N,E,K)}}\textit{split}
$$
Once we are done evaluating $M$ to a term of the form $\lambda x.P$, we can start evaluating $N$. Here we find ourselves in a symmetric situation: as we evaluate $N$ we must remember that after we are done we must apply $\lambda x.P$ to the result. Since functions are represented as closures, this reminder is represented by a continuation of the form $\contfapp(\lambda x.P,E,K)$, where $E$ is the environment in which $\lambda x.P$ was defined. We therefore introduce a third rule to our CEK machine:
$$
\frac{\void}
{\tuple{\lambda x.P,E,\contfarg(N,E',K)} \to \tuple{N,E',\contfapp(\lambda x.P, E, K)}}\textit{shift}
$$
Lastly, once $N$ has also been evaluated to an abstraction $\lambda y.L$, we can apply $\lambda x.P$ to $\lambda y.L$. Concretely, this means that we start evaluating $P$ under the environment $E$, with the additional binding of $x$ to $\lambda y.L$. With the following rule, the CEK machine is complete:
$$
\frac{\void}
{\tuple{\lambda y.L,E',\contfapp(\lambda x.P,E,K)} \to \tuple{P,(x\mapsto\{\lambda y. L, E'\})::E,K}}\textit{join}
$$
where $::$ denotes the concatenation of environments. To see more clearly how these rules interact with each other in order to reduce a term, consider the evaluation of $(\lambda x. xx)(\lambda y. y)$ to $\lambda y.y$.

\begin{align*}
    \tuple{(\lambda x.xx)(\lambda y. y),[],\mathit{Done}}
    &\to \tuple{\lambda x .xx, [], \contfarg(\lambda y.y,[],\mathit{Done})} &&\textit{split}\\
    &\to \tuple{\lambda y.y, [], \contfapp(\lambda x.xx,[],\mathit{Done})}  &&\textit{shift}\\
    &\to \tuple{xx,[x\mapsto\{\lambda y.y,[]\}],\mathit{Done}}  &&\textit{join}\\
    &\to \tuple{x,[x\mapsto\{\lambda y.y,[]\}],\contfarg(x,[x\mapsto\{\lambda y.y,[]\}],\mathit{Done})}  &&\textit{split}\\
    &\to \tuple{\lambda y.y,[],\contfarg(x,[x\mapsto\{\lambda y.y,[]\}],\mathit{Done})}  &&\textit{var}\\
    &\to \tuple{x,[x\mapsto\{\lambda y.y,[]\}],\contfapp(\lambda y.y,[],\mathit{Done})}  &&\textit{shift}\\
    &\to \tuple{\lambda y.y,[],\contfapp(\lambda y.y,[],\mathit{Done})}  &&\textit{var}\\
    &\to \tuple{y,[y\mapsto\{\lambda y.y,[]\}],\mathit{Done}}  &&\textit{join}\\
    &\to \tuple{\lambda y.y, [], \mathit{Done}}. &&\textit{var}
\end{align*}

\subsubsection{An Abstract Machine for Proto-Quipper-M}
\label{an abstract machine for proto-quipper-m}

As we anticipated earlier, our machine semantics for Proto-Quipper-M is largely influenced by the CEK machine. In particular, we retain the use of continuations as a means to schedule the various phases of the evaluation of a term, although we organize them in a stack rather than one within the other. Note that this change is purely syntactic, since in practice the continuations of the CEK machine already recursively define a stack. Another difference between the CEK machine and our semantics is that for the sake of simplicity we keep relying on the substitution function $M[N/x]$ to evaluate function applications instead of employing environments to the same effect. This choice does not fundamentally detract from the the results that we prove in this paper, which are expected to hold for any sensible explicit implementation of substitutions. We just leave such an implementation as future work. In conclusion, our \textit{machine configurations} are composed of a circuit, a term which builds the circuit, and a stack. 

\begin{df}[Machine Configuration]
A \emph{machine configuration} is a triple
$$
(C,M,S),
$$
where $C$ is a circuit, $M$ is a term with no free variables and $S$ is a \emph{stack}, which is defined by the following grammar:
\begin{align*}
    \textnormal{Stack elements}\quad &  H & &::= &&\,\contfarg(M)\mid\contfapp(V)\\
    &&&&&\mid\contalabel(M)\mid\contacirc(V)\\
    &&&&&\mid\conttright(M)\mid\conttleft(V)\\
    &&&&&\mid\contbox(Q,\vec\ell)\mid\contsub(C,M,\vec\ell,T)\\
    &&&&&\mid \contlet(x,y,M) \mid\contforce,\\
\textnormal{Stacks}\quad & S,R & &::= &&\,\epsilon \mid H.S.
\end{align*}
\end{df}

It is worth noting that although stacked configuration and machine configurations may appear very different at first glance, they are based on the common intuition that a computation is easily modelled by a stack. The only difference between the two is the extent to which we apply this intuition: whereas with stacked configurations we only push a frame onto the stack when evaluating an entirely new configuration as part of a boxing operation (while still reducing everything else ``in place''), with machine configurations we push a frame onto the stack every time we encounter a composite term. Let $\amred$ be a reduction relation for machine configurations. We give the following rules:
$$
\frac{\void}
{(C,MN,S) \amred (C,M,\contfarg(N).S)
}\textit{app-split}
$$
\vspace{5pt}
$$
\frac{\void}
{(C,V,\contfarg(N).S) \amred (C,N,\contfapp(V).S)
}\textit{app-shift}
$$
\vspace{5pt}
$$
\frac{\void}
{(C,V,\contfapp(\lambda x.M).S) \amred (C,M[V/x],S)
}\textit{app-join}
$$
\vspace{5pt}
$$
\frac{\void}
{(C,\apply(M,N),S) \amred (C,M,\contalabel(N).S)
}\textit{apply-split}
$$
\vspace{5pt}
$$
\frac{\void}
{(C,V,\contalabel(N).S) \amred (C,N,\contacirc(V).S)
}\textit{apply-shift}
$$
\vspace{5pt}
$$
\frac{(C',\vec{k'})=\append(C,\vec k,\vec\ell,D,\vec{\ell'})}
{(C,\vec k,\contacirc(\vec\ell,D,\vec{\ell'}).S) \amred (C',\vec{k'},S)
}\textit{apply-join}
$$
\vspace{5pt}
$$
\frac{\tuple{M,N} \textnormal{ is not a value}}
{(C,\tuple{M,N},S) \amred (C,M,\conttright(N).S)
}\textit{tuple-split}
$$
\vspace{5pt}
$$
\frac{\void}
{(C,V,\conttright(N).S) \amred (C,N,\conttleft(V).S)
}\textit{tuple-shift}
$$
\vspace{5pt}
$$
\frac{\void}
{(C,W,\conttleft(V).S) \amred (C,\tuple{V,W},S)
}\textit{tuple-join}
$$
\vspace{5pt}
$$
\frac{(Q,\vec\ell) = \freshlabels(M,T)}
{(C,\boxt_T M,S) \amred (C,M, \contbox(Q,\vec\ell).S)
}\textit{box-open}
$$
\vspace{5pt}
$$
\frac{\emptyset;Q\vdash \vec\ell : T}
{(C,\lift M, \contbox(Q,\vec\ell).S) \amred (id_Q,M\vec\ell,\contsub(C,M,\vec\ell,T).S)
}\textit{box-sub}
$$
\vspace{5pt}
$$
\frac{\void}
{(D,\vec{\ell'},\contsub(C,M,\vec\ell,T).S) \amred (C,(\vec\ell,D,\vec{\ell'}),S)
}\textit{box-close}
$$
\vspace{5pt}
$$
\frac{\void}
{(C,\letin{\tuple{x,y}}{M}{N},S) \amred (C,M,\contlet(x,y,N).S)
}\textit{let-split}
$$
\vspace{5pt}
$$
\frac{\void}
{(C,\tuple{V,W},\contlet(x,y,M).S) \amred (C,M[V/x][W/y],S)
}\textit{let-join}
$$
\vspace{5pt}
$$
\frac{\void}
{(C,\force M, S) \amred (C,M,\contforce.S)
}\textit{force-open}
\qquad
\frac{\void}
{(C,\lift M,\contforce.S) \amred (C,M,S)
}\textit{force-close}
$$
\vspace{5pt}

\noindent In light of the previous exposition of the rules for the CEK machine, the rules for the machine semantics should be self-explanatory. Generally, every binary term constructor (such as the application or the tuple) has an associated \textit{split} rule, which defines how a term is split into smaller sub-terms and which is evaluated first, a \textit{shift rule}, which defines how and when we switch to evaluating the second term, and a \textit{join} rule, which defines the way the results of the two sub-terms are put back together. To keep track of what to do next, these rules employ two kinds of continuations, one that keeps track of the right sub-term while the left one is being evaluated (like $\contfarg$) and one that does the opposite (like $\contfapp$).
The case of unary constructors (such as $\boxt_T$ or $\force$) is similar, although the kind of information that is stored on the stack in this case is more varied. For example, the $\contsub$ continuation, which roughly corresponds to a stack frame of the stacked semantics, has to store the entire circuit and term whose evaluation was temporarily interrupted by the boxing operation, as well as the new local labels introduced by it and their associated type (the latter for bookkeeping reasons which will be clear in the next section). On the other hand, the $\contforce$ continuation does not need to store any additional information, since all the $\force$ operator does is ``undo'' the lifting of a term.

\paragraph{} Notice how the definition of the machine semantics effectively allows us to relinquish the notion of evaluation context. Whereas in the small-step and stacked semantics we sometimes had to reason about \textit{where} a reduction occurred within a term (see, as an example, the \textit{step-in} rule of the stacked semantics), in the machine semantics we always reduce a redex whose components are immediately available in the term component and on top of the stack. If a redex is not immediately available, the term being evaluated is broken down into smaller pieces, and this decomposition operation is an integral part of the semantics. As a result, the ``descent'' into a term in search of a redex is no longer implicit in the derivation of an individual step of the reduction relation, but rather it is explicit in the reduction sequence itself. Naturally, the machine semantics is deterministic, like the small-step and the stacked semantics.

\begin{theoremEnd}{lem}\label{machine rule mutex}
Every machine configuration $(C,M,S)$ can be reduced by at most one rule of the machine semantics.
\end{theoremEnd}
\begin{proofEnd}
We first partition the rules into two sets: one containing the rules that require $M$ to be a value (\textit{app-shift, app-join, apply-shift, apply-join, tuple-shift, tuple-join, box-sub, box-close, let-join} and \textit{force-close}) and the other containing the rules that require $M$ not to be a value (\textit{app-split, apply-split, tuple-split, box-open, let-split, force-open}). Naturally, the applicability of a rule from the first set to any given configuration excludes the possibility of applying any rule from the second set to the same configuration, and vice-versa. Consider now the first set. Each of the rules contained in this set requires a different stack head in order to be applied, so at most one rule from the first set can be applied to any given configuration. Consider now the second set. Each of the rules contained in this set requires a different shape of $M$ in order to be applied, so at most one rule from the second set can be applied to any given configuration. We conclude that every machine configuration $(C,M,S)$ can be reduced by at most one rule of the machine semantics.
\end{proofEnd}

\begin{prop}[Determinism of Machine Semantics]\label{machine determinism}
\emergencystretch=15pt
The reduction relation $\amred$ is deterministic. That is, if $(C,M,S)\amred (D,N,R)$, then for every stacked configuration $(D',N',R')$ such that $(C,M,S)\amred (D',N',R')$ we have $D=D',N\equiv N'$ and $R=R'$.
\end{prop}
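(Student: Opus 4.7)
The plan is to mirror the structure of the determinism proofs already given for $\red$ (Proposition \ref{small-step determinism}) and $\stred$ (Proposition \ref{stacked determinism}). By Lemma \ref{machine rule mutex} we already know that at most one rule of the machine semantics is applicable to any given configuration $(C,M,S)$, so it suffices to show that each individual rule, taken in isolation, produces a unique result.

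For this, I would go through the rules one by one. The vast majority of them (\textit{app-split, app-shift, app-join, apply-split, apply-shift, tuple-split, tuple-shift, tuple-join, box-open, let-split, let-join, force-open, force-close}) have no side-conditions beyond pattern-matching on the shape of $M$ and of the stack head, and their right-hand side is uniquely determined by the shape they match; these are therefore deterministic by inspection. The remaining rules are those that invoke auxiliary functions, namely \textit{apply-join} (which calls $\append$), \textit{box-open} and \textit{box-sub} (which call $\freshlabels$), and \textit{box-close} (which simply repackages data from the $\contsub$ frame). Since both $\append$ and $\freshlabels$ are functions, their output is uniquely determined by their input, so these rules are deterministic as well.

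The only slightly delicate point I expect is the use of $\freshlabels$ in \textit{box-open}, since fresh label generation is sometimes modelled non-deterministically; however, given the way Proto-Quipper-M's operational metatheory defines $\freshlabels$ as a function returning a specific canonical result, this causes no issue and deserves at most a one-line remark. Combining Lemma \ref{machine rule mutex} with the rule-by-rule determinism just sketched, we conclude that whenever $(C,M,S) \amred (D,N,R)$ and $(C,M,S) \amred (D',N',R')$, the same rule must have fired in both derivations and must have produced identical outputs, hence $D = D'$, $N \equiv N'$ and $R = R'$.
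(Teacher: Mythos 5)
Your proposal is correct and follows essentially the same route as the paper: invoke Lemma \ref{machine rule mutex} for rule exclusivity and then observe that each rule is individually deterministic because substitution, $\append$, $\freshlabels$ (and, for \textit{box-sub}, the typing judgement) are all functions. No gaps.
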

\begin{proof}
We already known by Lemma \ref{machine rule mutex} that at most one rule can be applied to reduce any given machine configuration. What is left to do is prove that each rule is deterministic by itself. The proof is trivial, since substitution, $\append$, $\freshlabels$ and the typing judgement are all functions.
\end{proof}

\subsubsection{Initiality and Reachability}

The same reasoning about what kind of configuration we can start a computation from that we made for the stacked semantics can (and must) be made for the machine semantics. The following definitions are not fundamentally different from the corresponding definitions that we gave for stacked configurations.

\begin{df}[Initial Machine Configuration]
A machine configuration is said to be \emph{initial} when it is of the form $(C,M,\epsilon)$. The set of initial machine configurations is denoted by $\initialmachine$.
\end{df}

\begin{df}[Reachable Machine Configuration]
\emergencystretch=10pt
A machine configuration of the form $(C,M,S)$ is said to be \emph{reachable} when either of the following is true:
\begin{enumerate}
    \item $(C,M,S) \in \initialmachine$,
    \item There exists a machine configuration $(D,N,R)$ such that $(D,N,R)$ is reachable and $(D,N,R) \amred (C,M,S)$.
\end{enumerate}
\end{df}

\subsubsection{Convergence, Deadlock and Divergence}

We also give the usual definitions of convergence, deadlock and divergence. Notice how similar the following definitions are to the corresponding definitions given in Section \ref{stacked convergence, deadlock and divergence} for stacked configurations, and how different they are from the corresponding definitions given in Section \ref{small-step convergence, deadlock and divergence} for small-step configurations. This is further proof of what we briefly mentioned earlier, that is, that the stacked semantics and the machine semantics are built on the same intuition.

\begin{df}[Converging Machine Configuration]
Let $\converges$ be the smallest unary relation over machine configurations such that:
\begin{enumerate}
    \item For every circuit $C$ and value $V$, $(C,V,\epsilon)\converges$,
    
    \item If $(C,M,S)\amred(D,N,R)$ and $(D,N,R)\converges$, then $(C,M,S)\converges$.
\end{enumerate}
We say that a configuration $(C,M,S)$ is \emph{converging} when $(C,M,S)\converges$.
\end{df}

\begin{df}[Deadlocking Machine Configuration]
Let $\deadlocks$ be the smallest unary relation over machine configurations such that:
\begin{enumerate}
    \item If there exists no $(D,N,R)$ such that $(C,M,S) \amred (D,N,R)$ and $S\neq \epsilon$, then $(C,M,S)\deadlocks$,
    
    \item If $(C,M,S)\amred(D,N,R)$ and $(D,N,R)\deadlocks$, then $(C,M,S)\deadlocks$.
\end{enumerate}
We say that a configuration $(C,M,S)$ \emph{goes into deadlock} when $(C,M,S)\deadlocks$.
\end{df}

\begin{df}[Diverging Machine Configuration]
Let $\diverges$ be the largest unary relation over machine configurations such that whenever $(C,M,S)\diverges$ there exists $(D,N,R)$ such that $(C,M,S)\amred (D,N,R)$ and $(D,N,R)\diverges$. We say that a configuration $(C,M,S)$ is \emph{diverging} when $(C,M,S)\diverges$.
\end{df}

As was the case with the previous semantics, the convergence, deadlock and divergence relations are mutually exclusive and total on machine configurations.

\begin{theoremEnd}{prop}\label{machine mutex}
The relations $\converges, \deadlocks$ and $\diverges$ are mutually exclusive over machine configurations. That is, for every machine configuration $(C,M,S)$, the following are true:
\begin{enumerate}
    \item If $(C,M,S)\converges$, then $(C,M,S)\ndeadlocks$,
    \item If $(C,M,S)\converges$, then $(C,M,S)\ndiverges$,
    \item If $(C,M,S)\deadlocks$, then $(C,M,S)\ndiverges$.
\end{enumerate}
\end{theoremEnd}
\begin{proofEnd}
We prove each claim separately:
\begin{enumerate}
    \item We proceed by induction on $(C,M,S)\converges$:
    \begin{itemize}
        \item Case of $M \equiv V$ and $S=\epsilon$. Since $(C,V,\epsilon)$ is irreducible, but the stack is empty, there is no way for $(C,V,\epsilon)$ to go into deadlock, so we conclude $(C,V,\epsilon)\ndeadlocks$.
        
        \item Case of $(C,M,S) \amred (D,N,S')$ and $(D,N,S')\converges$. Since $(C,M,S)$ is reducible, it must be that $(C,M,S) \amred (D,N,S')$ ($\amred$ is deterministic) and $(D,N,S')\deadlocks$ in order for $(C,M,S)$ to go into deadlock. However, by inductive hypothesis we know that $(D,N,S')\ndeadlocks$, so this is impossible and we conclude $(C,M,S)\ndeadlocks$.
    \end{itemize}
    
    \item We proceed by induction on $(C,M,S)\converges$:
    \begin{itemize}
        \item Case of $M \equiv V$ and $S=\epsilon$. Since $(D,V,\epsilon)$ is irreducible, there is no way for it to diverge, so we conclude $(D,V,\epsilon)\ndiverges$.
        
        \item Case of $(C,M,S) \amred (D,N,S')$ and $(D,N,S')\converges$. Since $(D,N,S')\ndiverges$ by inductive hypothesis and $(C,M,S)$ cannot reduce to any other configuration ($\amred$ is deterministic), there is no way for $(C,M,S)$ to diverge, so we conclude $(C,M,S)\ndiverges$.
    \end{itemize}
    
    \item We proceed by induction on $(C,M,S)\deadlocks$:
    \begin{itemize}
        \item Case in which $(C,M,S)$ irreducible and $S \neq \epsilon$. Since $(C,M,S)$ is irreducible, there is no way for it to diverge, so we conclude $(C,M,S)\ndiverges$.
        
        \item Case of $(C,M,S) \amred (D,N,S')$ and $(D,N,S')\deadlocks$. Since $(D,N,S')\ndiverges$ by inductive hypothesis and $(C,M,S)$ cannot reduce to any other configuration ($\amred$ is deterministic), there is no way for $(C,M,S)$ to diverge, so we conclude $(C,M,S)\ndiverges$.
    \end{itemize}
\end{enumerate}
\end{proofEnd}

\begin{theoremEnd}[all end]{lem}\label{machine irreducible implies M value}
If $(C,M,S)$ is irreducible, then $M$ is a value.
\end{theoremEnd}
\begin{proofEnd}
We prove the contrapositive. That is, if $M$ is not a value then $(C,M,S)$ is reducible. We proceed by cases on the form of $M$:
\begin{itemize}
    \item Case $M\equiv x$. This case is impossible, since by the definition of machine configuration $M$ must contain no free variables.
    \item Case $M\equiv \vec\ell$. In this case $M$ is a value and the claim is vacuously true.
    \item Case $M\equiv \lambda x.N$. In this case $M$ is a value and the claim is vacuously true.
    \item Case $M\equiv NP$. In this case $(C,NP,S)$ can be reduced by the \textit{app-split} rule.
    \item Case $M\equiv \tuple{N,P}$. If $N$ and $P$ are both values then $\tuple{N,P}$ is a value too and the claim is vacuously true. Otherwise, $(C,\tuple{N,P},S)$ can be reduced by the \textit{tuple-split} rule.
    \item Case $M\equiv \letin{\tuple{x,y}}{N}{P}$. In this case $(C,\letin{\tuple{x,y}}{N}{P},S)$ can be reduced by the \textit{let-split} rule.
    \item Case $M\equiv \lift N$. In this case $M$ is a value and the claim is vacuously true.
    \item Case $M\equiv \force N$. In this case $(C,\force N,S)$ can be reduced by the \textit{force-open} rule.
    \item Case $M\equiv \boxt_T N$. In this case $(C,\boxt_T N,S)$ can be reduced by the \textit{box-open} rule.
    \item Case $M\equiv \apply(N,P)$.  In this case $(C,\apply(N,P),S)$ can be reduced by the \textit{apply-split} rule.
    \item Case $M\equiv (\vec\ell,D,\vec{\ell'})$. In this case $M$ is a value and the claim is vacuously true.
\end{itemize}
\end{proofEnd}

\begin{theoremEnd}{prop}\label{machine totality}
Every machine configuration $(C,M,S)$ either converges, goes into deadlock or diverges, that is:
$$(C,M,S)\converges \vee (C,M,S)\deadlocks \vee (C,M,S)\diverges.$$
\end{theoremEnd}
\begin{proofEnd}
Let $\clen$ be a function that, given a machine configuration, returns the number of reduction steps that can be taken starting from that configuration. The $\clen$ function is defined as the least fixed point of the following equation on functions from machine configurations to $\mathbb{N}^\infty$:
$$
\clen(C,M,S) = \begin{cases}
\clen(D,N,R)+1 & \textnormal{if } (C,M,S)\amred(D,N,R), \\
0 & \textnormal{otherwise}.
\end{cases}
$$
\emergencystretch=15pt
If $\clen(C,M,S)=\infty$, that means that $(C,M,S) \amred (D,N,R)$ and $\clen(D,N,R)=\infty$. Because $\diverges$ is defined as the largest relation such that $(C,M,S)\diverges$ implies $(C,M,S) \amred (D,N,R)$ and $(D,N,R)\diverges$, we conclude that $(C,M,S)\diverges$. On the other hand, if $\clen(C,M,S)\in\mathbb{N}$, we proceed by induction on $\clen(C,M,S)$:
        \begin{itemize}
            \item Case $\clen(C,M,S) = 0$. In this case $(C,M,S)$ is irreducible and by Lemma \ref{machine irreducible implies M value} we know $M\equiv V$. If $S=\epsilon$, then we trivially conclude $(C,V,\epsilon)\converges$. Otherwise, if $S\neq\epsilon$, we trivially conclude $(C,V,S)\deadlocks$.
            
            \item Case $\clen(C,M,S) = n+1$. In this case we know that $(C,M,S)\amred(D,N,R)$ and $\clen(D,N,R)=n$. By inductive hypothesis, we know that either $(D,N,R)\converges$ or $(D,N,R)\deadlocks$ or $(D,N,R)\diverges$. We exclude the last option, because if $(D,N,R)\diverges$, then $\clen(D,N,R)$ (and as a consequence $\clen(C,M,S)$) would be undefined, contradicting the hypothesis. If $(D,N,R)\converges$, we have $(C,M,S)\amred (D,N,R)$ and we conclude $(C,M,S)\converges$, whereas if $(D,N,R)\deadlocks$ we conclude $(C,M,S)\deadlocks$ by the same reasoning.
        \end{itemize}
\end{proofEnd}

\section{Correspondence Results}

We are at a point where we have presented three different semantics for the evaluation of Proto-Quipper-M programs and we have analyzed their individual properties. In this section, we start looking at the various semantics in relationship with each other, in order to eventually prove that the machine semantics that we arrived to in Section \ref{towards machine semantics} is ultimately equivalent to the small-step semantics given in Section \ref{proto-quipper-m} and, as a consequence, to the original big-step operational semantics given by Rios and Selinger. To this effect, we first focus on the relationship between the stacked semantics and the other two semantics. Then, in light of the respective results, we show that the computations carried out in the small-step and machine semantics are equivalent by proving that they are simulated by the same computation in the stacked semantics.

\subsection{From Small-step to Stacked Semantics}
\label{section from small-step to stacked}
\label{section from small-step to stacked configurations}
\label{from small-step to stacked reductions}
\label{small-step to stacked convergence, deadlock and divergence}

Since we plan to use the stacked semantics as a ``middle ground'' on which the small-step and machine semantics ought to agree, when we give results about the relationship between the small-step and stacked semantics we mainly focus on the direction that goes from the former to the latter. In Section \ref{stacked initiality and reachability} we alluded to some sort of relationship between small-step configurations of the form $(C,M)$ and stacked configurations of the form $(C,M)^\emptyset.\epsilon$. To formalize this relationship, we define the following function:

$$
\fromsmallstep(C,M) = (C,M)^{\emptyset}.\epsilon.
$$

\noindent Note that although we focus on one direction of this relationship, the set of small-step configurations and the set $\initialstacked$ of initial stacked configurations are effectively in bijection, since $\fromsmallstep$ is trivially invertible. The most relevant result that we give in this sub-section is that a single reduction step in the small-step semantics can always be simulated by a reduction sequence in the stacked semantics.

\begin{theoremEnd}{lem}\label{small step to stacked correspondence}
Let $\stred^+$ be the transitive closure of $\stred$. If $(C,M)\red(C',M')$, then for every $\vec k$ and $X$ we have 
$$
(C,M)^{\vec k}.X \stred^+ (C',M')^{\vec k}.X.
$$
Furthermore, if $M\equiv E[\boxt_T(\lift N)]$ then we also have $C'\equiv C, M'\equiv E[(\vec\ell, D, \vec{\ell'})]$ and 
$$(id_Q,N\vec\ell)^{\vec\ell}.X
    \stred^+ (D,\vec{\ell'})^{\vec\ell}.X,
$$
for all $X$ and for $(Q,\vec\ell)=\freshlabels(N,T)$.
\end{theoremEnd}
\begin{proofEnd}
By induction on the derivation of $(C,M)\red(C',M')$. We distinguish two cases. If $M \not\equiv E[\boxt_T(\lift N)]$, then we immediately conclude$(C,M)^{\vec k}.X\stred(C',M')^{\vec k}.X$, for all $\vec k$ and $X$, by the \textit{head} rule. Otherwise, if $M \equiv E[\boxt_T(\lift N)]$, we proceed by cases on $E$:
    \begin{itemize}
    \item Case $E\equiv[\cdot]$.  In this case we have
    $$
    \frac{(Q,\vec\ell)=\freshlabels(N,T) \quad (id_Q,N\vec\ell) \red \dots \red (D,\vec{\ell'})}
    {(C,\boxt_T(\lift N)) \red (C, (\vec\ell,D,\vec{\ell}))}
    \textit{box}
    $$
    \emergencystretch=15pt
    Let $(C_1,M_1)\red(C_2,M_2),(C_2,M_2)\red(C_3,M_3),\dots, (C_{n-1},M_{n-1})\red(C_n,M_n)$ be the sequence of reduction steps denoted by $(id_Q,N\vec\ell) \red \dots \red (D,\vec{\ell'})$, such that $(id_Q,N\vec\ell) \equiv (C_1,M_1)$ and $(C_n,M_n)\equiv (D,\vec{\ell'})$. By applying the inductive hypothesis to each of these reduction steps, we get $(id_Q,N\vec\ell)^{\vec \ell}.X\stred^+(C_2,M_2)^{\vec \ell}.X,(C_2,M_2)^{\vec \ell}.X\stred^+(C_3,M_3)^{\vec \ell}.X,$ up until $(C_{n-1},M_{n-1})^{\vec \ell}.X\stred^+(D,\vec{\ell'})^{\vec \ell}.X$, for all $X$, and therefore $(id_Q,N\vec\ell)^{\vec \ell}.X\stred^+(D,\vec{\ell'})^{\vec \ell}.X$, for all $X$, by the transitivity of $\stred^+$. We therefore consider the following derivation:
    $$
    \frac{(Q,\vec\ell) = \freshlabels(N,T)}
    {(C,\boxt_T(\lift N))^{\vec k}.X' \stred (id_Q,N\vec\ell)^{\vec\ell}.(C,\boxt_T(\lift N))^{\vec k}.X'
    }\textit{step-in}
    $$
    \emergencystretch=20pt
    and by appropriately setting $X = (C,\boxt_T(\lift N))^{\vec k}.X'$ we get $(C,\boxt_T(\lift N))^{\vec k}.X' \stred^+ (D,\vec{\ell'})^{\vec\ell}.(C,\boxt_T(\lift N))^{\vec k}.X'$, for all $X'$, thanks to the transitivity of $\stred^+$. Next, we consider the following derivation:
    $$
    \frac{\void}
    {(D,\vec{\ell'})^{\vec\ell}.(C,\boxt_T(\lift N))^{\vec k}.X' \stred (C,(\vec\ell,D,\vec{\ell'}))^{\vec k}.X'
    }\textit{step-out}
    $$
    by which we conclude $(C,\boxt_T(\lift N))^{\vec k}.X' \stred^+ (C,(\vec\ell, D, \vec{\ell'}))^{\vec k}.X'$, for all $\vec k$ and $X'$, thanks to the transitivity of $\stred^+$.
    
    \item Case $E\equiv FP$. In this case we have
    $$
    \frac{(C, F[\boxt_T(\lift N)]) \red (C',M')}
    {(C,F[\boxt_T(\lift N)]P) \red (C',M'P)
    }\textit{ctx-app-left}
    $$
    \emergencystretch=20pt
    By inductive hypothesis we know that $C\equiv C'$, that $M' \equiv F[(\vec\ell, D, \vec{\ell'})]P$ and that $(id_Q,N\vec\ell)^{\vec\ell}.X
    \stred^+ (D,\vec{\ell'})^{\vec\ell}.X$, for all $X$ and for $(Q,\vec\ell) = \freshlabels(N,T)$. By the same reasoning used in the $E\equiv[\cdot]$ case we can derive $(C,E[\boxt_T(\lift N)])^{\vec k}.X' \stred(id_Q,N\vec\ell)^{\vec\ell}.(C,E[\boxt_T(\lift N)])^{\vec k}.X'$ by the \textit{step-in} rule. Similarly, by the \textit{step-out} rule we can derive $(D,\vec{\ell'})^{\vec\ell}.(C,E[\boxt_T(\lift N)])^{\vec k}.X' \stred (C,E[(\vec\ell,D,\vec{\ell'})])^{\vec k}.X'$. Therefore, by setting $X=(C,E[\boxt_T(\lift N)])^{\vec k}.X'$ we conclude $(C,E[\boxt_T(\lift N)])^{\vec k}.X' \stred^+ (C,E[(\vec\ell, D, \vec{\ell'})])^{\vec k}.X'$, for all $\vec k$ and $X'$, thanks to the transitivity of $\stred^+$.
    \item Case $E\equiv VF$. This case is proven in the same way as the $E\equiv FP$ case, except with \textit{ctx-app-right} instead of \textit{ctx-app-left}.
    
    \item Case $E\equiv \tuple{F,P}$. This case is proven in the same way as the $E\equiv FP$ case, except with \textit{ctx-tuple-left} instead of \textit{ctx-app-left}.
    
    \item Case $E\equiv \tuple{V,F}$. This case is proven in the same way as the $E\equiv FP$ case, except with \textit{ctx-tuple-right} instead of \textit{ctx-app-left}.
    
    \item Case $E\equiv \letin{\tuple{x,y}}{F}{P}$. This case is proven in the same way as the $E\equiv FP$ case, except with \textit{ctx-let} instead of \textit{ctx-app-left}.
    
    \item Case $E\equiv \force F$. This case is proven in the same way as the $E\equiv FP$ case, except with \textit{ctx-force} instead of \textit{ctx-app-left}.
    
    \item Case $E\equiv \boxt_U F$. This case is proven in the same way as the $E\equiv FP$ case, except with \textit{ctx-box} instead of \textit{ctx-app-left}.
    
    \item Case $E\equiv \apply(F,P)$. This case is proven in the same way as the $E\equiv FP$ case, except with \textit{ctx-apply-left} instead of \textit{ctx-app-left}.
    
    \item Case $E\equiv \apply(V,F)$. This case is proven in the same way as the $E\equiv FP$ case, except with \textit{ctx-apply-right} instead of \textit{ctx-app-left}.
    
    \end{itemize}
\end{proofEnd}

\begin{cor}
\label{small step to stacked correspondence cor}
Suppose $(C,M)$ and $(C',M')$ are two small-step configurations. If $(C,M) \red (C',M')$, then $\fromsmallstep(C,M) \stred^+ \fromsmallstep(C',M')$.
\end{cor}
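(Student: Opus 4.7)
The plan is straightforward: this corollary is essentially an immediate specialization of the preceding Lemma \ref{small step to stacked correspondence}. Recall that $\fromsmallstep$ is defined by $\fromsmallstep(C,M) = (C,M)^{\emptyset}.\epsilon$, so the claim to be proved unfolds to: if $(C,M) \red (C',M')$, then $(C,M)^{\emptyset}.\epsilon \stred^+ (C',M')^{\emptyset}.\epsilon$. This is exactly the conclusion of the lemma instantiated at $\vec k = \emptyset$ and $X = \epsilon$.

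Concretely, I would proceed in two sentences. First, apply Lemma \ref{small step to stacked correspondence} to the hypothesis $(C,M) \red (C',M')$, obtaining the universally quantified statement that $(C,M)^{\vec k}.X \stred^+ (C',M')^{\vec k}.X$ for every label tuple $\vec k$ and every stacked configuration $X$. Second, instantiate this statement with $\vec k = \emptyset$ and $X = \epsilon$ to obtain exactly $\fromsmallstep(C,M) \stred^+ \fromsmallstep(C',M')$ by the definition of $\fromsmallstep$.

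There is no real obstacle here: the well-formedness side condition on stacked configurations is trivially satisfied by $(C,M)^{\emptyset}.\epsilon$ (it matches the first clause of the definition of well-formed stacked configuration, where $\vec\ell = \emptyset$ and $X \equiv \epsilon$), so the instantiation is legal. All the substantive work has already been carried out in the lemma, which handles both the easy \emph{head} case and the more involved box case via the \emph{step-in}/\emph{step-out} bracketing pattern. The corollary merely specializes that result to the canonical embedding of small-step configurations into stacked ones.
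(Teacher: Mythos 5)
Your proof is correct and coincides with the paper's own argument, which likewise derives the corollary from Lemma \ref{small step to stacked correspondence} by setting $\vec k = \emptyset$ and $X = \epsilon$. The additional remark about well-formedness of $(C,M)^{\emptyset}.\epsilon$ is a harmless bonus.
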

\begin{proof}
The claim follows immediately from Lemma \ref{small step to stacked correspondence} by setting $\vec k = \emptyset$ and $X = \epsilon$.
\end{proof}

A weaker result holds in the other direction. Specifically, a reduction sequence in the stacked semantics can be simulated by a reduction sequence in the small-step semantics only if the former begins and ends on configurations of the same length and all of the intermediate configurations have length greater or equal to that of the endpoints. In other words, the small-step semantics can only simulate computations that begin and end in the same stack frame. This is to be expected, since in the small-step semantics the stack of sub-reductions introduced by the boxing operator exists only in the derivation tree of the reduction sequence, and not in the reduction sequence itself, so a computation $(C,M)\red^*(D,N)$ where $(C,M)$ and $(D,N)$ belong to different sub-reductions is effectively meaningless.

\begin{theoremEnd}{lem}\label{same-level correspondence}
\emergencystretch=10pt
Let $(C,M)^{\vec k}.X$ and $(D,N)^{\vec k}.X$ be two stacked configurations of equal length $m$. If $(C,M)^{\vec k}.X \stred^+ (D,N)^{\vec k}.X$ and all the intermediate configurations in this reduction have length $m$ or greater, then $(C,M) \red^+ (D,N)$.
\end{theoremEnd}
\begin{proofEnd}
By induction on the length of the reduction $(C,M)^{\vec k}.X \stred^+ (D,N)^{\vec k}.X$:
\begin{itemize}
    \item Case of $1$. In this case we have $(C,M)^{\vec k}.X \stred (D,N)^{\vec k}.X$. The only rule which is consistent with the hypothesis is \textit{head}, so we know that $(C,M) \red (D,N)$ and the claim is trivially true.
    
    \item Case of $n+1$. In this case we have $(C,M)^{\vec k}.X \stred (C',M')^{\vec\ell}.X' \stred^+ (D,N)^{\vec k}.X$ and we proceed by cases on the introduction of $(C,M)^{\vec k}.X \stred (C',M')^{\vec \ell}.X'$:
    \begin{itemize}
        \item Case of \textit{head}. In this case we have $C'\equiv C, \vec {\ell} = \vec k$ and $X'=X$ and we know that $(C,M)\red (C',M')$.By inductive hypothesis we also know that $(C',M')\red^+ (D,N)$, so by the transitivity of $\red^+$ we conclude $(C,M)\red^+ (D,N)$.
        
        \emergencystretch=20pt
        \item Case of \textit{step-in}. In this case we know that $M\equiv E[\boxt_T(\lift N)]$ and we have that $(C,E[\boxt_T(\lift N)])^{\vec k}.X \stred (id_Q,N\vec\ell)^{\vec\ell}.(C,E[\boxt_T(\lift N)])^{\vec k}.X$, where $(Q,\vec\ell) = \freshlabels(N,T)$. We know that $(id_Q,N\vec\ell)^{\vec\ell}.(C,E[\boxt_T(\lift N)])^{\vec k}.X$ is a configuration of length $m+1$, so in order for it to eventually reduce to $(D,N)^{\vec k}.X$ there must be a shortening reduction somewhere between $(id_Q,N\vec\ell)^{\vec\ell}.(C,E[\boxt_T(\lift N)])^{\vec k}.X$ and $(D,N)^{\vec k}.X$. Because the only shortening reductions are introduced by the \textit{step-out} rule we must have
        \begin{align*}
            (id_Q,N\vec\ell)^{\vec\ell}.(C,E[\boxt_T(\lift N)])^{\vec k}.X
            &\stred^+ (D',\vec{\ell'})^{\vec\ell}.(C,E[\boxt_T(\lift N)])^{\vec k}.X\\
            &\stred (C,E[(\vec\ell,D',\vec{\ell'})])^{\vec k}.X\\
            &\stred^* (D,N)^{\vec k}.X,
        \end{align*}
        where every intermediate configuration in $(id_Q,N\vec\ell)^{\vec\ell}.(C,E[\boxt_T(\lift N)])^{\vec k}.X \stred^+ (D',\vec{\ell'})^{\vec\ell}.(C,E[\boxt_T(\lift N)])^{\vec k}.X$ has length greater or equal than $m+1$. By applying the inductive hypothesis on this reduction, which has length less than $n$, we get $(id_Q,N\vec\ell) \red^+ (D',\vec{\ell'})$ and therefore $(C,\boxt_T(\lift N)) \red (C,(\vec\ell,D',\vec{\ell'}))$ by the \textit{box} rule. By applying Theorem \ref{context reduction lemma} we also get $(C,E[\boxt_T(\lift N)]) \red (C,E[(\vec\ell,D',\vec{\ell'})])$. At this point, if $(C,E[(\vec\ell,D',\vec{\ell'})])^{\vec k}.X=(D,N)^{\vec k}.X$ then trivially  $(C,E[(\vec\ell,D',\vec{\ell'})])=(D,N)$ and the claim is proven. Otherwise, we have a reduction sequence $(C,E[(\vec\ell,D',\vec{\ell'})])^{\vec k}.X \stred^+ (D,N)^{\vec k}.X,$ of length less than $n$, so by inductive hypothesis we get $(C,E[(\vec\ell,D',\vec{\ell'})])\red^+(D,N)$ and conclude $(C,E[\boxt_T(\lift N)])\stred^+(D,N)$.
        
        \item Case of \textit{step-out}. This case is impossible since it immediately violates the hypothesis that every intermediate configuration is of length $m$ or greater.
    \end{itemize}
\end{itemize}
\end{proofEnd}

\begin{cor}
\label{same-level correspondence cor}
\emergencystretch=10pt
Suppose $(C,M)$ and $(D,N)$ are two small-step configurations. Whenever $\fromsmallstep(C,M) \stred^+ \fromsmallstep(D,N)$, we have $(C,M)\red^+ (D,N)$.
\end{cor}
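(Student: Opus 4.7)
The plan is to directly specialize Lemma \ref{same-level correspondence} to the case where the base stack is empty. By the definition of $\fromsmallstep$, we have $\fromsmallstep(C,M) = (C,M)^{\emptyset}.\epsilon$ and $\fromsmallstep(D,N) = (D,N)^{\emptyset}.\epsilon$, both of which are well-formed stacked configurations of length exactly $1$. These correspond precisely to the instantiation $\vec k = \emptyset$ and $X = \epsilon$ in the statement of the lemma, with the common length $m = 1$.

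The only remaining obligation is to verify the side condition of the lemma, namely that every intermediate configuration in the reduction sequence $\fromsmallstep(C,M) \stred^+ \fromsmallstep(D,N)$ has length $\geq m = 1$. This is immediate from the shape of the reduction rules: none of \textit{head}, \textit{step-in}, or \textit{step-out} is applicable to the empty stacked configuration $\epsilon$, since each of them requires at least a head frame to read from or write to. Therefore every configuration that can appear as an intermediate step of a $\stred$-reduction must be non-empty, i.e.\ have length at least $1$. Since $m = 1$ is already the absolute minimum stack length, the side condition is vacuously satisfied.

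With both hypotheses of Lemma \ref{same-level correspondence} discharged, we obtain $(C,M) \red^+ (D,N)$ directly, which is exactly the claim. There is no substantive obstacle here: the corollary is essentially a specialization of the preceding lemma to its base-level input, and the only observation required is that initial stacked configurations sit at the minimum possible stack depth, so the ``never dips below $m$'' condition comes for free.
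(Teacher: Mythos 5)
Your proof is correct and takes essentially the same approach as the paper, which likewise obtains the corollary by instantiating Lemma \ref{same-level correspondence} with $\vec k = \emptyset$ and $X = \epsilon$. Your extra observation that the length condition is vacuous because no reachable configuration in a $\stred$-sequence can have length below $1$ is a valid and slightly more careful discharge of the lemma's side condition than the paper makes explicit.
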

\begin{proof}
The claim follows immediately from Lemma \ref{same-level correspondence} by setting $\vec k = \emptyset$ and $X = \epsilon$.
\end{proof}

These two results will play a relevant role in the rest of the paper, and they can be summarized graphically in the following diagram:

\begin{center}
    \begin{tikzcd}[column sep = 4cm]
    (C,M) \arrow{r}[pos=1]{+} \arrow[d, leftrightarrow, dashed] & (D,N) \arrow[d, leftrightarrow, dashed]\\
    (C,M)^{\vec\ell}.X \arrow[harpoon]{r}[pos=1]{+} &(D,N)^{\vec\ell}.X
    \end{tikzcd}
\end{center}    

\paragraph{}In order to prove the equivalence between the small-step and machine semantics, it is necessary to prove that $\fromsmallstep$ preserves convergence and that whenever $(C,M)$ goes into deadlock, then $\fromsmallstep(C,M)$ also goes into deadlock. If we consider that to converge essentially means to evaluate to a term of a certain form, we can see that the first result is a trivial consequence of the two previous lemmata, as can be seen by this specific instance of the diagram that we just presented:

\begin{center}
    \begin{tikzcd}[column sep = 4cm]
    (C,M) \arrow{r}[pos=1]{+} \arrow[d, equal, "\fromsmallstep" description] & (D,N) \arrow[d, equal, "\fromsmallstep" description]\\
    (C,M)^\emptyset.\epsilon \arrow[harpoon]{r}[pos=1]{+} & (D,N)^\emptyset.\epsilon
    \end{tikzcd}
\end{center} 
More formally, we give the following result.

\begin{prop}\label{preservation of convergence fromsmallstep}
$(C,M)\converges$ if and only if $\fromsmallstep(C,M)\converges$.
\end{prop}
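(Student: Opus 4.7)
The plan is to prove the two directions separately, using the two reduction-correspondence corollaries established just above the statement as the bridge between the two semantics.

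For the forward direction, I would proceed by induction on the derivation of $(C,M)\converges$ in the small-step semantics. In the base case, where $M$ is already a value $V$, the configuration $\fromsmallstep(C,V) = (C,V)^\emptyset.\epsilon$ converges immediately by clause 1 of the definition of converging stacked configuration. In the inductive step, I would have $(C,M)\red (C',M')$ together with $(C',M')\converges$. The inductive hypothesis yields $\fromsmallstep(C',M')\converges$, and Corollary \ref{small step to stacked correspondence cor} gives $\fromsmallstep(C,M) \stred^+ \fromsmallstep(C',M')$; a finite number of applications of clause 2 of stacked convergence then lift convergence back to $\fromsmallstep(C,M)$.

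For the reverse direction, I would first observe that by unfolding the inductive definition of convergence for stacked configurations, $\fromsmallstep(C,M)\converges$ is equivalent to the existence of a finite reduction $(C,M)^\emptyset.\epsilon \stred^* (D,V)^\emptyset.\epsilon$ for some $D$ and some value $V$. If this sequence has length zero, then $(C,M)\equiv (D,V)$ with $M$ a value, and $(C,V)\converges$ by clause 1 of small-step convergence. Otherwise the sequence is nontrivial, and since both endpoints have stack length 1, which is the minimum possible, every intermediate configuration has length at least 1. The hypotheses of Corollary \ref{same-level correspondence cor} are therefore met, giving $(C,M)\red^+(D,V)$; repeated applications of clause 2 of small-step convergence then deliver $(C,M)\converges$.

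I do not expect any serious obstacles here, since all the heavy lifting is already done by Lemmata \ref{small step to stacked correspondence} and \ref{same-level correspondence}. The only subtlety to handle carefully is the degenerate case of the empty reduction in the backward direction, where Corollary \ref{same-level correspondence cor}, which concerns $\stred^+$, does not apply and must be replaced with the direct observation that $M$ is already a value.
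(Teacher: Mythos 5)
Your proof is correct and follows essentially the same route as the paper's: both reduce the statement to the characterization of convergence as reachability of a value configuration and then invoke Corollaries \ref{small step to stacked correspondence cor} and \ref{same-level correspondence cor}. Your treatment is merely more explicit (and your care with the empty-reduction case, where the $\stred^+$ corollary does not apply, is a detail the paper's one-line proof glosses over).
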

\begin{proof}
It is easy to see that a small-step configuration $(C,M)$ converges if and only if $(C,M)\red^*(D,V)$ for some $D,V$. It is equally easy to see that a stacked configuration $(C,M)^\emptyset.\epsilon$ converges if and only if $(C,M)^\emptyset.\epsilon \stred^* (D,V)^{\emptyset}.\epsilon$ for some $D,V$. As a result, the claim follows trivially from corollaries \ref{small step to stacked correspondence cor} and \ref{same-level correspondence cor}.
\end{proof}

A similar intuition applies to the deadlocking case. However, because small-step configurations can go into deadlock because of a sub-reduction introduced by \textit{box}, the proof of this second result is not a trivial consequence of the aforementioned diagram.

\begin{theoremEnd}{lem}\label{small-step deadlock to stacked deadlock}
\emergencystretch=15pt
Suppose $(C,M)$ is a small-step configuration. Whenever $(C,M)\deadlocks$, we have $\fromsmallstep(C,M)\deadlocks$.
\end{theoremEnd}
\begin{proofEnd}
We proceed by induction on $(C,M)\deadlocks$:
\begin{itemize}
    \item Case in which $(C,M)$ is irreducible, $M\not\equiv V$ and $M\not\equiv E[\boxt_T(\lift N)]$. In this case consider $\fromsmallstep(C,M)=(C,M)^\emptyset.\epsilon$. This configuration cannot be reduced by the \textit{head} rule, since this would imply the reducibility of $(C,M)$, nor by the \textit{step-in} rule, since $M\not\equiv E[\boxt_T(\lift N)]$, nor by \textit{step-out}, since $M\not\equiv V$ and the rest of the stack is empty. Therefore $(C,M)^\emptyset.\epsilon$ is irreducible and because $M\not\equiv V$ we conclude $(C,M)^\emptyset.\epsilon \deadlocks$.
    
    \emergencystretch=30pt
    \item Case of $(C,M) \red (D,N)$ and $(D,N) \deadlocks$. By Lemma \ref{small step to stacked correspondence} we know $\fromsmallstep(C,M) \stred^+ \fromsmallstep(D,N)$. By inductive hypothesis we know that $\fromsmallstep(D,N) \deadlocks$ and conclude that $\fromsmallstep(C,M)\deadlocks$.
    
    \item Case of $M\equiv E[\boxt_T(\lift N)]$ and $(id_Q,N\vec\ell)\deadlocks$, where $(Q,\vec\ell)=\freshlabels(N,T)$. In this case we have $\frommachine(C,E[\boxt_T(\lift N)]) = (C,E[\boxt_T(\lift N)])^\emptyset.\epsilon \stred (id_Q,N\vec\ell)^{\vec\ell}.(C,E[\boxt_T(\lift N)])^\emptyset.\epsilon$ by the \textit{step-in} rule and inductive hypothesis we know that $\frommachine(id_Q,N\vec\ell) = (id_Q,N\vec\ell)^\emptyset.\epsilon \deadlocks$. From this and Lemma \ref{adding a stack does not solve deadlock} we get that $(id_Q,N\vec\ell)^{\vec k}.X \deadlocks$ for all $\vec k$ and $X$, including $\vec k = \vec \ell$ and $X = (C,E[\boxt_T(\lift N)])^\emptyset.\epsilon$, so we know $(id_Q,N\vec\ell)^{\vec\ell}.(C,E[\boxt_T(\lift N)])^\emptyset.\epsilon \deadlocks$ and conclude $(C,E[\boxt_T(\lift N)])^\emptyset.\epsilon \deadlocks$.
    
    \item Case of $M\equiv E[\boxt_T(\lift N)]$ and $(id_Q,N\vec\ell) \red^* (D,V)$, where $(Q,\vec\ell)=\freshlabels(N,T)$ and $V$ is not a label tuple. In this case we have $\frommachine(C,E[\boxt_T(\lift N)]) = (C,E[\boxt_T(\lift N)])^\emptyset.\epsilon \stred (id_Q,N\vec\ell)^{\vec\ell}.(C,E[\boxt_T(\lift N)])^\emptyset.\epsilon$ by the \textit{step-in} rule and we know $(id_Q,N\vec\ell)^{\vec\ell}.(C,E[\boxt_T(\lift N)])^\emptyset.\epsilon \stred^* (D,V)^{\vec\ell}.(C,E[\boxt_T(\lift N)])^\emptyset.\epsilon$ by Lemma \ref{small step to stacked correspondence}. Since $(D,V)^{\vec\ell}.(C,E[\boxt_T(\lift N)])^\emptyset.\epsilon$ cannot be reduced by the \textit{head} rule (since $V$ is a value), nor by the \textit{step-in} rule (since $V$ cannot be of the form $F[\boxt_T(\lift P)]$ for any $F,P$), nor by the \textit{step-out} rule (since $V$ is not a label tuple), and the rest of the stack is not empty, we get that $(D,V)^{\vec\ell}.(C,E[\boxt_T(\lift N)])^\emptyset.\epsilon \deadlocks$ and conclude that $(C,E[\boxt_T(\lift N)])^\emptyset.\epsilon \deadlocks$.
\end{itemize}
\end{proofEnd}

\subsection{From Machine to Stacked Semantics}
\label{subsection machine to stacked}
\label{machine to stacked reductions}

We now explore the ``other side'' of the equivalence between the small-step and machine semantics, that is, the relationship between the machine semantics and the stacked semantics, with emphasis on the direction that goes from the former to the latter. Whereas the relationship between small-step and stacked configurations is trivial and mainly concerns initial configurations, the relationship between machine and stacked configurations is at the same time more pervasive (it holds for all configurations, not just initial ones) and slightly more complicated to define. For this purpose, we formalize this relationship via a $\frommachine$ function which maps machine configurations into stacked configurations:

\begin{align*}
    \build(C,M,\epsilon) &=(C,M)^\emptyset.\epsilon\\
    \build(C,M,\contfarg(N).S) &= \build(C,MN,S)\\
    \build(C,M,\contfapp(V).S) &= \build(C,VM,S)\\
    \build(C,M,\contalabel(N).S) &= \build(C,\apply(M,N),S)\\
    \build(C,M,\contacirc(V).S) &= \build(C,\apply(V,M),S)\\
    \build(C,M,\conttright(N).S) &= \build(C,\tuple{M,N},S)\\
    \build(C,M,\conttleft(V).S) &= \build(C,\tuple{V,M},S)\\
    \build(C,M,\contbox(Q,\vec\ell).S) &= \build(C,\boxt_T M,S) \textnormal{ where } \emptyset;Q\vdash \vec\ell:T\\
    \build(C,M,\contsub(D,N,\vec\ell,T).S) &= (C,M)^{\vec\ell}.\build(D,\boxt_T(\lift N),S)\\
    \build(C,M,\contlet(x,y,N).S) &= \build(C,\letin{\tuple{x,y}}{M}{N},S)\\
    \build(C,M,\contforce.S) &= \build(C,\force M,S).
\end{align*}

\noindent Before we discuss this definition in greater detail, it is worth noting that by restricting the domain of $\frommachine$ to the set $\initialmachine$ of initial machine configurations, we trivially have a bijection between $\initialmachine$ and the set $\initialstacked$ of initial stacked configurations, which is analogous to the one we had in Section \ref{section from small-step to stacked configurations} between small-step configurations and initial stacked configurations.

\paragraph{}Informally, the $\frommachine$ function takes the term that the machine configuration is currently focused on and gradually unwinds the stack to rebuild the term being evaluated in its entirety. The only case in which we actually change the current stack frame in the resulting stacked configuration is, unsurprisingly, when we encounter a continuation of type $\contsub$. As a result, the machine stack is encoded in the resulting stacked configuration partly as the configuration stack itself, and partly as the structure of the terms contained within each individual stack frame. This structure is not arbitrary. In fact, we have that it always corresponds to the structure of an evaluation context, as stated in the following proposition.

\begin{theoremEnd}{prop}\label{stack to context lemma}
If $\build(C,M,S) = (D,N)^{\vec\ell}.X$, then $C\equiv D$ and $N$ is of the form $E[M]$ for some evaluation context $E$.
\end{theoremEnd}
\begin{proofEnd}
By induction on $|S|$, the size of $S$, defined in the natural manner. If $|S|=0$, it means that $S=\epsilon$. In this case we have $\build(C,M,\epsilon)=(C,M)^\emptyset.\epsilon$ and the claim is true for $E\equiv [\cdot]$. If $|S|=n+1$, suppose $S\equiv H.S'$, where $|S'|=n$. We proceed by cases on $H$:
\begin{itemize}
    \emergencystretch=20pt
    \item Case $H\equiv \contfarg(N)$. In this case we have that $\build(C,M,\contfarg(N).S')=\build(C,MN,S')$. We know that $MN\equiv E[M]$ for $E\equiv [\cdot]N$. Furthermore, by inductive hypothesis we get that $\build(C,E[M],S')=(C,E'[E[M]])^{\vec\ell}.X$ for some $E'$ and by Proposition \ref{context propagation} we conclude $\build(C,M,\contfarg(N).S')=(C,E''[M])^{\vec\ell}.X$ for some $E''$.
    
    \emergencystretch=25pt
    \item Case $H\equiv \contfapp(V)$. In this case we have that $\build(C,M,\contfapp(V).S')=\build(C,VM,S')$. We know that $VM\equiv E[M]$ for $E\equiv V[\cdot]$. Furthermore, by inductive hypothesis we get that $\build(C,E[M],S')=(C,E'[E[M]])^{\vec\ell}.X$ for some $E'$ and by Proposition \ref{context propagation} we eventually conclude that $\build(C,M,\contfapp(V).S') = (C,E''[M])^{\vec\ell}.X$ for some $E''$.
    
    \emergencystretch=20pt
    \item Case $H\equiv \contalabel(N)$. In this case we have that $\build(C,M,\contalabel(N).S')=\build(C,\apply(M,N),S')$. We know that $\apply(M,N)\equiv E[M]$ for $E\equiv \apply([\cdot],N)$. Furthermore, by inductive hypothesis we get $\build(C,E[M],S')=(C,E'[E[M]])^{\vec\ell}.X$ for some $E'$ and by Proposition \ref{context propagation} we conclude that $\build(C,M,\contalabel(N).S')=(C,E''[M])^{\vec\ell}.X$ for some $E''$.
    
    \emergencystretch=30pt
    \item Case $H\equiv \contacirc(V)$. In this case we have $\build(C,M,\contacirc(V).S')=\build(C,\apply(V,M),S')$. We know that $\apply(V,M) \equiv E[M]$ for $E\equiv \apply(V,[\cdot])$. Furthermore, by inductive hypothesis we get $\build(C,E[M],S')=(C,E'[E[M]])^{\vec k}.X$ for some $E'$ and by Proposition \ref{context propagation} we can conclude that $\build(C,M,\contacirc(V).S')=(C,E''[M])^{\vec k}.X$ for some $E''$.
    
    \item Case $H\equiv \conttright(N)$. In this case we have $\build(C,M,\conttright(N).S')=\build(C,\tuple{M,N},S')$. We know that $\tuple{M,N}\equiv E[M]$ for $E\equiv \tuple{[\cdot],N}$. Furthermore, by inductive hypothesis we get that $\build(C,E[M],S')=(C,E'[E[M]])^{\vec\ell}.X$ for some $E'$ and by Proposition \ref{context propagation} we eventually conclude that $\build(C,M,\conttright(N).S')=(C,E''[M])^{\vec\ell}.X$ for some $E''$.
    
    \item Case $H\equiv \conttleft(V)$. In this case we have $\build(C,M,\conttleft(V).S')=\build(C,\tuple{V,M},S')$. We know that $\tuple{V,M} \equiv E[M]$ for $E\equiv \tuple{V,[\cdot]}$. Furthermore, by inductive hypothesis we get that $\build(C,E[M],S')=(C,E'[E[M]])^{\vec\ell}.X$ for some $E'$ and by Proposition \ref{context propagation} we eventually conclude that $\build(C,M,\conttleft(V).S')=(C,E''[M])^{\vec \ell}.X$ for some $E''$.
    
    \emergencystretch=20pt
    \item Case $H\equiv \contbox(Q,\vec\ell)$. In this case we have that $\build(C,M,\contbox(Q,\vec\ell).S')=\build(C,\boxt_T M,S')$, where $\emptyset;Q\vdash \vec\ell:T$. We know that $\boxt_T M \equiv E[M]$ for $E\equiv \boxt_T \,[\cdot]$. Furthermore, by inductive hypothesis we get $\build(C,E[M],S')=(C,E'[E[M]])^{\vec\ell}.X$ for some $E'$ and by Proposition \ref{context propagation} we eventually conclude that $\build(C,M,\contbox(Q,\vec\ell').S')=(C,E''[M])^{\vec k}.X$ for some $E''$.
    
    \item Case $H\equiv \contsub(D,N,\vec\ell,T)$.  In this case $\build(C,M,\contsub(D,N,\vec\ell,T).S')=(C,M)^{\vec\ell}.\build(D, \boxt_T(\lift N), S')$ and the claim is trivially true for $E\equiv [\cdot]$.
    
    \emergencystretch=10pt
    \item Case $H\equiv \contlet(x,y,N)$. In this case we have $\build(C,M,\contlet(x,y,N).S')= \build(C,\letin{\tuple{x,y}}{M}{N},S')$. We know that $\letin{\tuple{x,y}}{M}{N}\equiv E[M]$ for $E\equiv \letin{\tuple{x,y}}{[\cdot]}{N}$. Furthermore, by inductive hypothesis we get $\build(C,E[M],S')=(C,E'[E[M]])^{\vec\ell}.X$ for some $E'$ and by Proposition \ref{context propagation} we eventually conclude that $\build(C,M,\contlet(x,y,N).S')=(C,E''[M])^{\vec\ell}.X$ for some $E''$.
    
    \emergencystretch=35pt
    \item Case $H\equiv \contforce$. In this case we have that $\build(C,M,\contforce.S')=\build(C,\force M,S')$. We also know that $\force M \equiv E[M]$ for $E\equiv \force \,[\cdot]$. Furthermore, by inductive hypothesis we get that $\build(C,E[M],S')=(C,E'[E[M]])^{\vec\ell}.X$ for some $E'$ and by Proposition \ref{context propagation} we eventually conclude that $\build(C,M,\contforce.S')=(C,E''[M])^{\vec k}.X$ for some $E''$.
\end{itemize}
\end{proofEnd}

\paragraph{}The following result goes even further, as it guarantees that the evaluation context $E$ that we introduced in the previous lemma, as well as the locally available labels $\vec\ell$ and the rest of the stack $X$ all depend \textit{exclusively} on the machine stack $S$.

\begin{prop}\label{stack to similarities lemma}
If two machine configurations $(C,M,S)$ and $(D,N,S)$ share the same stack $S$, then $\build(C,M,S)=(C,E[M])^{\vec\ell}.X$ and $\build(D,N,S)=(D,E[N])^{\vec\ell}.X$ for the same $E$, $\vec\ell$ and $X$.
\end{prop}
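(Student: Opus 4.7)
The plan is to prove this by straightforward induction on $|S|$, mirroring the case analysis already used in the proof of Proposition \ref{stack to context lemma}. The key structural observation is that every clause in the definition of $\build$ treats the circuit-and-term pair uniformly except for the $\contsub$ clause, which is the only one that actually emits a new stack frame; this is precisely the source of the shared $E$, $\vec\ell$ and $X$.

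For the base case $|S|=0$, namely $S\equiv \epsilon$, we have $\build(C,M,\epsilon)=(C,M)^\emptyset.\epsilon$ and $\build(D,N,\epsilon)=(D,N)^\emptyset.\epsilon$, so the claim holds with $E\equiv[\cdot]$, $\vec\ell=\emptyset$, and $X=\epsilon$.

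For the inductive step $S\equiv H.S'$, I would proceed by cases on $H$. For every non-$\contsub$ head there is a simple evaluation context $F_H$ (independent of the circuit and term) such that $\build(C,M,H.S')=\build(C,F_H[M],S')$ and $\build(D,N,H.S')=\build(D,F_H[N],S')$; for example, $F_H\equiv [\cdot]P$ when $H\equiv \contfarg(P)$, $F_H\equiv V[\cdot]$ when $H\equiv \contfapp(V)$, $F_H\equiv \boxt_T[\cdot]$ when $H\equiv \contbox(Q,\vec k)$ (with $T$ determined by the judgement $\emptyset;Q\vdash\vec k:T$), and so on for all remaining cases. Applying the inductive hypothesis to the pair $(C,F_H[M],S')$ and $(D,F_H[N],S')$ yields an evaluation context $E'$, labels $\vec\ell$, and stack $X$ such that $\build(C,F_H[M],S')=(C,E'[F_H[M]])^{\vec\ell}.X$ and $\build(D,F_H[N],S')=(D,E'[F_H[N]])^{\vec\ell}.X$. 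Setting $E\equiv E'\circ F_H$ (in the sense that $E[\cdot]=E'[F_H[\cdot]]$, which is still a valid evaluation context by the grammar) gives the desired identity.

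The remaining case $H\equiv \contsub(D',L,\vec k,T)$ is immediate: by definition $\build(C,M,\contsub(D',L,\vec k,T).S')=(C,M)^{\vec k}.\build(D',\boxt_T(\lift L),S')$ and $\build(D,N,\contsub(D',L,\vec k,T).S')=(D,N)^{\vec k}.\build(D',\boxt_T(\lift L),S')$, where the tail $\build(D',\boxt_T(\lift L),S')$ depends only on $S'$, $D'$, $L$ and $T$ and not on $(C,M)$ or $(D,N)$; taking $E\equiv[\cdot]$, $\vec\ell=\vec k$, and $X=\build(D',\boxt_T(\lift L),S')$ concludes the case. Since every step is essentially dictated by the definition of $\build$, I do not expect any genuine obstacle: the only thing to watch is that $F_H\circ E'$ is still an evaluation context, which is immediate from the grammar since all the $F_H$ that arise fit one of the production schemas.
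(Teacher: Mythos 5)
Your proposal is correct and follows essentially the same route as the paper: the paper cites Proposition \ref{stack to context lemma} for the existence of $E$ and then asserts that the identity of $E$, $\vec\ell$ and $X$ follows "trivially by induction on the length of $S$" — you have simply carried out that induction explicitly, with the key observation (that every clause of $\build$ except $\contsub$ wraps the term in a context $F_H$ determined solely by the stack head) being exactly the reason the paper considers the induction trivial. The one step you wave at, that $E'[F_H[\cdot]]$ is again an evaluation context, is precisely Proposition \ref{context propagation}, which the paper already uses for the same purpose in the proof of Proposition \ref{stack to context lemma}.
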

\begin{proof}
The existence of $E$ is guaranteed by Proposition \ref{stack to context lemma}. The identity of $E,\vec\ell$ and $X$ can be proven trivially by induction on the length of $S$.
\end{proof}

\paragraph{}Because the machine semantics is, intuitively, more fine-grained than the stacked semantics, it comes as no surprise that distinct machine configurations can be mapped by $\frommachine$ to the same stacked configuration. As an example, take the configurations $(C,V,\contfarg(W).\epsilon)$ and $(C,W,\contfapp(V).\epsilon)$, for any values $V,W$. We have
\begin{align*}
    &\frommachine(C,V,\contfarg(W).\epsilon) = \frommachine(C,VW,\epsilon)=(C,VW)^\emptyset.\epsilon,\\
    &\frommachine(C,W,\contfapp(V).\epsilon) = \frommachine(C,VW,\epsilon)=(C,VW)^\emptyset.\epsilon.
\end{align*}
Intuitively, this is due to the fact that the two configurations represents two different phases in the evaluation of the same application $VW$, which in contrast can be evaluated in a single step in the stacked semantics. In fact, most of the rules of the machine semantics only serve to decompose or move around terms (e.g, the \textit{split} or \textit{shift} rules) and because they do not actually evaluate anything, they have no appreciable effect on the corresponding stacked configuration.
This will be a crucial aspect to consider in the coming results, so we ought to formalize it. Let $\amred_b$ be the proper subset of $\amred$ that can be derived by only using rules for $\amred$ whose conclusion $(C,M,S) \amred (D,N,R)$ is such that $\frommachine(C,M,S) = \frommachine(D,N,R)$. These rules are specifically \textit{app-split, app-shift, apply-split, apply-shift, let-split, tuple-split, tuple-shift, tuple-join, box-open, force-open}. Also, let $\amred_r$ be the subset of $\amred$ that can be derived by only using the remaining rules, that is, \textit{app-join, apply-join, box-sub, box-close, let-join, force-close}, such that
$$
(C,M,S) \amred (D,N,R) \iff (C,M,S) \amred_b (D,N,R) \vee (C,M,S) \amred_r (D,N,R).
$$

\noindent The most essential property that we must guarantee is that there is a limit to the number of times we can reduce a machine configuration $(C,M,S)$ without causing any change in the corresponding stacked configuration $\frommachine(C,M,S)$. In other words, we must prove that $\amred_b$ is strongly normalizing.

\begin{theoremEnd}{lem}\label{strong normalization of =>b}
The reduction relation $\amred_b$ is strongly normalizing.
\end{theoremEnd}
\begin{proofEnd}
Let $\operatorname{L}_t$ be a function on terms defined as such:
\begin{align*}
    \operatorname{L}_t(V) &= 0,\\
    \operatorname{L}_t(\boxt_T M) = \operatorname{L}_t(\force M) = \operatorname{L}_t(\letin{\tuple{x,y}}{M}{N}) &= \operatorname{L}_t(M)+1, \\
    \operatorname{L}_t(MN) = \operatorname{L}_t(\apply(M,N)) &= \operatorname{L}_t(M)+\operatorname{L}_t(N)+2, \\
    \operatorname{L}_t(\tuple{M,N}) &= \operatorname{L}_t(M)+\operatorname{L}_t(N)+3.
\end{align*}
Now let $\operatorname{L}_s$ be a function on machine stacks defined as such:
\begin{align*}
    \operatorname{L}_s(\epsilon) = \operatorname{L}_s(\contsub(D,N,\vec\ell,T).S') &= 0, \\
    \operatorname{L}_s(\contbox(Q,\vec\ell).S') = \operatorname{L}_s(\contforce.S') = \operatorname{L}_s(\contlet(x,y,M).S') &= \operatorname{L}_s(S'), \\
    \operatorname{L}_s(\contfarg(M).S') = \operatorname{L}_s(\contalabel(M).S') &= \operatorname{L}_t(M)+ 1 + \operatorname{L}_s(S'),\\
    \operatorname{L}_s(\conttright(M).S') &= \operatorname{L}_t(M)+ 2 + \operatorname{L}_s(S'),\\
    \operatorname{L}_s(\contfapp(V).S') = \operatorname{L}_s(\contacirc(V).S') &= \operatorname{L}_s(S'),\\
    \operatorname{L}_s(\conttleft(V).S') &= 1 + \operatorname{L}_s(S').
\end{align*}
Finally, let $\operatorname{L}(C,M,S)=\operatorname{L}_t(M)+\operatorname{L}_s(S)$. It is trivial to see that both $\operatorname{L}_t$ and $\operatorname{L}_s$, and as a consequence $\operatorname{L}$, are non-negative. Because of this, it is sufficient to show that whenever $(C,M,S) \amred_b (D,N,S')$ then $\operatorname{L}(D,N,S') < \operatorname{L}(C,M,S)$ to prove that $\amred_b$ is strongly normalizing. We proceed by cases on the introduction of $(C,M,S) \amred_b (D,N,S')$:
\begin{itemize}
    \emergencystretch=15pt
    \item Case of \textit{app-split}. In this case $(C,MN,S) \amred_b (C,M,\contfarg(N).S)$. We have $\operatorname{L}(C,MN,S) = \operatorname{L}_t(M) + \operatorname{L}_t(N) + 2 + \operatorname{L}_s(S)$ and $\operatorname{L}(C,M,\contfarg(N).S)=\operatorname{L}_t(M) + \operatorname{L}_t(N) + 1 + \operatorname{L}_s(S)$ and the claim is proven.
    
    \item Case of \textit{app-shift}. In this case $(C,V,\contfarg(N).S) \amred_b (C,N,\contfapp(V).S)$. We have $\operatorname{L}(C,V,\contfarg(N).S) = \operatorname{L}_t(N) + 1 + \operatorname{L}_s(S)$ and $\operatorname{L}(C,N,\contfapp(V).S) =\operatorname{L}_t(N) + \operatorname{L}_s(S)$ and the claim is proven.
    
    \emergencystretch=30pt
    \item Case of \textit{apply-split}. In this case $(C,\apply(M,N),S) \amred_b (C,M,\contalabel(N).S)$. We have $\operatorname{L}(C,\apply(M,N),S) = \operatorname{L}_t(M) + \operatorname{L}_t(N) + 2 + \operatorname{L}_s(S)$ and $\operatorname{L}(C,M,\contalabel(N).S)=\operatorname{L}_t(M) + \operatorname{L}_t(N) + 1 + \operatorname{L}_s(S)$ and the claim is proven.
    
    \item Case of \textit{apply-shift}. In this case $(C,V,\contalabel(N).S) \amred_b (C,N,\contacirc(V).S)$. We have $\operatorname{L}(C,V,\contalabel(N).S) = \operatorname{L}_t(N) + 1 + \operatorname{L}_s(S)$ and $\operatorname{L}(C,N,\contacirc(V).S) =\operatorname{L}_t(N) + \operatorname{L}_s(S)$ and the claim is proven.
    
    \item Case of \textit{let-split}. In this case $(C,\letin{\tuple{x,y}}{M}{N}, S) \amred_b (C,M,\contlet(x,y,N).S)$. We have $\operatorname{L}(C,\letin{\tuple{x,y}}{M}{N}, S) = \operatorname{L}_t(M) + 1 + \operatorname{L}_s(S)$ and $\operatorname{L}(C,M,\contlet(x,y,N).S) = \operatorname{L}_t(M) + \operatorname{L}_s(S)$ and the claim is proven.
    
    \item Case of \textit{tuple-split}. In this case $(C,\tuple{M,N},S) \amred_b (C,M,\conttright(N).S)$. We have $\operatorname{L}(C,\tuple{M,N},S) = \operatorname{L}_t(M) + \operatorname{L}_t(N) + 3 + \operatorname{L}_s(S)$ and $\operatorname{L}(C,M,\conttright(N).S)=\operatorname{L}_t(M) + \operatorname{L}_t(N) + 2 + \operatorname{L}_s(S)$ and the claim is proven.
    
    \item Case of \textit{tuple-shift}. In this case $(C,V,\conttright(N).S) \amred_b (C,N,\conttleft(V).S)$. We have $\operatorname{L}(C,V,\conttright(N).S) = \operatorname{L}_t(N) + 2 + \operatorname{L}_s(S)$ and $\operatorname{L}(C,N,\conttleft(V).S) =\operatorname{L}_t(N) + 1 + \operatorname{L}_s(S)$ and the claim is proven.
    
    \item Case of \textit{tuple-join}. In this case $(C,W,\conttleft(V).S) \amred_b (C,\tuple{V,W},S)$. We have $\operatorname{L}(C,W,\conttleft(V).S) = \operatorname{L}_s(S) + 1$ and $\operatorname{L}(C,\tuple{V,W},S) = \operatorname{L}_s(S)$ and the claim is proven.
    
    \item Case of \textit{box-open}. In this case $(C,\boxt_T M, S) \amred_b (C,M,\contbox(Q,\vec\ell).S)$. We have $\operatorname{L}(C,\boxt_T M, S) = \operatorname{L}_t(M) + 1 + \operatorname{L}_s(S)$ and $\operatorname{L}(C,M,\contbox(Q,\vec\ell).S) = \operatorname{L}_t(M) + \operatorname{L}_s(S)$ and the claim is proven.
    
    \item Case of \textit{force-open}. In this case $(C,\force M, S) \amred_b (C,M,\contforce.S)$. We have $\operatorname{L}(C,\force M, S) = \operatorname{L}_t(M) + 1 + \operatorname{L}_s(S)$ and $\operatorname{L}(C,M,\contforce.S) = \operatorname{L}_t(M) + \operatorname{L}_s(S)$ and the claim is proven.
    
\end{itemize}
\end{proofEnd}

\paragraph{}Now we can give the most significant result of this sub-section, which is similar to the one we gave in Lemma \ref{small step to stacked correspondence}. Namely, we show that a single reduction step in the machine semantics can always be simulated by zero or one steps in the stacked semantics.

\begin{theoremEnd}{lem}\label{machine to stacked correspondence}
\emergencystretch=10pt
Suppose $(C,M,S)$ and $(C',M',S')$ are two machine configurations such that $(C,M,S) \amred (C',M',S')$. The following hold:
\begin{enumerate}
    \item If $(C,M,S)\amred_b(C',M',S')$, then $\frommachine(C,M,S) = \frommachine(C',M',S'),$
    \item If $(C,M,S)\amred_r(C',M',S')$, then $\frommachine(C,M,S) \stred \frommachine(C',M',S').$
\end{enumerate}
\end{theoremEnd}
\begin{proofEnd}
By cases on the introduction of $(C,M,S)\amred (C',M',S')$:
\begin{itemize}
    \item Case of \textit{app-split}. In this case we have $(C,NP,S)\amred_b(C,N,\contfarg(P).S)$ and we immediately conclude $\build(C,N,\contfarg(P).S) = \build(C,NP,S)$ by the definition of $\build$.
    
    \item Case of \textit{app-shift}. In this case we have $(C,V, \contfarg(P).S) \amred (C, P, \contfapp(V).S)$ and we immediately conclude
    \begin{align*}
        \build(C,V, \contfarg(P).S) &= \build(C,VP, S) \\
        &= \build(C, P, \contfapp(V).S).
    \end{align*}
    
    \item Case of \textit{app-join}. In this case we have $(C,V,\contfapp(\lambda x. N).S) \amred_r (C,N[V/x],S)$. By propositions \ref{stack to context lemma} and \ref{stack to similarities lemma} we know that
    \begin{align*}
        \build(C,V,\contfapp(\lambda x. N).S) &=
        \build(C,(\lambda x. N)V,S)\\
        &=(C,E[(\lambda x. N)V])^{\vec\ell}.X,\\
        \\
        \build(C,N[V/x],S) &= (C,E[N[V/x]])^{\vec\ell}.X.
    \end{align*}
    \emergencystretch=20pt
    Because $(C,(\lambda x. N)V)\red(C,N[V/x])$ by the \textit{app} rule, we get $(C,E[(\lambda x. N)V])\red(C,E[N[V/x]])$ by Theorem \ref{context reduction lemma}. Furthermore, by Corollary \ref{context exclusion lemma redex} we know that $E[(\lambda x. N)V]\not\equiv E'[\boxt_T(\lift P)]$ for any $E',P$ and therefore by the \textit{head} rule we conclude
    $$
    (C,E[(\lambda x. N)V])^{\vec\ell}.X \stred (C,E[N[V/x]])^{\vec\ell}.X.
    $$
    
    \emergencystretch=20pt
    \item Case of \textit{apply-split}. In this case we have $(C,\apply(N,P),S)\amred_b(C,N,\contalabel(P).S)$ and we conclude $\build(C,N,\contalabel(P).S) = \build(C,\apply(N,P),S)$ by the definition of $\build$.
    
    \item Case of \textit{apply-shift}. In this case we have $(C,V, \contalabel(P).S) \amred_b (C, P, \contacirc(V).S)$ and we immediately conclude
    \begin{align*}
        \build(C,V, \contalabel(P).S) &= \build(C,\apply(V, P),S) \\
        &= \build(C, P, \contacirc(V).S).
    \end{align*}
    
    \item Case of \textit{apply-join}. In this case we have $(C,\vec k,\contacirc((\vec\ell, D, \vec{\ell'})).S) \amred_r (C',\vec{k'},S)$, where $(C',\vec{k'})=\append(C,\vec k, \vec\ell, D, \vec{\ell'})$. By propositions \ref{stack to context lemma} and \ref{stack to similarities lemma} we know that
    \begin{align*}
        \build(C,\vec k,\contacirc((\vec\ell, D, \vec{\ell'})).S) &= \build(C,\apply((\vec\ell, D, \vec{\ell'}),\vec k),S)\\
        &=(C,E[\apply((\vec\ell, D, \vec{\ell'}),\vec k)])^{\vec{\ell''}}.X,\\
        \\
        \build(C,\vec{k'},S) &= (C,E[\vec{k'}])^{\vec{\ell''}}.X.
    \end{align*}
    \emergencystretch=20pt
    Because $(C,\apply((\vec\ell, D, \vec{\ell'}),\vec k))\red(C,\vec{k'})$ by the \textit{apply} rule, by Theorem \ref{context reduction lemma} we know that $(C,E[\apply((\vec\ell, D, \vec{\ell'}),\vec k)])\red(C,E[\vec{k'}])$. Furthermore, by Corollary \ref{context exclusion lemma redex} we know that $E[\apply((\vec\ell, D, \vec{\ell'}),\vec k)] \not\equiv E'[\boxt_T(\lift P)]$ for any $E',P$ and therefore by the \textit{head} rule we conclude
    $$
    (C,E[\apply((\vec\ell, D, \vec{\ell'}),\vec k)])^{\vec{\ell''}}.X \stred (C,E[\vec{k'}])^{\vec{\ell''}}.X.
    $$
    
    \item Case of \textit{tuple-split}. In this case we have $(C,\tuple{N,P},S)\amred_b(C,N,\conttright(P).S)$ and we immediately conclude $\build(C,N,\conttright(P).S)=\build(C,\tuple{N,P},S)$ by the definition of $\build$.
    
    \item Case of \textit{tuple-shift}. In this case we have $(C,V, \conttright(P).S) \amred_b (C, P, \conttleft(V).S)$ and we immediately conclude
    \begin{align*}
        \build(C,V, \conttright(P).S) &= \build(C,\tuple{V,P}, S) \\&= \build(C, P, \conttleft(V).S).
    \end{align*}
    
    \item Case of \textit{tuple-join}. In this case we have $(C,W,\conttleft(V).S) \amred_b (C,\tuple{V,W},S)$ and we immediately conclude $\build(C,W,\conttleft(V).S) = \build(C,\tuple{V,W},S)$ by the definition of $\build$.
    
    \item Case of \textit{box-open}. In this case we have $(C,\boxt_T N,S) \amred_b (C,N,\contbox(Q,\vec\ell).S)$, where $(Q,\vec\ell)=\freshlabels(N,T)$. Since we know by the definition of $\freshlabels$ that $\emptyset;Q\vdash \vec\ell:T$, we immediately conclude $\build(C,N,\contbox(Q,\vec\ell).S)=\build(C,\boxt_T N,S)$.
    
    \item Case of \textit{box-sub}. In this case $(C,\lift N, \contbox(Q,\vec\ell).S) \amred_r (id_Q, N\vec\ell, \contsub(C,N,\vec\ell,T).S)$, where $\emptyset;Q\vdash \vec\ell:T$. By propositions \ref{stack to context lemma} and \ref{stack to similarities lemma} we know that
    \begin{align*}
        \build(C,\lift N, \contbox(Q,\vec\ell).S) &= \build(C,\boxt_T(\lift N),S)\\
        &=(C,E[\boxt_T(\lift N)])^{\vec{\ell'}}.X,\\
        \\
        \build(id_Q, N\vec\ell,\contsub(C,N,\vec\ell,T).S) &= (id_Q,N\vec\ell)^{\vec\ell}.\build(C,\boxt_T(\lift N),S)\\
        &= (id_Q,N\vec\ell)^{\vec\ell}.(C,E[\boxt_T(\lift N)])^{\vec{\ell'}}.X,
    \end{align*}
    and by the \textit{step-in} rule we conclude
    $$
    (C,E[\boxt_T(\lift N)])^{\vec{\ell'}}.X \stred (id_Q,N\vec\ell)^{\vec\ell}.(C,E[\boxt_T(\lift N)])^{\vec{\ell'}}.X.
    $$

    \item Case of \textit{box-close}. In this case we have $(D,\vec{\ell'},\contsub(C,N,\vec\ell,T).S) \amred_r (C,(\vec\ell,D,\vec{\ell'}),S)$. By propositions \ref{stack to context lemma} and \ref{stack to similarities lemma} we know that
    \begin{align*}
        \build(D,\vec{\ell'},\contsub(C,N,\vec\ell,T).S) &=
        (D,\vec{\ell'})^{\vec\ell}.\build(C,\boxt_T(\lift N),S)\\
        &=(D,\vec{\ell'})^{\vec\ell}.(C,E[\boxt_T(\lift N)])^{\vec k}.X,\\
        \\
        \build(C,(\vec\ell,D,\vec{\ell'}),S) &=
        (C,E[(\vec\ell,D,\vec{\ell'})])^{\vec k}.X,
    \end{align*}
    and by the \textit{step-out} rule we conclude
    $$
    (D,\vec{\ell'})^{\vec\ell}.(C,E[\boxt_T(\lift N)])^{\vec k}.X \stred (C,E[(\vec\ell,D,\vec{\ell'})])^{\vec k}.X.
    $$
    
    \emergencystretch=20pt
    \item Case of \textit{let-split}. In this case we have $(C,\letin{\tuple{x,y}}{N}{P},S)\amred_b(C,N,\contlet(x,y,P).S)$ and we conclude $\build(C,N,\contlet(x,y,P).S) = \build(C,\letin{\tuple{x,y}}{N}{P},S)$ by the definition of $\build$.
    
    \item Case of \textit{let-join}. In this case we have $(C,\tuple{V,W},\contlet(x,y,N).S) \amred_r (C,N[V/x][W/y],S)$. By propositions \ref{stack to context lemma} and \ref{stack to similarities lemma} we know that
    \begin{align*}
        \build(C,\tuple{V,W},\contlet(x,y,N).S) &= \build(C,\letin{\tuple{x,y}}{\tuple{V,W}}{N},S)\\
        &= (C,E[\letin{\tuple{x,y}}{\tuple{V,W}}{N}])^{\vec\ell}.X,\\
        \\
        \build(C,N[V/x][W/y],S) &= (C,E[N[V/x][W/y]])^{\vec\ell}.X.
    \end{align*}
    Because $(C,\letin{\tuple{x,y}}{\tuple{V,W}}{N}) \red (C,N[V/x][W/y])$ by the \textit{let} rule, we get $(C,E[\letin{\tuple{x,y}}{\tuple{V,W}}{N}]) \red (C,E[N[V/x][W/y]])$ by Theorem \ref{context reduction lemma} and by the \textit{head} rule we conclude
    $$
    (C,E[\letin{\tuple{x,y}}{\tuple{V,W}}{N}])^{\vec\ell}.X \stred (C,E[N[V/x][W/y]])^{\vec\ell}.X.
    $$
    
    \item Case of \textit{force-open}. In this case we have $(C,\force N, S) \amred_b (C, N, \contforce.S)$ and we immediately conclude $\build(C,N,\contforce.S)=\build(C, \force N, S)$ by the definition of $\build$.
    
    \item Case of \textit{force-close}. In this case we have $(C,\lift N, \contforce.S) \amred_r (C, N, S)$. By propositions \ref{stack to context lemma} and \ref{stack to similarities lemma} we know that
    \begin{align*}
        \build(C,\lift N, \contforce.S)&= \build(C,\force (\lift N),S)\\
        &= (C,E[\force(\lift N)])^{\vec\ell}.X,\\
        \\
        \build(C,N,S)&=(C,E[N])^{\vec\ell}.X.
    \end{align*}
    Because $(C,\force(\lift N)) \red (C,N)$ by the \textit{force} rule, we get $(C,E[\force(\lift N)]) \red  (C,E[N])$ by Theorem \ref{context reduction lemma} and by the \textit{head} rule we conclude
    $$
    (C,E[\force(\lift N)])^{\vec\ell}.X \stred (C,E[N])^{\vec\ell}.X.
    $$
\end{itemize}
\end{proofEnd}

\paragraph{}Like we did in Section \ref{from small-step to stacked reductions}, it is useful to represent the result of the lemma that we just proved in a diagrammatic way, as follows:

\begin{center}
    \begin{tikzcd}[column sep=huge]
    (C,E[M])^{\vec\ell}.X \arrow[r, harpoon] & (D,F[N])^{\vec k}.Y\\
    (C,M,S) \arrow[u, dashed, "\frommachine" description]& (D,N,R) \lar[Leftarrow][pos=0]{r} \arrow[u, dashed, "\frommachine" description]
    \end{tikzcd}
    \begin{tikzcd}[column sep=tiny]
    & (C,E[M])^{\vec\ell}.X  &\\
    (C,M,S) \arrow[ur, dashed, "\frommachine" description] && (D,N,R) \arrow[Leftarrow]{ll}[pos=0]{b} \arrow[ul, dashed, "\frommachine" description]
    \end{tikzcd}
\end{center}
Or, more generally and synthetically, as a single diagram:
\begin{center}
    \begin{tikzcd}[column sep=4cm]
    (C,E[M])^{\vec\ell}.X \arrow[harpoon]{r}[pos=1]{*} & (D,F[N])^{\vec k}.Y\\
    (C,M,S) \arrow[Rightarrow]{r}[pos=1]{+} \arrow[u, dashed, "\frommachine" description] & (D,N,R) \arrow[u, dashed, "\frommachine" description]
    \end{tikzcd}
\end{center}

\paragraph{}In order to prove the equivalence between the small-step and machine semantics in the coming Section \ref{subsection equivalence small-step machine}, it is necessary to show that $\frommachine$ preserves the convergence of reachable configurations and that whenever $(C,M,S)$ goes into deadlock then also $\frommachine(C,M,S)$ goes into deadlock. Let us start with deadlock. Informally, a machine configuration goes into deadlock when it evaluates to an irreducible configuration whose stack is not empty. Similarly, a stacked configuration goes into deadlock when it evaluates to an irreducible configuration in which either the stack is not empty, or the term in the head is not a value. Because we know that a computation in the machine semantics can always be simulated by a computation in the stacked semantics, to show that whenever a machine configuration goes into deadlock then the corresponding stacked configuration also goes into deadlock it is sufficient to show that every irreducible machine configuration $(C,M,S)$ such that $S$ is not empty is mapped by $\frommachine$ to an irreducible stacked configuration $(C,E[M])^{\vec\ell}.X$ such that either $E[M]$ is not a value or $X$ is not empty.

\begin{theoremEnd}{lem}\label{machine deadlocked implies stacked deadlocked base}
Suppose $(C,V,S)$ is a machine configuration. If $(C,V,S)$ is irreducible and $S\neq \epsilon$, then $\frommachine(C,V,S)=(C,E[V])^{\vec\ell}.X$ is irreducible and either $E[V]$ is not a value or $\vec\ell \neq \emptyset, X \neq \epsilon$.
\end{theoremEnd}
\begin{proofEnd}
Suppose $S=H.S'$. We proceed by cases on $H$:
\begin{itemize}
    \item Case of $\contfarg(N)$. This case is impossible, since it would allow $(C,V,S)$ to be reduced by the \textit{app-shift} rule, regardless of $V$, contradicting the hypothesis.
    
    \emergencystretch=10pt
    \item Case of $\contfapp(W)$. In this case $W\not\equiv \lambda x. N$, since otherwise $(C,V,\contfapp(\lambda x.N).S')$ would be reducible by the \textit{app-join} rule. We have $\frommachine(C,V,\contfapp(W).S') = \frommachine(C,WV,S') = (C,E[WV])^{\vec k}.X$. This configuration cannot be reduced by the \textit{head} rule (since by Theorem \ref{context reduction lemma} the reducibility of $(C,E[WV])$ would imply the reducibility of $(C,WV)$, which is absurd since $W\not\equiv \lambda x. N$), nor by the \textit{step-in} rule (since $WV$ is a proto-redex and therefore by Proposition \ref{context exclusion theorem proto} $E[WV] \not\equiv F[\boxt_T(\lift P)]$, for any $F,P$), nor by the \textit{step-out} rule (since $E[WV]$ cannot be a value). Since $(C,E[WV])^{\vec k}.X$ is irreducible and $E[WV]$ is not a value, the claim is proven.
    
    \item Case of $\contalabel(N)$. This case is impossible, since it would allow $(C,V,S)$ to be reduced by the \textit{apply-shift} rule, regardless of $V$, contradicting the hypothesis.
    
    \emergencystretch=20pt
    \item Case of $\contacirc(W)$. In this case $W\not\equiv (\vec\ell, D, \vec{\ell'})$, or else $(C,V,\contacirc((\vec\ell, D, \vec{\ell'})).S')$ would be reducible by the \textit{apply-join} rule. We have $\frommachine(C,V,\contacirc(W).S') = \frommachine(C,\apply(W,V),S') = (C,E[\apply(W,V)])^{\vec k}.X$. This configuration cannot be reduced by the \textit{head} rule (since by Theorem \ref{context reduction lemma} the reducibility of $(C,E[\apply(W,V)])$ would imply the reducibility of $(C,\apply(W,V))$, which is absurd since $W\not\equiv (\vec\ell,D,\vec{\ell'})$), nor by the \textit{step-in} rule (since $\apply(W,V)$ is a proto-redex and therefore by Proposition \ref{context exclusion theorem proto} $E[\apply(W,V)] \not\equiv F[\boxt_T(\lift P)]$, for any $F,P$), nor by the \textit{step-out} rule (since $E[\apply(W,V)]$ cannot be a value). Since $(C,E[\apply(W,V)])^{\vec k}.X$ is irreducible and $E[\apply(W,V)]$ is not a value, the claim is proven.
    
    \item Case of $\conttright(N)$. This case is impossible, since it would allow $(C,V,S)$ to be reduced by the \textit{tuple-shift} rule, regardless of $V$, contradicting the hypothesis.
    
    \item Case of $\conttleft(W)$. This case is impossible, since it would allow $(C,V,S)$ to be reduced by the \textit{tuple-join} rule, regardless of $V$, contradicting the hypothesis.
    
    \item Case of $\contbox(Q,\vec\ell)$. In this case $V\not\equiv \lift N$, since otherwise $(C,\lift N,\contbox(Q,\vec\ell).S')$ would be reducible by the \textit{box-sub} rule. We have $\frommachine(C,V,\contbox(Q,\vec\ell).S') = \frommachine(C,\boxt_T V,S') = (C,E[\boxt_T V])^{\vec k}.X$, where $\emptyset;Q\vdash \vec\ell : T$. This configuration cannot be reduced by the \textit{head} rule (since by Theorem \ref{context reduction lemma} the reducibility of $(C,E[\boxt_T V])$ would imply the reducibility of $(C,\boxt_T V)$, which is absurd since $V\not\equiv \lift N$), nor by the \textit{step-in} rule (since $V\not\equiv \lift N$ and $\boxt_T V$ is a proto-redex, so by Proposition \ref{context exclusion theorem proto} $E[\boxt_T V] \not\equiv F[\boxt_T(\lift P)]$ for any other $F,P$), nor by the \textit{step-out} rule (since $E[\boxt_T(\lift N)]$ cannot be a value). Since $(C,E[\boxt_T V])^{\vec k}.X$ is irreducible and $E[\boxt_T V]$ is not a value, the claim is proven.
    
    \item Case of $\contsub(D,N,\vec\ell,T)$. In this case $V\not\equiv \vec{\ell'}$, or else $(C,\vec{\ell'},\contsub(D,N,\vec\ell,T).S')$ would be reducible by the \textit{box-close} rule. We have $\frommachine(C,V,\contsub(D,N,\vec\ell,T).S') = (C,V)^{\vec\ell}.\frommachine(D,\boxt_T(\lift N),S') = (C,V)^{\vec\ell}.(D,E[\boxt_T(\lift N)])^{\vec k}.X$. This configuration cannot be reduced by the \textit{head} rule (since $V$ is a value), nor by the \textit{step-in} rule (since $V$ cannot be of the form $F[\boxt_T(\lift P)]$, for any $F,P$), nor by the \textit{step-out} rule (since $V\not\equiv \vec{\ell'}$). Since $(C,V)^{\vec\ell}.(D,E[\boxt_T(\lift N)])^{\vec k}.X$ is irreducible and $(D,E[\boxt_T(\lift N)])^{\vec k}.X \neq \epsilon$, the claim is proven.
    
    \emergencystretch=35pt
    \item Case of $\contlet(x,y,N)$. In this case we know $V\not\equiv \tuple{V',V''}$, or else $(C,\tuple{V',V''},\contlet(x,y,N).S')$ would be reducible by the \textit{let-join} rule. We have that $\frommachine(C,V,\contlet(x,y,N).S') = \frommachine(C,\letin{\tuple{x,y}}{V}{N},S') = (C,E[\letin{\tuple{x,y}}{V}{N}])^{\vec k}.X$. This configuration cannot be reduced by the \textit{head} rule (since by Theorem \ref{context reduction lemma} the reducibility of $(C,E[\letin{\tuple{x,y}}{V}{N}])$ would imply the reducibility of $(C,\letin{\tuple{x,y}}{V}{N})$, which is absurd since $V\not\equiv \tuple{V',V''}$), nor by the \textit{step-in} rule (since $\letin{\tuple{x,y}}{V}{N}$ is a proto-redex and therefore by Proposition \ref{context exclusion theorem proto} $E[\letin{\tuple{x,y}}{V}{N}] \not\equiv F[\boxt_T(\lift P)]$, for any $F,P$), nor by the \textit{step-out} rule (since $E[\letin{\tuple{x,y}}{V}{N}]$ cannot be a value). Since $(C,E[\letin{\tuple{x,y}}{V}{N}])^{\vec k}.X$ is irreducible and $E[\letin{\tuple{x,y}}{V}{N}]$ is not a value, the claim is proven.
    
    \item Case of $\contforce$. In this case we know $V\not\equiv \lift N$, since otherwise $(C,\lift N,\contforce.S')$ would be reducible by the \textit{force-close} rule. We have $\frommachine(C,V,\contforce.S') = \frommachine(C,\force V,S') = (C,E[\force V])^{\vec k}.X$. This configuration cannot be reduced by the \textit{head} rule (since by Theorem \ref{context reduction lemma} the reducibility of $(C,E[\force V])$ would imply the reducibility of $(C,\force V)$, which is absurd since $V\not\equiv \lift N$), nor by the \textit{step-in} rule (since $\force V$ is a proto-redex and therefore by Proposition \ref{context exclusion theorem proto} $E[\force N] \not\equiv F[\boxt_T(\lift P)]$, for any $F,P$), nor by the \textit{step-out} rule (since $E[\force V]$ cannot be a value). Since $(C,E[\force V])^{\vec k}.X$ is irreducible and $E[\force V]$ is not a value, the claim is proven.
\end{itemize}
\end{proofEnd}

\begin{lem}\label{machine deadlocked implies stacked deadlocked}
\emergencystretch=20pt
Suppose $(C,M,S)$ is a machine configuration. Whenever $(C,M,S)\deadlocks$, we have $\frommachine(C,M,S)\deadlocks$.
\end{lem}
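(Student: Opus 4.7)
The plan is a short induction on the derivation of $(C,M,S)\deadlocks$, with essentially all the heavy lifting already carried out in the two preceding lemmas: Lemma~\ref{machine deadlocked implies stacked deadlocked base} for the base case and Lemma~\ref{machine to stacked correspondence} for the inductive step.

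For the base case, $(C,M,S)$ is irreducible and $S\neq\epsilon$. First I would apply Lemma~\ref{machine irreducible implies M value} to conclude that $M$ must already be a value $V$, so the configuration has the form $(C,V,S)$ with $S\neq\epsilon$. Lemma~\ref{machine deadlocked implies stacked deadlocked base} then delivers exactly what is needed: $\frommachine(C,V,S)=(C,E[V])^{\vec\ell}.X$ is irreducible and either $E[V]$ is not a value or else $\vec\ell\neq\emptyset$ and $X\neq\epsilon$. This is precisely clause~1 of the definition of deadlock for stacked configurations, so $\frommachine(C,M,S)\deadlocks$.

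For the inductive step, we are given $(C,M,S)\amred(D,N,R)$ with $(D,N,R)\deadlocks$, and the inductive hypothesis yields $\frommachine(D,N,R)\deadlocks$. I would then split according to which flavour of reduction was used, as provided by Lemma~\ref{machine to stacked correspondence}: if $(C,M,S)\amred_b(D,N,R)$ then $\frommachine(C,M,S)=\frommachine(D,N,R)$ and the conclusion is immediate; if instead $(C,M,S)\amred_r(D,N,R)$ then $\frommachine(C,M,S)\stred\frommachine(D,N,R)$, and clause~2 of the definition of deadlock for stacked configurations closes the case.

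No step in this induction is genuinely hard, since Lemma~\ref{machine deadlocked implies stacked deadlocked base} and Lemma~\ref{machine to stacked correspondence} together cover every possibility. The most delicate point, already discharged inside those lemmas, is to rule out that an apparently stuck unwinding $(C,E[V])^{\vec\ell}.X$ could secretly reduce via the \textit{step-in} rule by happening to have the form $F[\boxt_T(\lift P)]$; this is ruled out by Proposition~\ref{context exclusion theorem proto} (through Corollary~\ref{context exclusion lemma redex}), since the head redex manufactured by $\frommachine$ from a non-empty stack is always a proto-redex other than $\boxt_T(\lift P)$.
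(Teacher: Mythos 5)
Your proof is correct and follows essentially the same route as the paper's: both reduce the claim to Lemma~\ref{machine deadlocked implies stacked deadlocked base} for the stuck endpoint and Lemma~\ref{machine to stacked correspondence} for simulating the reduction steps. The only cosmetic difference is that you phrase the argument as a direct induction on the derivation of $(C,M,S)\deadlocks$, whereas the paper first characterizes deadlock as reaching an irreducible configuration with non-empty stack and then applies the correspondence lemma finitely many times; the content is identical.
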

\begin{proof}
It is easy to see that a machine configuration $(C,M,S)$ goes into deadlock if and only if $(C,M,S)\amred^*(D,V,R)$ for some irreducible $(D,V,R)$ such that $R\neq \epsilon$. It is also easy to see that a stacked configuration $(C,M)^{\vec\ell}.X$ goes into deadlock if and only if $(C,M)^{\vec\ell}.X \stred^* (D,N)^{\vec k}.Y$ for some irreducible $(D,N)^{\vec k}.Y$ such that either $N$ is not a value or $\vec k\neq \emptyset, Y\neq \epsilon$. By a finite number of applications of Lemma \ref{machine to stacked correspondence} we know that $\frommachine(C,M,S)\stred^*\frommachine(D,V,R)$ and by Lemma \ref{machine deadlocked implies stacked deadlocked base} we know that $\frommachine(D,V,R)=(D,E[V])^{\vec k}.Y$ for some $E,\vec k, Y$ such that either $E[V]$ is not a value or $\vec k\neq\emptyset, Y\neq \epsilon$. We therefore conclude that $\frommachine(C,M,S)\deadlocks$.
\end{proof}

Next, we show that $\frommachine$ preserves the convergence of reachable configurations.

\begin{theoremEnd}[all end]{lem}\label{machine stacked reachability}
\emergencystretch=10pt
If $(C,M,S)$ is a reachable machine configuration, then $\frommachine(C,M,S)$ is a reachable stacked configuration.
\end{theoremEnd}
\begin{proofEnd}
By induction on the reachability of $(C,M,S)$. In the case in which $S=\epsilon$ we have $\frommachine(C,M,\epsilon)=(C,M)^\emptyset.\epsilon \in \initialstacked$, which is reachable. In the case in which there exists $(D,N,S')$ such that $(D,N,S')$ is reachable and $(D,N,S') \amred (C,M,S)$ we proceed by cases on the introduction of $(D,N,S') \amred (C,M,S)$:
\begin{itemize}
    \item Case of \textit{app-split}. In this case we have $(D,NP,S')\amred (D,N,\contfarg(P).S')$. By inductive hypothesis we know that $\frommachine(D,NP,S')$ is reachable, so we immediately conclude that $\frommachine(D,N,\contfarg(P).S') = \frommachine(D,NP,S')$ is reachable.
    
    \item Case of \textit{app-shift}. In this case we have $(D,V,\contfarg(P).S')\amred (D,P,\contfapp(V).S')$. By inductive hypothesis we know that $\frommachine(D,V,\contfarg(P).S')$ is reachable. Also, by the definition of $\frommachine$ we have
    \begin{align*}
        \frommachine(D,V,\contfarg(P).S') &= \frommachine(D,VP,S') \\&= \frommachine(D,P,\contfapp(V).S'),
    \end{align*}
    so we conclude that $\frommachine(D,P,\contfapp(V).S')$ is reachable.
    
    \emergencystretch=25pt
    \item Case of \textit{app-join}. In this case we have that $(D,V,\contfarg(\lambda x.N).S') \amred (D,N[V/x],S')$. By propositions \ref{stack to context lemma} and \ref{stack to similarities lemma} we know that $\frommachine(D,V,\contfarg(\lambda x.N).S')=\frommachine(D,(\lambda x.N)V,S')=(D,E[(\lambda x.N)V])^{\vec\ell}.X, \frommachine(D,N[V/x],S')=(D,E[N[V/x]])^{\vec\ell}.X$ for the same $E,\vec\ell,X$, and by inductive hypothesis we know that $(D,E[(\lambda x.N)V])^{\vec\ell}.X$ is reachable. Because $(D,(\lambda x.N)V) \red (D,N[V/x])$ by the \textit{$\beta$-reduction} rule, by Theorem $\ref{context reduction lemma}$ we get $(D,E[(\lambda x.N)V]) \red (D,E[N[V/x]])$. Also, because $(\lambda x.N)V$ is a redex, we know by Corollary \ref{context exclusion lemma redex} that $E[(\lambda x.N)V] \not\equiv E'[\boxt_T(\lift P)]$ for any $E'$, so we get $(D,E[(\lambda x.N)V])^{\vec\ell}.X \stred (D,E[N[V/x]])^{\vec\ell}.X$ by the \textit{head} rule and conclude that $(D,E[N[V/x]])^{\vec\ell}.X$ is reachable.
    
    \emergencystretch=20pt
    \item Case of \textit{apply-split}. In this case we have $(D,\apply(N,P),S')\amred (D,N,\contalabel(P).S')$. By inductive hypothesis we know that $\frommachine(D,\apply(N,P),S')$ is reachable, so we conclude that $\frommachine(D,N,\contalabel(P).S') = \frommachine(D,\apply(N,P),S')$ is reachable.
    
    \item Case of \textit{apply-shift}. In this case we have $(D,V,\contalabel(P).S')\amred (D,P,\contacirc(V).S')$. By inductive hypothesis we know that $\frommachine(D,V,\contalabel(P).S')$ is reachable. Also, by the definition of $\frommachine$ we have
    \begin{align*}
        \frommachine(D,V,\contalabel(P).S') &= \frommachine(D,\apply(N,P),S') \\&= \frommachine(D,P,\contacirc(V).S'),
    \end{align*}
    so we conclude that $\frommachine(D,P,\contacirc(V).S')$ is reachable.
    
    \item Case of \textit{apply-join}. Case of \textit{app-join}. In this case we have $(D,\vec k,\contacirc((\vec\ell,D',\vec{\ell'})).S') \amred (C,\vec{k'},S')$, where $(C,\vec{k'})=\append(D,\vec k, \vec\ell, D', \vec{\ell'})$. By propositions \ref{stack to context lemma} and \ref{stack to similarities lemma} we know $\frommachine(D,\vec k,\contacirc((\vec\ell,D',\vec{\ell'})).S')=\frommachine(D,\apply((\vec\ell,D',\vec{\ell'}),\vec k),S')=(D,E[\apply((\vec\ell,D',\vec{\ell'}),\vec k)])^{\vec{\ell''}}.X$ and $\frommachine(C,\vec{k'},S')=(C,E[\vec{k'}])^{\vec{\ell''}}.X$, for the same $E,\vec{\ell''},X$, and by inductive hypothesis we know $(D,E[\apply((\vec\ell,D',\vec{\ell'}),\vec k)])^{\vec{\ell''}}.X$ is reachable. Because $(D,\apply((\vec\ell,D',\vec{\ell'}),\vec k)) \red (C,\vec{k'})$ by the \textit{apply} rule, by Theorem $\ref{context reduction lemma}$ we get $(D,E[\apply((\vec\ell,D',\vec{\ell'}),\vec k)]) \red (C,E[\vec{k'}])$. Also, because $\apply((\vec\ell,D',\vec{\ell'}),\vec k)$ is a redex, we know by Corollary \ref{context exclusion lemma redex} that $E[\apply((\vec\ell,D',\vec{\ell'}),\vec k)] \not\equiv E'[\boxt_T(\lift P)]$ for any $E'$, so we get $(D,E[\apply((\vec\ell,D',\vec{\ell'}),\vec k)])^{\vec{\ell''}}.X \stred (C,E[\vec{k'}])^{\vec{\ell''}}.X$ by the \textit{head} rule and conclude that $(C,E[\vec{k'}])^{\vec{\ell''}}.X$ is reachable.
    
    \item Case of \textit{tuple-split}. In this case we have $(D,\tuple{N,P},S')\amred (D,N,\conttright(P).S')$. By inductive hypothesis we know that $\frommachine(D,\tuple{N,P},S')$ is reachable, so we immediately conclude that $\frommachine(D,N,\conttright(P).S') = \frommachine(D,\tuple{N,P},S')$ is reachable.
    
    \item Case of \textit{tuple-shift}. In this case we have $(D,V,\conttright(P).S')\amred (D,P,\conttleft(V).S')$. By inductive hypothesis we know that $\frommachine(D,V,\conttright(P).S')$ is reachable. Also, by the definition of $\frommachine$ we have
    \begin{align*}
        \frommachine(D,V,\conttright(P).S') &= \frommachine(D,\tuple{V,P},S') \\&= \frommachine(D,P,\conttleft(V).S'),
    \end{align*}
    so we conclude that $\frommachine(D,P,\conttleft(V).S')$ is reachable.
    
    \emergencystretch=25pt
    \item Case of \textit{tuple-join}. In this case we have $(D,W,\conttleft(V).S') \amred (D,\tuple{V,W},S')$. By inductive hypothesis we immediately know that $\frommachine(D,W,\conttleft(V).S')=\frommachine(D,\tuple{V,W},S')$ is reachable and the claim is trivially true.
    
    \item Case of \textit{box-open}. In this case we have $(D,\boxt_T N,S')\amred (D,N,\contbox(Q,\vec\ell).S')$, where $(Q,\vec\ell)=\freshlabels(N,T)$. By inductive hypothesis we know that the configuration $\frommachine(D,\boxt_T N,S')$ is reachable, so we immediately conclude that $\frommachine(D,N,\contbox(Q,\vec\ell).S') = \frommachine(D,\boxt_T N,S')$ is reachable.
    
    \emergencystretch=50pt
    \item Case of \textit{box-sub}. In this case $(D,\lift N,\contbox(Q,\vec\ell).S')\amred (id_Q,N\vec\ell,\contsub(D,N,\vec\ell,T).S')$. By inductive hypothesis we know that $\frommachine(D,\lift N,\contbox(Q,\vec\ell).S')=\frommachine(D,\boxt_T(\lift N),S')$ is reachable. We also know that
    \begin{align*}
        &\frommachine(id_Q,N\vec\ell,\contsub(D,N,\vec\ell,T).S')\\ &= (id_Q,N\vec\ell)^{\vec\ell}.\frommachine(D,\boxt_T(\lift N),S').
    \end{align*}
    By Proposition \ref{stack to context lemma} we get $\frommachine(D,\boxt_T(\lift N),S')=(D,E[\boxt_T(\lift N)])^{\vec k}.X$ for some $E, \vec k$ and $X$. Finally, by the \textit{step-in} rule we have $(D,E[\boxt_T(\lift N)])^{\vec\ell}.X \stred (id_Q,N\vec\ell)^{\vec\ell}.(D,E[\boxt_T(\lift N)])^{\vec\ell}.X$, where $(Q,\vec\ell)=\freshlabels(N,T)$, and conclude that the latter is reachable.
    
    \item Case of \textit{box-close}. In this case we have $(D,\vec{\ell'},\contsub(C,N,\vec\ell,T).S') \amred (C,(\vec\ell,D,\vec{\ell'}),S')$. By propositions \ref{stack to context lemma} and \ref{stack to similarities lemma} we know that $\frommachine(D,\vec{\ell'},\contsub(C,N,\vec\ell,T).S')=(D,\vec{\ell'})^{\vec\ell}.\frommachine(C,\boxt_T(\lift N),S')=(D,\vec{\ell'})^{\vec\ell}.(C,E[\boxt_T(\lift N)])^{\vec k}.X$ and $\frommachine(C,(\vec\ell,D,\vec{\ell'}),S')=(C,E[(\vec\ell,D,\vec{\ell'})])^{\vec k}.X$, for the same $E,\vec k, X$, and by inductive hypothesis we know that $(D,\vec{\ell'})^{\vec\ell}.(C,E[\boxt_T(\lift N)])^{\vec k}.X$ is reachable. Because we have $(D,\vec{\ell'})^{\vec\ell}.(C,E[\boxt_T(\lift N)])^{\vec k}.X \stred (C,E[(\vec\ell,D,\vec{\ell'})])^{\vec k}.X$ by the \textit{step-out} rule, we conclude that $(C,E[(\vec\ell,D,\vec{\ell'})])^{\vec k}.X$ is reachable.
    
    \item Case of \textit{let-split}. In this case we have $(D,\letin{\tuple{x,y}}{N}{P},S')\amred (D,N,\contlet(x,y,P).S')$. By inductive hypothesis we know  $\frommachine(D,\letin{\tuple{x,y}}{N}{P},S')$ is reachable, so we conclude that $\frommachine(D,N,\contlet(x,y,P).S') = \frommachine(D,\letin{\tuple{x,y}}{N}{P},S')$ is reachable.
    
    \emergencystretch=20pt
    \item Case of \textit{let-join}. In this case $(D,\tuple{V,W},\contlet(x,y,N).S') \amred (D,N[V/x][W/y],S')$. By propositions \ref{stack to context lemma} and \ref{stack to similarities lemma} we know that $\frommachine(D,\tuple{V,W},\contlet(x,y,N).S')=\frommachine(D,\letin{\tuple{x,y}}{\tuple{V,W}}{N},S')=(D,E[\letin{\tuple{x,y}}{\tuple{V,W}}{N}])^{\vec\ell}.X$ and $\frommachine(D,N[V/x][W/y],S')=(D,E[N[V/x][W/y]])^{\vec\ell}.X$, for the same $E,\vec\ell,X$, and by inductive hypothesis we know that $(D,E[\letin{\tuple{x,y}}{\tuple{V,W}}{N}])^{\vec\ell}.X$ is reachable. Because $(D,\letin{\tuple{x,y}}{\tuple{V,W}}{N}) \red (D,N[V/x][W/y])$ by the \textit{let} rule, by Theorem $\ref{context reduction lemma}$ we get $(D,E[\letin{\tuple{x,y}}{\tuple{V,W}}{N}]) \red (D,E[N[V/x][W/y]])$. Also, because $\letin{\tuple{x,y}}{\tuple{V,W}}{N}$ is a redex, we know by Corollary \ref{context exclusion lemma redex} that $E[\letin{\tuple{x,y}}{\tuple{V,W}}{N}] \not\equiv E'[\boxt_T(\lift P)]$ for any $E'$, so we get $(D,E[\letin{\tuple{x,y}}{\tuple{V,W}}{N}])^{\vec\ell}.X \stred (D,E[N[V/x][W/y]])^{\vec\ell}.X$ by the \textit{head} rule and conclude that $(D,E[N[V/x][W/y]])^{\vec\ell}.X$ is reachable.
    
    \item Case of \textit{force-open}. In this case we have $(D,\force N,S')\amred (D,N,\contforce.S')$. By inductive hypothesis we know that $\frommachine(D,\force N,S')$ is reachable, so we immediately conclude that $\frommachine(D,N,\contforce.S') = \frommachine(D,\force N,S')$ is reachable.
    
    \emergencystretch=35pt
    \item Case of \textit{force-close}. In this case we have $(D,\lift N,\contforce.S') \amred (D,N,S')$. By propositions \ref{stack to context lemma} and \ref{stack to similarities lemma} we know $\frommachine(D,\lift N,\contforce.S')=\frommachine(D,\force(\lift N),S')=(D,E[\force(\lift N)])^{\vec\ell}.X, \frommachine(D,N,S')=(D,E[N])^{\vec\ell}.X$ for the same $E,\vec\ell,X$, and by inductive hypothesis we know that $(D,E[\force(\lift N)])^{\vec\ell}.X$ is reachable. Because $(D,\force(\lift N)) \red (D,N)$ by the \textit{force} rule, by Theorem $\ref{context reduction lemma}$ we get $(D,E[\force(\lift N)]) \red (D,E[N])$. Also, because $\force(\lift N)$ is a redex, we know by Corollary \ref{context exclusion lemma redex} that $E[\force(\lift N)] \not\equiv E'[\boxt_T(\lift P)]$ for any $E'$, so we get $(D,E[\force(\lift N)])^{\vec\ell}.X \stred (D,E[N])^{\vec\ell}.X$ by the \textit{head} rule and conclude that $(D,E[N])^{\vec\ell}.X$ is reachable.
    
\end{itemize}
\end{proofEnd}

\begin{theoremEnd}{prop}\label{preservation of convergence frommachine}
Suppose $(C,M,S)$ is a reachable machine configuration. $(C,M,S)\converges$ if and only if $\frommachine(C,M,S)\converges$.
\end{theoremEnd}
\begin{proofEnd}
We start by proving that if $(C,M,S)\converges$, then $\frommachine(C,M,S)\converges$. We proceed by induction on $(C,M,S)\converges$:
\begin{itemize}
    \item Case $M\equiv V$ and $S=\epsilon$. In this case $\frommachine(C,V,\epsilon)=(C,V)^\emptyset.\epsilon$ and the claim is trivially true.
    
    \emergencystretch=20pt
    \item Case $(C,M,S)\amred (D,N,S')$ and $(D,N,S')\converges$. In this case we know by inductive hypothesis that $\frommachine(D,N,S')\converges$. By Lemma \ref{machine to stacked correspondence} we get that $\frommachine(C,M,S)\stred^* \frommachine(D,N,S')$ and by the definition of converging stacked configuration we conclude $\frommachine(C,M,S)\converges$.
\end{itemize}
We now need to prove that if $\frommachine(C,M,S)\converges$, then $(C,M,S)\converges$. We proceed by induction on $\frommachine(C,M,S)\downarrow$:
\begin{itemize}
    \item Case $\frommachine(C,M,S) = (C,V)^\emptyset.\epsilon$. We know $\frommachine(C,M,S) = (C,E[M])^{\vec\ell}.X$ by Proposition \ref{stack to context lemma}. For $E[M]\equiv V$ to be true we must have $E\equiv [\cdot]$ and $M\equiv V$. We must also have $\vec\ell=\emptyset$ and $ X=\epsilon$. The only way to have $E\equiv[\cdot],\vec\ell=\emptyset$ and $X=\epsilon$ is to have $S=\epsilon$. If $S=H.S'$ were non-empty, we would either have $\vec\ell\neq\emptyset$ and $X\neq \epsilon$ (in case of an $H$ of type $\contsub$) or $E\neq [\cdot]$ (for any other $H$), which would contradict the hypothesis. Therefore we have $M\equiv V$ and $S=\epsilon$ and we conclude $(C,V,\epsilon)\downarrow$.
    
    \item Case $\frommachine(C,M,S) \stred (D,N)^{\vec {k}}.X'$ and $(D,N)^{\vec {k}}.X'\converges$. Let $(C',M',S')$ be the normal form of $(C,M,S)$ with respect to $\amred_b$. We distinguish two cases:
    \begin{itemize}
        \item If $(C',M',S')$ is also normal with respect to $\amred$, then either $(C,M,S)\deadlocks$ or $(C,M,S)\converges$. Because by Lemma \ref{machine deadlocked implies stacked deadlocked} $(C,M,S)\deadlocks$ would imply $\frommachine(C,M,S)\deadlocks$, contradicting Proposition \ref{stacked mutex}, we conclude $(C,M,S)\converges$
        
        \emergencystretch=30pt
        \item If $(C',M',S')$ is not normal with respect to $\amred$, then we have $(C',M',S') \amred_r (D',N',S'')$ and by Lemma \ref{machine to stacked correspondence} we know that $\frommachine(C',M',S') \stred \frommachine(D',N',S'')$. At the same time, by the definition of $\amred_b$ we know that $\frommachine(C',M',S') = \frommachine(C,M,S)$. Because $\stred$ is deterministic, this necessarily entails $\frommachine(D',N',S'') = (D,N)^{\vec {k}}.X'$ and consequently $\frommachine(D',N',S'') \converges$. By inductive hypothesis we get $(D',N',S'') \converges$ and conclude $(C,M,S) \converges$ by the definition of converging machine configuration.
    \end{itemize}
    
\end{itemize}
\end{proofEnd}

\subsection{Equivalence Between Small-step and Machine Semantics}
\label{subsection equivalence small-step machine}

Finally, in this last section we cover the relationship between the small-step semantics and the machine semantics. In particular, as we anticipated, we show that they are essentially equivalent. Concretely, this means that converging computations in the small-step semantics translate to converging computations in the machine semantics (which converge to the same values), that computations that go into deadlock in the small-step semantics translate to computations that go into deadlock in the machine semantics, and -- last but not least -- that diverging computations in the small-step semantics translate to diverging computations in the machine semantics, and vice-versa. In order to prove these results, let us formalize the relationship between small-step and machine configurations through a simple $\loadmachine$ function, defined as such:
$$\loadmachine(C,M) = (C,M,\epsilon).$$
This function is very similar to $\fromsmallstep$. Like $\fromsmallstep$ it is trivially invertible, and as such is establishes a bijection between small-step configurations and the set $\initialmachine$ of initial machine configurations. In addition, $\loadmachine$ has the following property:
$$\frommachine \circ \loadmachine = \fromsmallstep,$$
which will be essential in the coming proofs. The property is actually trivial to prove, as for every small-step configuration $(C,M)$ we have:
$$\frommachine(\loadmachine(C,M))=\frommachine(C,M,\epsilon)=(C,M)^\emptyset.\epsilon = \fromsmallstep(C,M).$$

\subsubsection{Convergence}

As we mentioned at the beginning of this section, we prove that corresponding small-step and machine configurations evaluate to the same circuit and value in the respective semantics by proving that the two computations are simulated by the same computation in the stacked semantics. To this effect, lemmata \ref{small step to stacked correspondence} and \ref{machine to stacked correspondence} are going to play a paramount role. Specifically, we want the diagrams that we introduced in sections \ref{from small-step to stacked reductions} and \ref{machine to stacked reductions} to compose as follows:

\begin{center}
    \begin{tikzcd}[column sep = 4cm]
    (C,M) \arrow{r}[pos=1]{*} \arrow[d, equal, "\fromsmallstep" description] \arrow[dd, bend right = 75, equal, "\loadmachine" description] & (D,V) \arrow[d, equal, "\fromsmallstep" description] \arrow[dd, bend left = 75, equal, "\loadmachine" description] \\
    (C,M)^\emptyset.\epsilon \arrow[harpoon]{r}[pos=1]{*} & (D,V)^\emptyset.\epsilon\\
    (C,M,\epsilon) \arrow[Rightarrow]{r}[pos=1]{*} \arrow[u, equal, "\frommachine" description] & (D,V,\epsilon) \arrow[u, equal, "\frommachine" description]
    \end{tikzcd}
\end{center} 
For simplicity, we prove the two directions of the equivalence separately and then put them together to prove our goal.

\begin{theoremEnd}[normal]{lem}\label{small-step to machine correspondence}
Suppose $(C,M)$ and $(D,V)$ are small-step configurations. If $(C,M)\red^*(D,V)$, then $\loadmachine(C,M) \amred^* \loadmachine(D,V)$.
\end{theoremEnd}
\begin{proofEnd}
The case in which $(C,M)=(D,V)$ is trivially true, so let us consider the case in which $(C,M)\red^+(D,V)$.
First of all, by Lemma \ref{small step to stacked correspondence} we get that $(C,M)^\emptyset.\epsilon \stred^+ (D,V)^\emptyset.\epsilon$.
In parallel, by propositions \ref{preservation of convergence fromsmallstep} and \ref{preservation of convergence frommachine} we have that $\loadmachine(C,M)=(C,M,\epsilon)$ converges, since $\fromsmallstep(C,M) = (C,M)^\emptyset.\epsilon = \frommachine(C,M,\epsilon)$ and $(C,M,\epsilon)$ is trivially reachable. This entails that there exists a normal form $(D',V',\epsilon)$ such that $(C,M,\epsilon) \amred ^* (D',V',\epsilon)$. By Lemma \ref{machine to stacked correspondence} this implies that $\frommachine(C,M,\epsilon) \stred^* \frommachine(D',V',\epsilon)$, or $(C,M)^\emptyset.\epsilon \stred^* (D',V')^\emptyset.\epsilon$. We now have $(C,M)^\emptyset.\epsilon \stred^+ (D,V)^\emptyset.\epsilon$ and $(C,M)^\emptyset.\epsilon \stred^* (D',V')^\emptyset.\epsilon$. Because $\stred$ is deterministic and because $(D,V)^\emptyset.\epsilon$ and $(D',V')^\emptyset.\epsilon$ are both normal forms, we get that $(D,V)^\emptyset.\epsilon = (D',V')^\emptyset.\epsilon$, that is, $D=D'$ and $V=V'$, and we conclude $\loadmachine(C,M)\amred^*(D,V,\epsilon)$.
\end{proofEnd}

\begin{theoremEnd}[normal]{lem}\label{machine to small-step correspondence}
\emergencystretch=20pt
Suppose $(C,M)$ and $(D,V)$ are two small-step configurations. If $\loadmachine(C,M) \amred^* \loadmachine(D,V)$, then $(C,M)\red^*(D,V)$.
\end{theoremEnd}
\begin{proofEnd}
The case in which $\loadmachine(C,M)=\loadmachine(D,V)$ is trivially true, so let us consider the case in which $\loadmachine(C,M)\amred^+ \loadmachine(D,V)$.
By Proposition \ref{preservation of convergence frommachine} we know that  $\frommachine(\loadmachine(C,M)) = \fromsmallstep(C,M)$ converges, and by Proposition \ref{preservation of convergence fromsmallstep} we know that $(C,M)$ converges too. That is, $(C,M)\red^*(D',V')$. By Lemma \ref{small-step to machine correspondence} this entails $\loadmachine(C,M)\amred^*\loadmachine(D',V')$. Since $\loadmachine(D,V)$ and $\loadmachine(D',V')$ are both normal forms, $(D,V)\neq(D',V')$ would contradict the determinism of $\amred$. As a result, we know that $(D,V)=(D',V')$ and conclude $(C,M)\red^*(D,V)$.
\end{proofEnd}

\begin{thm}[Equivalence in Convergence]
\label{equivalence between small-step and machine}
Suppose $(C,M)$ and $(D,V)$ are small-step configurations. We have that $(C,M)\red^*(D,V)$ if and only if $\loadmachine(C,M) \amred^* \loadmachine(D,V)$.
\end{thm}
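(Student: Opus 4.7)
The plan is to observe that this theorem is essentially a corollary of the two lemmata immediately preceding it, namely Lemma \ref{small-step to machine correspondence} and Lemma \ref{machine to small-step correspondence}, which together establish the biconditional. Specifically, the forward direction of the theorem is precisely the content of Lemma \ref{small-step to machine correspondence}, while the backward direction is precisely the content of Lemma \ref{machine to small-step correspondence}. I would therefore split the claim into its two directions and, in each case, simply appeal to the corresponding lemma.

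More conceptually, the proof strategy rests on the diagram displayed at the start of this subsection: both the small-step reduction from $(C,M)$ to $(D,V)$ and the machine reduction from $\loadmachine(C,M)$ to $\loadmachine(D,V)$ are simulated by a single reduction in the stacked semantics between $\fromsmallstep(C,M) = \frommachine(\loadmachine(C,M))$ and $\fromsmallstep(D,V) = \frommachine(\loadmachine(D,V))$. This common ``middle-ground'' computation, together with the determinism of $\stred$ (Proposition \ref{stacked determinism}) and of $\amred$ (Proposition \ref{machine determinism}), forces the endpoints on either side to coincide, which is exactly the argument used in the proofs of the two supporting lemmata.

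Since both lemmata are already established in the preceding development, I do not expect any genuine obstacle here: the proof will consist of a single sentence observing that the claim follows immediately from lemmata \ref{small-step to machine correspondence} and \ref{machine to small-step correspondence}. The only minor care required is to ensure that the statement matches the lemmata exactly, in particular noting that the edge case $(C,M) = (D,V)$ (where both reduction sequences have length zero) is subsumed by the reflexive-transitive closures $\red^*$ and $\amred^*$, and is handled trivially in both directions without any appeal to the lemmata.
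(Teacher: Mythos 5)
Your proposal matches the paper's own proof exactly: the theorem is dispatched in one line as an immediate consequence of Lemma \ref{small-step to machine correspondence} (forward direction) and Lemma \ref{machine to small-step correspondence} (backward direction). Your additional remarks about the stacked-semantics middle ground and the zero-length edge case are accurate but not needed beyond the two-lemma appeal.
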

\begin{proof}
The claim follows immediately from lemmata \ref{small-step to machine correspondence} and \ref{machine to small-step correspondence}.
\end{proof}

\begin{cor}\label{preservation of convergence loadmachine}
Suppose $(C,M)$ is a small-step configuration. We have that $(C,M)\converges$ if and only if $\loadmachine(C,M)\converges$.
\end{cor}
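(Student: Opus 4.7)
The plan is to derive this corollary as an essentially immediate consequence of Theorem \ref{equivalence between small-step and machine}. First I would unfold both convergence predicates into their equivalent existential characterizations: a standard induction on the inductive definitions of $\converges$ gives that $(C,M)\converges$ if and only if there exist $D$ and a value $V$ with $(C,M)\red^*(D,V)$, and similarly $\loadmachine(C,M)\converges$ if and only if there exist $D$ and a value $V$ with $\loadmachine(C,M)\amred^*(D,V,\epsilon)$. Note that the target of the machine-side reduction must be of shape $(D,V,\epsilon)$ precisely because the only base case for machine convergence is a value paired with an empty stack.

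Once these characterizations are in place, I would observe that $(D,V,\epsilon)$ is by definition exactly $\loadmachine(D,V)$. Consequently, the right-hand side of the machine characterization reads $\loadmachine(C,M)\amred^*\loadmachine(D,V)$, which by Theorem \ref{equivalence between small-step and machine} is equivalent to $(C,M)\red^*(D,V)$. Quantifying existentially over $D$ and $V$ on both sides yields the biconditional $(C,M)\converges \iff \loadmachine(C,M)\converges$.

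In short, the corollary is a direct rephrasing of the equivalence theorem once one observes that both notions of convergence are captured by the existence of a reduction sequence ending in a value (with empty stack on the machine side). There is no real obstacle here: the only slightly delicate point is justifying that the machine-side final configuration must have an empty stack, which follows from the base clause of the definition of converging machine configuration and a straightforward induction. No further appeal to the intermediate stacked semantics is needed, since all the heavy lifting has already been carried out in establishing Theorem \ref{equivalence between small-step and machine}.
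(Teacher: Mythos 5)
Your proposal is correct and follows the same route as the paper, which proves this corollary directly from Theorem \ref{equivalence between small-step and machine} together with the definitions of converging small-step and machine configurations; you merely spell out the existential unfolding of the two convergence predicates in more detail. (The paper also notes an alternative route via Propositions \ref{preservation of convergence fromsmallstep} and \ref{preservation of convergence frommachine}, but your argument matches its primary one.)
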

\begin{proof}
The claim follows immediately from Theorem \ref{equivalence between small-step and machine} and the definition for converging small-step and machine configurations. Alternatively, it follows from propositions \ref{preservation of convergence fromsmallstep} and \ref{preservation of convergence frommachine}.
\end{proof}

\subsubsection{Deadlock and Divergence}

The equivalence between the small-step and machine semantics is stronger than the one between the big-step and the small-step semantics, as we now show that whenever a small-step computation goes into deadlock or diverges, then the corresponding machine computation goes into deadlock or diverges, respectively.

\begin{theoremEnd}{lem} \label{small-step deadlock to machine deadlock}
\emergencystretch=20pt
Suppose $(C,M)$ is a small-step configuration. If $(C,M)\deadlocks$, then $\loadmachine(C,M)\deadlocks$.
\end{theoremEnd}
\begin{proofEnd}
\emergencystretch=30pt
First of all, by Lemma \ref{small-step deadlock to stacked deadlock} we know that $\fromsmallstep(C,M)\deadlocks$. Now, suppose $\loadmachine(C,M)\ndeadlocks$. By Proposition \ref{machine totality} we know that either $\loadmachine(C,M)\converges$ or $\loadmachine(C,M)\diverges$. Because $\loadmachine(C,M)=(C,M,\epsilon)$ is trivially reachable, by Proposition \ref{preservation of convergence frommachine} we know that if $\loadmachine(C,M)\converges$ then $\frommachine(\loadmachine(C,M))=\fromsmallstep(C,M)\converges$, which contradicts Proposition \ref{stacked mutex}, so $\loadmachine(C,M)\nconverges$. On the other hand, $\loadmachine(C,M)\diverges$ would entail an infinite computation starting from $\loadmachine(C,M)$. By lemmata \ref{machine to stacked correspondence} and \ref{strong normalization of =>b} this would entail an infinite computation starting from $\frommachine(\loadmachine(C,M))=\fromsmallstep(C,M)$ too, which would contradict $\fromsmallstep(C,M)\deadlocks$, since $\fromsmallstep(C,M)\deadlocks$ implies that the same computation is finite ($\stred$ is deterministic). Because $\loadmachine(C,M)\ndeadlocks$ ultimately leads to a contradiction, we conclude $\loadmachine(C,M)\deadlocks$.
\end{proofEnd}

\begin{theoremEnd}{lem}\label{small-step divergence to machine divergence}
\emergencystretch=20pt
Suppose $(C,M)$ is a small-step configuration. If $(C,M)\diverges$, then $\loadmachine(C,M)\diverges$.
\end{theoremEnd}
\begin{proofEnd}
\emergencystretch=20pt
First of all, because the computation starting from $(C,M)$ is infinite, by Lemma \ref{small step to stacked correspondence} we know that the computation starting from $\fromsmallstep(C,M)$ is also infinite ($\stred$ is deterministic). Now, suppose $\loadmachine(C,M)\ndiverges$. By Proposition \ref{machine totality} we know that either $\loadmachine(C,M)\converges$ or $\loadmachine(C,M)\deadlocks$. Because $\loadmachine(C,M)$ is trivially reachable, by Proposition \ref{preservation of convergence frommachine} we know that if $\loadmachine(C,M)\converges$ then $\frommachine(\loadmachine(C,M))=\fromsmallstep(C,M)\converges$. This contradicts the fact that the computation starting from $\fromsmallstep(C,M)$ is infinite, since $\fromsmallstep(C,M)\converges$ implies that the same computation is finite, so $\loadmachine(C,M)\nconverges$.
On the other hand, if $\loadmachine(C,M) \deadlocks$, by Lemma \ref{machine deadlocked implies stacked deadlocked} we know that $\frommachine(\loadmachine(C,M))=\fromsmallstep(C,M) \deadlocks$. However, this contradicts the fact that the computation starting from $\fromsmallstep(C,M)$ is infinite, since $\fromsmallstep(C,M)\deadlocks$ implies that the same computation is finite, so $\loadmachine(C,M)\ndeadlocks$. Because $\loadmachine(C,M)\ndiverges$ ultimately leads to a contradiction, we conclude $\loadmachine(C,M)\diverges$.  
\end{proofEnd}
\begin{lem}\label{machine deadlock to small-step deadlock}
Suppose $(C,M)$ is a small-step configuration. If $\loadmachine(C,M)\deadlocks$, then $(C,M)\deadlocks$.
\end{lem}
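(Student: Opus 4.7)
The plan is to prove this by contradiction, exploiting the totality and mutual exclusivity of the convergence, deadlock, and divergence relations in both semantics, together with the two equivalence results already established for the other two behaviors.

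First I would invoke Proposition \ref{small-step totality} to observe that $(C,M)$ must fall into exactly one of three cases: $(C,M)\converges$, $(C,M)\deadlocks$, or $(C,M)\diverges$. The goal then reduces to ruling out the first and third options under the hypothesis $\loadmachine(C,M)\deadlocks$.

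To rule out convergence, I would argue that if $(C,M)\converges$ held, then by Corollary \ref{preservation of convergence loadmachine} we would have $\loadmachine(C,M)\converges$, which together with the hypothesis $\loadmachine(C,M)\deadlocks$ would contradict the mutual exclusivity of $\converges$ and $\deadlocks$ on machine configurations, namely Proposition \ref{machine mutex}. Symmetrically, to rule out divergence, I would use Lemma \ref{small-step divergence to machine divergence}: if $(C,M)\diverges$, then $\loadmachine(C,M)\diverges$, again contradicting Proposition \ref{machine mutex} in view of the hypothesis.

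Since both alternatives lead to contradictions, the only remaining possibility is $(C,M)\deadlocks$, which is exactly what we wanted to prove. There is no real obstacle here: all the heavy lifting has already been carried out in the previous lemmata, and the proof is essentially a short diagram chase through the totality and mutex propositions. The argument is completely parallel to the ones used to derive Lemma \ref{small-step deadlock to machine deadlock} and Lemma \ref{small-step divergence to machine divergence}, just with the two non-target behaviors swapped into the role of the cases to be excluded.
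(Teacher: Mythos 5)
Your proposal is correct and follows essentially the same route as the paper's own proof: both argue by exhausting the three possibilities via Proposition~\ref{small-step totality}, then rule out $(C,M)\converges$ using Corollary~\ref{preservation of convergence loadmachine} and $(C,M)\diverges$ using Lemma~\ref{small-step divergence to machine divergence}, each time contradicting Proposition~\ref{machine mutex}. No gaps.
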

\begin{proof}
Suppose $(C,M)\ndeadlocks$. By Proposition \ref{small-step totality} we know that either $(C,M)\converges$ or $(C,M)\diverges$. However, by Corollary \ref{preservation of convergence loadmachine} $(C,M)\converges$ entails $\loadmachine(C,M)\converges$, while by Lemma \ref{small-step divergence to machine divergence} $(C,M)\diverges$ entails $\loadmachine(C,M)\diverges$. Because both these conclusions contradict Proposition \ref{machine mutex}, we conclude that $(C,M)\deadlocks$.
\end{proof}

\begin{lem}\label{machine divergence to small-step divergence}
Suppose $(C,M)$ is a small-step configuration. If $\loadmachine(C,M)\diverges$, then $(C,M)\diverges$.
\end{lem}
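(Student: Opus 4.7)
The plan is to mirror the proof strategy used immediately above for Lemma~\ref{machine deadlock to small-step deadlock}, namely to proceed by contradiction using the totality of the small-step classification together with the preservation results already established in the opposite direction.

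Concretely, I would assume for contradiction that $(C,M)\ndiverges$. By Proposition~\ref{small-step totality}, the three relations $\converges$, $\deadlocks$, $\diverges$ saturate the space of small-step configurations, so one of the other two options must hold: either $(C,M)\converges$ or $(C,M)\deadlocks$. I would then rule out each of these in turn. If $(C,M)\converges$, then by Corollary~\ref{preservation of convergence loadmachine} we get $\loadmachine(C,M)\converges$; but this, combined with the hypothesis $\loadmachine(C,M)\diverges$, contradicts the mutual exclusion of convergence and divergence on machine configurations given by Proposition~\ref{machine mutex}. Symmetrically, if $(C,M)\deadlocks$, then by Lemma~\ref{small-step deadlock to machine deadlock} we obtain $\loadmachine(C,M)\deadlocks$, which again contradicts $\loadmachine(C,M)\diverges$ via Proposition~\ref{machine mutex}. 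Since both alternatives lead to contradictions, we must have $(C,M)\diverges$.

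There is no real obstacle here: the heavy lifting has already been done in the two preservation results (Corollary~\ref{preservation of convergence loadmachine} and Lemma~\ref{small-step deadlock to machine deadlock}) together with the totality and mutual exclusion propositions on both sides. The only thing to be careful about is invoking totality on the small-step side (not the machine side) and mutual exclusion on the machine side (not the small-step side), so that the hypothesis $\loadmachine(C,M)\diverges$ can actually be used to produce the contradictions. This is essentially the dual of the previous lemma's proof, with the roles of $\deadlocks$ and $\diverges$ swapped.
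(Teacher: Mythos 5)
Your proposal is correct and is essentially identical to the paper's own proof: both argue by contradiction from Proposition \ref{small-step totality}, use Corollary \ref{preservation of convergence loadmachine} and Lemma \ref{small-step deadlock to machine deadlock} to push each alternative to the machine side, and then invoke Proposition \ref{machine mutex}. Nothing further is needed.
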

\begin{proof}
Suppose $(C,M)\ndiverges$. By Proposition \ref{small-step totality} we know that either $(C,M)\converges$ or $(C,M)\deadlocks$. However, by Corollary \ref{preservation of convergence loadmachine} $(C,M)\converges$ entails $\loadmachine(C,M)\converges$, while by Lemma \ref{small-step deadlock to machine deadlock} $(C,M)\deadlocks$ entails $\loadmachine(C,M)\deadlocks$. Because both these conclusions contradict Proposition \ref{machine mutex}, we conclude that $(C,M)\diverges$.
\end{proof}

Eventually, the four lemmata can be summarized in the following two theorems, which, together with Theorem \ref{equivalence between small-step and machine}, complete the equivalence between the small-step and machine semantics. Figure \ref{fig:final} in the next page illustrates the full picture of this equivalence.

\begin{thm}[Equivalence in Deadlock]
Suppose $(C,M)$ is a small-step configuration. We have that $(C,M)\deadlocks$ if and only if $\loadmachine(C,M)\deadlocks$.
\end{thm}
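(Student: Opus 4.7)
The plan is to observe that this theorem is an immediate consequence of the two preceding lemmata. Lemma \ref{small-step deadlock to machine deadlock} already gives the forward direction ($(C,M)\deadlocks$ implies $\loadmachine(C,M)\deadlocks$), and Lemma \ref{machine deadlock to small-step deadlock} gives the reverse direction. So I would simply cite both lemmata and conclude. There is no additional machinery to set up, and no case analysis beyond what those lemmata already performed internally.

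It is worth noting that no real obstacle remains at this stage because the hard work is entirely contained in Lemma \ref{small-step deadlock to machine deadlock}, whose proof has to rule out both convergence (via Proposition \ref{preservation of convergence frommachine} applied to the initial-configuration identity $\frommachine\circ\loadmachine=\fromsmallstep$, together with Proposition \ref{stacked mutex}) and divergence (via Lemma \ref{strong normalization of =>b}, which ensures that an infinite $\amred$-sequence would induce an infinite $\stred$-sequence on the image under $\frommachine$, contradicting $\fromsmallstep(C,M)\deadlocks$ obtained from Lemma \ref{small-step deadlock to stacked deadlock}). The converse direction relied symmetrically on totality (Proposition \ref{small-step totality}) and mutual exclusion (Proposition \ref{machine mutex}), after transporting convergence via Corollary \ref{preservation of convergence loadmachine} and divergence via Lemma \ref{small-step divergence to machine divergence}. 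Consequently the present theorem is a one-line corollary, analogous in style to Theorem \ref{equivalence between small-step and machine} which packaged lemmata \ref{small-step to machine correspondence} and \ref{machine to small-step correspondence}.
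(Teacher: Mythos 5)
Your proposal is correct and matches the paper's proof exactly: the theorem is derived by combining Lemma \ref{small-step deadlock to machine deadlock} for the forward direction with Lemma \ref{machine deadlock to small-step deadlock} for the converse. Your supplementary remarks about where the real work lies are accurate but not needed for the proof itself.
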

\begin{proof}
The claim follows immediately from lemmata \ref{small-step deadlock to machine deadlock} and \ref{machine deadlock to small-step deadlock}.
\end{proof}

\begin{thm}[Equivalence in Divergence]
Suppose $(C,M)$ is a small-step configuration. We have that $(C,M)\diverges$ if and only if $\loadmachine(C,M)\diverges$.
\end{thm}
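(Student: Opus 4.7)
The plan is essentially to observe that this final theorem is the symmetric counterpart to the Equivalence in Deadlock result just proved, and that both directions of the biconditional have already been established as separate lemmata earlier in the section. Specifically, Lemma \ref{small-step divergence to machine divergence} gives the forward direction, while Lemma \ref{machine divergence to small-step divergence} gives the converse. So the proof reduces to a one-line appeal to these two lemmata, exactly mirroring the structure of the Equivalence in Deadlock proof.

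It is worth noting where the real work lies, since the theorem itself is a trivial combination. The forward direction (Lemma \ref{small-step divergence to machine divergence}) is the substantive one: it uses the fact that an infinite small-step reduction induces an infinite stacked reduction via Lemma \ref{small step to stacked correspondence}, and then rules out $\loadmachine(C,M)\converges$ (via preservation of convergence, Proposition \ref{preservation of convergence frommachine}) and $\loadmachine(C,M)\deadlocks$ (via Lemma \ref{machine deadlocked implies stacked deadlocked}), concluding by totality on machine configurations (Proposition \ref{machine totality}). The backward direction (Lemma \ref{machine divergence to small-step divergence}) is even cleaner: one assumes for contradiction that $(C,M)$ either converges or deadlocks, and in each case the already-established transfer results (Corollary \ref{preservation of convergence loadmachine} and Lemma \ref{small-step deadlock to machine deadlock}) together with mutual exclusion on machine configurations (Proposition \ref{machine mutex}) yield an immediate contradiction.

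Thus the proof I would write is simply:

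\begin{proof}
The claim follows immediately from lemmata \ref{small-step divergence to machine divergence} and \ref{machine divergence to small-step divergence}.
\end{proof}

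There is no obstacle here worth highlighting; the two lemmata were stated and proved precisely so that this theorem would be their immediate corollary, and together with the preceding Equivalence in Convergence and Equivalence in Deadlock theorems they complete the tri-partite equivalence between $\red$ and $\amred$ on the three behavioural relations $\converges$, $\deadlocks$, and $\diverges$, as illustrated in the figure the author promises next.
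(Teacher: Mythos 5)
Your proof is correct and identical to the paper's: both simply cite Lemma \ref{small-step divergence to machine divergence} for the forward direction and Lemma \ref{machine divergence to small-step divergence} for the converse. Your accompanying summary of where the substantive work lies in those two lemmata also accurately reflects the paper's arguments.
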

\begin{proof}
The claim follows immediately from lemmata \ref{small-step divergence to machine divergence} and \ref{machine divergence to small-step divergence}.
\end{proof}

\newpage
\begin{figure}[H]
    \centering
    \includegraphics[]{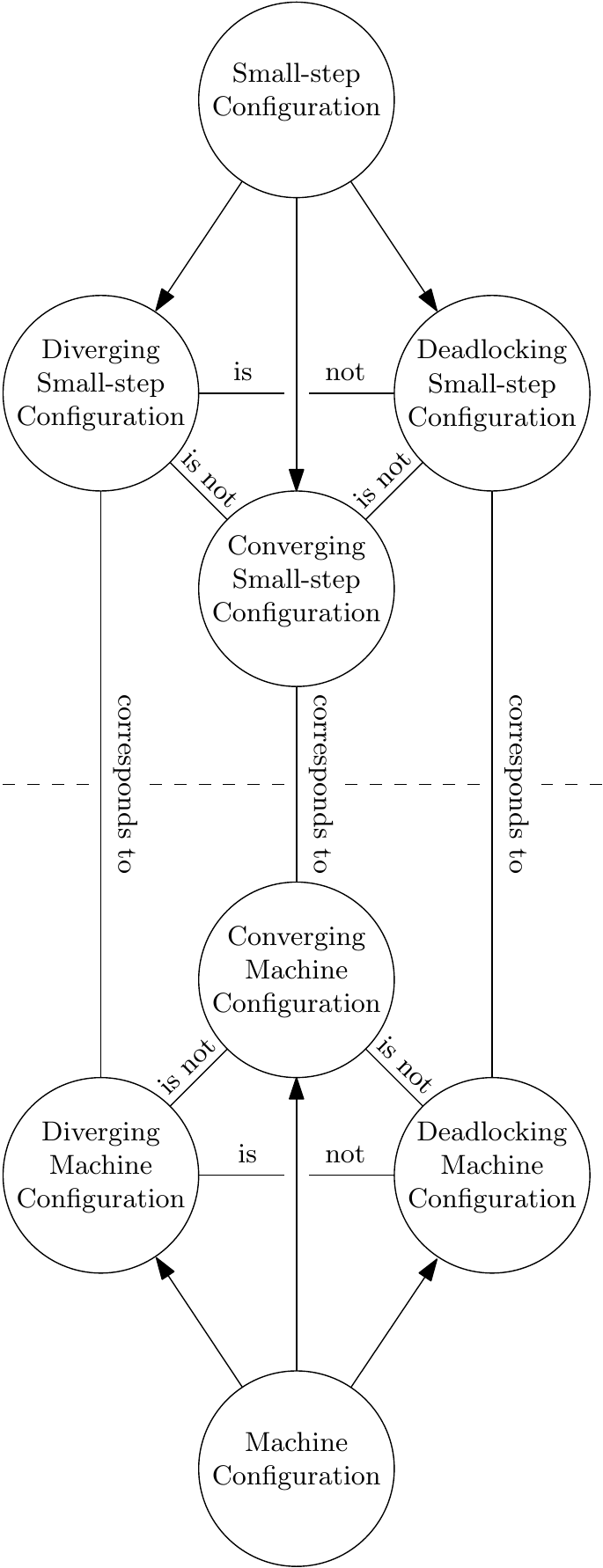}
    \caption{The final relationship between the small-step semantics modelled after Proto-Quipper-M's big-step semantics and the machine semantics proposed in this paper.}
    \label{fig:final}
\end{figure}

\newpage
\section{Conclusions and Future Work}
In the first part of the paper we reviewed Proto-Quipper-M, a member of the Proto-Quipper research language family, which aims at formalizing relevant fragments of Quipper in a type-safe way. We introduced the intuition behind Proto-Quipper-M, its categorical model for quantum circuits and its syntax. We also discussed its linear type system, which allows the enforcement of the no-cloning property of quantum states at compile time, effectively overcoming one of Quipper's greatest weaknesses. Last, but not least, we presented Proto-Quipper-M's big-step operational semantics.

\paragraph{} By rewriting the big-step rules of Proto-Quipper-M into small-step rules, we obtained an equivalent semantics which is small-step save for the case of circuit boxing. We showed that this semantics behaves well with respect to Proto-Quipper-M's type system by proving subject reduction and progress results. In the second part of the paper, we defined a \textit{stacked semantics} for Proto-Quipper-M, which overcomes the aforementioned problems with circuit boxing by organizing all of the sub-reductions introduced by a boxing operation in an explicit stack. We used this semantics as an intermediate step in the definition of a \textit{machine semantics} for Proto-Quipper-M, which takes this approach even further. Inspired by abstract machines such as the CEK machine, this semantics models every phase of the evaluation of a program as a continuation on a stack. Lastly, we concluded by proving that the proposed machine semantics is equivalent to the initial small-step semantics and -- as a consequence -- to the original big-step semantics given by Rios and Selinger.

\subsection{Future Work}
The point of arrival of our work is a minimal abstract machine which accurately models the operational semantics of the Proto-Quipper-M language, and therefore formalizes a fundamental fragment of the behavior of Quipper itself. From here, we can expect most of the future work to be focused on one of two directions.

\paragraph{} The first direction is that of expanding the current machine specification to progressively model a larger and larger portion of Quipper. First and foremost, we have that a considerable number of language features included in the original Proto-Quipper-M specifications by Rios and Selinger have been omitted in our work, for the sake of feasibility. These are not domain specific features, and include things such as sequencing operators, sum types, pattern matching, naturals, lists, and so on. Although not essential for circuit building, these are the features that usually make a programming language practical and, as a consequence, useful. Therefore, it would be appropriate, although unchallenging (and probably tedious) to extend the current machine specification with these programming constructs and to show that this extension does not compromise the results that we gave in this paper. More interestingly, the proposed machine semantics could be used as a starting point to model some of Quipper's most advanced features, which have no counterpart in Proto-Quipper-M. A prime example of such a feature is \textit{dynamic lifting}, which refers to the ability to measure the intermediate state of qubits in the midst of the execution of a circuit and to use the resulting classical information to build the remaining portion of the circuit on the fly.

\paragraph{}The second direction is one that we briefly mentioned in the introduction of the paper, and it is not completely orthogonal to the first one. The research direction in question is the one that focuses on the static analysis of interesting properties of Quipper programs. In this case, our machine semantics could be used as a reference model to define concepts such as the time needed to construct a circuit, or the number of qubits required by it. The estimation of the latter quantity, in particular, would be extremely valuable at a time where quantum resources are still scarce, and it is not trivial to compute, especially if dynamic lifting is involved.

\paragraph{} Lastly, as a side note, the machine itself could be made more concrete than it currently is. As we mentioned in Section \ref{an abstract machine for proto-quipper-m}, when designing our machine we chose to keep relying on an abstract substitution function for reasons of simplicity. Taking further inspiration from the CEK machine, an explicit substitution algorithm could be implemented by endowing our own machine with an environment component and rules to look up variables inside an environment. One major obstacle in this approach is that our environments would contain linear resources, and thus would have to be handled differently from how they are treated in the CEK machine. Note that this concretization operation is not essential, but it would be particularly beneficial to any research focusing -- for example -- on the static estimation of the time requirements of circuit generation.

\emergencystretch=0pt
\printbibliography

\newpage
\section*{Appendix}
\printProofs

\end{document}